\newtheorem{Definition}{Definition}[section]
\newtheorem{Proposition}{Proposition}[section]
\newcommand{\be}{\begin{equation}}
\newcommand{\ee}{\end{equation}}
\newcommand{\ba}{\begin{eqnarray}}
\newcommand{\ea}{\end{eqnarray}}
\title{{\sf 
Observations on representations of the spatial diffeomorphism}\\
 {\sf group and algebra in all dimensions}} 
\author{
{\sf T. Thiemann}$^1$\thanks{{\sf 
thomas.thiemann@gravity.fau.de}}\\
\\
{\sf $^1$ Inst. for Quantum Gravity, FAU Erlangen -- N\"urnberg,}\\
{\sf Staudtstr. 7, 91058 Erlangen, Germany}\\
}
\date{{\small\sf \today}}
\begin{document} 

\maketitle

{\sf

\begin{abstract}
The canonical quantisation of General Relativity including matter 
on a spacetime manifold in the globally 
hyperbolic setting involves in particular the representation theory of the 
spatial diffeomorphism group (SDG), and/or its Lie algebra (SDA),
of the underlying spatial submanifold. 
There are well known Fock representations of the SDA in one spatial 
dimension and non-Fock representations of the SDG in all dimensions.
The latter are not strongly continuous and do not descend to representations 
of the SDA.

In this work we report some partial results on non anomalous 
representations of the SDA for both geometry and matter:
1. Background independent Fock representations of the SDA by operators 
exist in all dimensions.
2. Infinitely many unitary equivalence classes
of background dependent Fock representations of the SDA 
by operators exist in one dimension but these do 
not extend to higher dimensions.
3. Infinitely many unitary equivalence classes
of background dependent Fock representations 
of the SDA  
of volume preserving diffeomorphisms by operators exist in all dimensions. 
4. Infinitely many unitary equivalence classes 
of background dependent Fock representations of the 
SDA by quadratic forms exist in all dimensions.    
 
Except for 1. these representations do not descend from an invariant state 
of the Weyl algebra and 4. points to a new strategy for solving 
the quantum constraints.
\end{abstract}

\section{Introduction}
\label{s1}

To our best knowledge, matter is described by the Quantum Field Theory 
(QFT) of the standard model and geometry by General Relativity (GR). 
When it comes to taking the interaction between geometry and matter into 
account which in the classical regime is dictated by the Einstein equations,
one necessarily must also quantise GR and enter the realm of Quantum Gravity
(GR). The task is now to construct a QFT for both matter and geometry and 
to properly deal with the gauge symmetries of the combined theory. These 
are the Yang-Mills type gauge symmetries which are already present when 
the interaction between geometry and matter can be neglected and the 
additional gauge symmetries that arise because the full theory is now 
generally covariant. These latter gauge symmetries form a group, the 
group of diffeomorphisms (all smooth invertible coordinate transformations)
Diff(M) of the (D+1)-manifold $M$ over which the theory is constructed. 

If one insists 
that the classical theory be predictive, i.e. the initial value problem 
well posed, then $M$ is necessarilly diffeomorphic to 
$\mathbb{R}\times \sigma$ where $\sigma$ is a fixed D-manifold \cite{1}
and the spacetime $(M,g)$ is globally hyperbolic where $g$ is the spacetime
metric field on $M$ of Lorentzian signature. 
This means that $M$ can be foliated by a one parameter family of leaves
$\Sigma_t$, i.e. D-manifolds, each of which is diffeomorphic to $\sigma$. 
Accordingly, there exist diffeomorphisms $F:\; \mathbb{R}\times \sigma
\to M;\; (t,x)\mapsto F(t,x)$ such that $\Sigma_t=F(t,\sigma)$ is a spacelike
hypersurace in $(M,g)$, i.e. the vectors tangent to the leaves are 
spacelike. We call such a foliation therefore ``spacelike''.
 
Given a spacelike foliation $F$, Diff(M) contains the subgroup $Diff_F(M)$ of 
diffeomorphisms that preserve the leaves. These are evidently of the 
form $\Phi(X)=F(t,\varphi(x))_{X=F(t,x)}$ where $\varphi\in$Diff$(\sigma)$
is a diffeomorphism of $\sigma$.
Thus while Diff$_F(M)$ depends of the foliation, Diff$(\sigma)$ is a universal 
object, i.e. independent of the foliation. 
We call this the spatial diffeomorphism group (SDG) $\mathfrak{G}$ 
and its Lie algebra 
the spatial diffeomorphism algebra (SDA) $\mathfrak{a}$. Here the Lie algebra 
is identified with the tangent space of $\mathfrak{G}$ at the identity and 
thus its generators are obtained by differentiation at $s=0$ of 1-parameter 
subgroups $s\mapsto \varphi_s,\; \varphi_r\circ\varphi_s=\varphi_{r+s},\;
\varphi_0={\rm id}_{\mathfrak{G}}={\rm id}_\sigma$. These 1-parameter
subgroups $\varphi^u_t$ are in 1-1 correspondence with 
the integral curves $c^u_x(s)$ through $x\in \sigma$ of vector fields 
$u$ on $\sigma$ defined by 
$c^u_x(0)=x,\;\frac{d}{ds} c^u_x(s)=u(c^u_x(s))$ via 
$\varphi^u_s(x)=c^u_x(s)$ \cite{2}. Thus the Lie algebra 
$\mathfrak{g}=$diff($\sigma$) is isomorphic with the Lie algebra of vector 
fields on $\sigma$.

The remaining diffeomorphisms could therefore be called ``temporal'' as they 
do not preserve the leaves and thus map points $X$ in ``one instant of time'' 
$\Sigma_t$ generically to a point $X'$ in a different instant of time 
$\Sigma_{t'},\;t'\not=t$.

The globally hyperbolic structure delivers a natural starting point for a 
canonical
approach to quantisation of the entire system. One considers arbitrary
spacelike  
foliations $F$ and pulls back all fields by $F$ upon which they become 
fields on the universal manifold $\mathbb{R}\times \sigma$. The trace 
that the auxiliary structure $F$ leaves on the system is that the Legendre 
transform between the Lagrangian and the Hamiltonian is singular, after all
the action (integral over $M$ of the Lagrangian) is independent of $F$.
This has the effect that the Hamiltonian formulation is subject to initial 
value constraints. These are the so-called spatial diffeomorphism constraints
(SDC) and Hamiltonian constraints (HC). Their Poisson bracket algebra is 
isomorphic to the algebra generated by spatial and temporal spacetime 
diffeomorphisms mentioned above when the fields are ``on shell'', i.e. solve
the classical Einstein equations \cite{3}.

However, ``off-shell'' the Poisson algebra
in contrast to the algebra of diffeomorphisms, is no longer a Lie algebra
but rather an algebroid \cite{4}. That is, the Poisson bracket of constraints
is a linear combination of constraints but the coefficients of that linear 
combination are not structure constants but rather structure functions of
the fields. This happens because a general diffeomorphism in Diff(M) maps 
a spacelike foliation $F$ to another foliation but not necessarily 
a spacelike one while the transformations generated by the constraints via
Poisson brackets never leave the realm of spacelike foliations \cite{5}.
The algebraic structure of this algebroid $\mathfrak{h}$
is again universal, that is, theory 
and in particular matter content independent \cite{6} and has a purely 
geometric origin, the so-called spacelike hypersurface deformations 
which is why $\mathfrak{h}$ is denoted as hypersurface deformation 
algebroid (HDA). While $\mathfrak{h}$ is not Lie algebra in a natural way,
we may nevertheless compute all multiple Poisson brackes among its 
generators and identify algebraically independent generators 
in the set of objects so 
obtained. Their exponentiation is then a group $\mathfrak{H}$, the hypersurface 
deformation group (HDG) or Bergmann-Komar group \cite{7}. 
Very little is known about this group but again, one can 
at most expect that Diff(M) and $\mathfrak{H}$ are isomorphic at most on shell.

Thus, the classical initial value formulation of the standard model
coupled to GR automatically leads to a representation of $\mathfrak{h}$ and 
not of diff($\mathbb{R}\times \sigma$), the Lie algebra of 
Diff($\mathbb{R}\times \sigma$), given by the Poisson algebroid 
of constraints. The algebroid $\mathfrak{h}$ however contains a true 
Lie subalgebra which is isomorphic to diff($\sigma$) the Lie algebra of 
Diff($\sigma$). This Lie algebra is generated by the SDC. The SDC take 
a universal form while the HC depend on the Lagrangian that one started from. 
The SDC are much easier to quantise than the HC. The latter describe a 
complicated mixture of pure gauge (i.e. unobservable) 
transformations and physical (i.e. observable) motion \cite{8}          
while the SDC describe pure gauge transformations 
corresponding to the freedom in the choice of 
coordinates of $\sigma$. Thus a thorough understanding of the QFT of 
$\mathfrak{h}$ is an important step towards constructing a theory of 
quantum gravity.

In this paper we focus on the quantum representation theory of the SDC which 
is classically isomorphic to diff($\sigma$)$=\mathfrak{g}$. For attempts 
to rigorously represent the full quantum algebroid $\mathfrak{h}$ see 
\cite{9}. It is well known that for the case of $\sigma=S^1$ or 
$\sigma=[0,1]$ there exist Fock representations of $\mathfrak{g}$, albeit
with an anomaly. That anomaly however can however be absorbed in a central
extension of diff($\sigma$) called the Virasoro algebra \cite{10}. 
Surprisingly, already much less is known in the non-compact case or higher 
dimensions \cite{11}. By contrast, less well known are the unitary, non-anomalous 
representations of 
$\mathfrak{G}$=Diff($\sigma$) in any dimension and for both geometry and 
matter \cite{12,13}. These are however not strongly continuous with 
respect to one parameter subgroups as the representation is of 
Narnhofer-Thirring type rather than Fock type and thus does not provide 
a representation of $\mathfrak{g}$ by differentiation. The discontinuity 
of the representation is an obstacle to many standard constructions 
and leads in particular to non separable Hilbert spaces.

In this paper we thus explore the possibility of represenations of 
$\mathfrak{g}$ or strongly continuous representations of $\mathfrak{G}$
in any dimension. In particular, we wish the understand the possibility or 
not of Fock representations thereof.\\
\\
The architecture of this paper is as follows:\\
\\

In section \ref{s2} we introduce the notation and define the classical 
phase space and the universal generators of the Poisson subalgebra generated 
by the SDC. This is done for any D and $\sigma$ and for both bosonic and 
fermionic matter. The structure of the SDC assumes the form 
$C(u)=\sum_I\; C_I(u)$ where $I$ labels the field species, $u$ 
is a vector field and $\{C_I(u),C_J(v)\}=-\delta_{IJ}\;C_I([u,v])$.

In section \ref{s3} we consider general representations of the canonical 
(anti) commutation relations and associated representations of $\mathfrak{G}$
and give an overview over the various possibilities. In particular we discuss
representations based on $\mathfrak{G}$ 
quasi-invariant measures. If the measure is Gaussian, 
these represtations are isomorphic to Fock representations and the question 
boils down to asking whether Gaussian quasi-invariant measures exist.

In section \ref{s4} we show that a partial 
solution to the 
Fock representation problem consists in performing, prior to 
quantisation, a canonical transformation 
of the classical phase space which accomplishes to turn all matter fields into 
scalar or spinor densities of weight 1/2 while the geometry fields keep 
their natural tensor structure and weight. Given suitable 
matter content, by tyhe same method even a complete solution also 
for the geometry fields can be obtained.
Then all the matter $C_I(u)$ 
for the matter indices $I$ (and under the the above assumtion even 
for geometry) represent $\mathfrak{g}$ without anomaly 
in any dimension. For the geometry part, in the absence 
of suitable matter, one cannot proceed like this because 
turning the metric into a scalar using canonical transformations on 
its own phase space would yield a trivial theory. However, in this 
case for the 
geometry part one could use a Narnhofer Thirring type 
representation as mentioned 
above.

In section \ref{s5} we consider general Fock representations 
where all fields keep their natural density weight and try to 
define a representation of $\mathfrak{g}$ in it 
(possibly with an anomaly and an associated 
central extension) using a quantisation of the $C_I(u)$. 
Here we find an obstruction: For D$>1$ such 
representation does not exist while for D$=1$ it does and yields, among many 
others, a 
representation of the well known Virasoro central extension. The reason for 
why D=1 is special is due to the fact in $D=1$ the direction of a 
unit co-vector is encoded in a sign, i.e. a zero dimensional 
sphere, while in higher dimensions it requires a non-zero dimensional 
sphere. It is that sign which in $D=1$ leads to additional cancellation 
effects when computing the norm of the action of generators of the 
the SDA on Fock states. 
 
In the same section 
we show that the obstruction disappears if we fix a 
volume element $\nu$ and ask for a representation of the subalgebra of the 
$C_I(u)$ where $u$ is restricted to generate only   
volume preserving diffeomorphisms, i.e. div$_\nu(u)=0$. Then one would
have a partial solution to the problem in the form of gauge fixing part of 
the SDA. This observation leads to a natural split of the generators 
$C_I(u)$ into scaling transformations (volume non preserving) and volume 
preserving ones 
and the obstruction observed before lies entirely in the scaling part.

In section \ref{s6} we explore the possibility of using methods from 
constructive QFT (CQFT) in order to define a new representation of 
the CCR aor CAR and thereby of $\mathfrak{g}$ using a dressing 
transformation accompanied with infinite renormalisation which lead to 
a successful construction of $\Phi^4_3$ theory \cite{15}. We start 
this programme choosing a squeezed state Ansatz which implements 
some of the desired properties but not all. This part needs therefore 
a substantial extension of ideas. 

In section \ref{s7} we confine ourselves to the question whether the 
exponentiated dilation operators can at least be constructed as densely 
defined quadratic forms in Fock representations (that their generators 
do is trivial using normal ordering). The answer is in the affirmative but 
requires a non trivial central renormalisation of the dilation generator 
using the counter term method.

In section \ref{s8} we change to a new viewpoint. The difficulties 
in trying to implement the generators of the 
SDA as {\it operators} that we encountered 
in the previous sections can be circumvented if one merely implements 
them as {\it quadratic forms}. We show that in order to find solutions 
to the constraint equations it is entirely sufficient to define 
the SDA generators as quadratic forms and that, very surprisingly, while 
quadratic forms cannot be multiplied there exists a mathematically well
defined notion of the SDA in terms of quadratic forms. We then show that  
with this weakened viewpoint {\it any Fock representation} of the Weyl algebra 
supports an {\it anomaly free representation} of the SDA. This is
quite surprising in two aspects: First, since Fock representations 
are heavily background dependent, the usual intuition is that they are 
not suitable to construct representations of the SDA. We draw an analogy 
to the phenomenon of spontaneous symmetry breaking to eplain why this 
works. Secondly, within this weakened notion of representation, there is 
definitely no uniqueness result concerning the choice of the Fock 
representation which demonstrantes that insistence on having a 
diffeomorphism invariant state on the Weyl algebra is much more stringent
(none of the Fock states for which our result holds is invariant) \cite{32}.
We also make some comments about the structure of the solutions to the 
constraint equations.  

In section \ref{s9} we summarise our findings and conclude with a list 
of implications of this work and directions for future research building on
it.

In the appendix \ref{sa} we collect some tools concerning the normal 
ordering of exponentials of operators bilinear in creation and 
annihilation operators which we need in sections \ref{s6}, \ref{s8}.

\section{Classical phase space and SDC}
\label{s2}
             
This section is just to recall some well known facts from the classical 
formulation, see e.g. \cite{3,13} and references therein for derivations.
This gives us alos the opportunity to fix the notation. \\
\\
We denote spatial tensor indices by $a,b,c,..=1,..,D$ and coordinates on 
$\sigma$ by $x^a$. Tensor density fields of type $M\in \mathbb{N}_0,
N\in \mathbb{N}_0,w\in \mathbb{R}$ are denoted by 
$T^{a_1 .. a_M}\;_{b_1..b_N}$. They transform under spatial diffeomorphisms 
by the pull-back formula $T\mapsto \varphi^\ast T$ where
\be \label{2.1}
[\varphi^\ast \;T]^{a_1..a_A}_{b_1 .. b_B}(x)
=|\det(J(x))|^w\;
\prod_{k=1}^A\;[J^{-1}]^{a_k}_{c_k}(x)\; 
\prod_{l=1}^B\;J^{d_l}_{b_l}(x)\;
T^{c_1..c_A}_{d_1 .. d_B}(\varphi(x)) 
\ee
where $J^a_b(x):=\frac{\partial \varphi^a}{\partial x^b}(x)$ is the Jacobi 
matrix. 

We will require that $\mathbb{R}\times \sigma$ carries a spin structure 
\cite{2}.
A spinor field will be denoted by $\rho_A$ or $\eta^A$ where $A$
is an irreducible 
representation label for the universal covering group of SO(D) under 
consideration. One may wonder why we consider SO(D) instead of SO(1,D).
This is because in the canonical treatment the Lorentz group SO(1,D) 
becomes a local gauge group and one gauge fixes the boost part of that 
group using the time gauge \cite{13}. The residual gauge group is then 
the local rotation group SO(D).
For instance in $D=3$ the covering group is SU(2) and the 
indices $A,B,C,..=1,2$. In general the index range is $1,..,2^{[D/2]}$ where 
$[.]$ is the Gauss bracket. Spinor fields transform naturally in the 
representation $(0,0,1/2)$ with respect to Diff($\sigma$)=$\mathfrak{G}$ 
i.e. as if they were scalars of density weight 1/2 \cite{13}. This 
is because the diffeomorphism group does not possess multi-valued 
representations \cite{16} so that the flat space Lorentz transformations 
must be treated as ``internal'' local gauge transformations of Yang-Mills
type. 

In that respect the metric field $q_{ab}$ of type $(0,2,0)$ plays a 
distinguished role: The presence of spinor fields forces us to introduce 
D-Bein fields $e_a^j\; j=1,..,D$ that transform in the defining representation
of SO(D) with respect to the index $j,k,l,..=1,..,D$. It is related to 
the D-metric on $\sigma$ of Euclidian signature 
by $q_{ab}=\delta_{jk}\; e^j_a\; e^k_b$. Using the D-Bein $e$ and its 
spin connection $\Gamma$ one can construct SO(D) covariant derivatives for 
spinor fields on which the SDC and the HC depend. The basic field content 
of GR and standard matter, wrt to their transformation behaviour under 
spatial diffeomorphisms, in the Hamiltonian formulation then consist 
in the following canonical pairs:\\
1. D-Bein and conjugate momentum $(e_a^j,\; P^a_j)$.\\
2. scalar field (e.g. Higgs) and conjugate momentum $(\phi,\pi)$.\\
3. vector fields (Yang Mills potentials) and conjugate momentum 
$(A_a^\alpha,E^a_\alpha)$. Here the index $\alpha$ corresponds to the adjoint 
representation of the Lie algebra of the corresponding compact 
Yang-Mills gauge group.\\
4. spinor fields and conjugate momentum $(\rho_A,\;\eta^A)$.\\
The fields $e,\phi,A$ carry density weight zero, the fields 
$P,\pi,E$ carry density weight unity and $\psi,\xi$ both carry density weight
1/2. In principle one can extend this list arbitrarily by higher rank tensor 
fields or Rarita-Schwinger fields as they appear in supersymmetric theories.
The developments that follow can easily be extended to such fields. 

The canonical conjugacy means that we have the following non-vanishing 
Poisson (anti-)brackets \cite{17} 
\ba \label{2.2}
&& \{P^a_j(x),e_b^k(y)\}=\delta^a_b\;\delta_j^k\;\delta(x,y),\;\;
\{\pi(x),\phi(y)\}=\delta(x,y),\;\;
\nonumber\\
&& 
\{E^a_\alpha(x),A_b^\beta(y)\}=\delta^a_b\;\delta_\alpha^\beta\;\delta(x,y),\;\;
\{\eta^A(x),\rho_B(y)\}=\delta^A_B\;\delta(x,y)
\ea
and the following $\ast-$relations: The fields $e,P,\phi,\pi,A,E$ are 
self-conjugate while $(\eta^A)=i\delta^{AB}\;(\rho_B)^\ast$. 
Note that $\rho,\eta$ are 
Grassmann valued so that the last bracket is symmetric under exchange 
of entries while the others are anti-symmetric.

The respective contributions to the smeared SDC
\be \label{2.3}
C(u)=\int_\sigma\; d^Dx\; u^a\; C_a,\;C_a=C_a^M+C_a^S+C_a^V+C_a^F
\ee
where the superscript stands for ``Metric'', ``Scalar'', ``Vector boson'' and 
``Fermion'' are respectively with $(.)_{,a}:=\frac{\partial}{\partial x^a} (.)$ 
are      
\ba \label{2.4}
C_a^M &=& P^b_j\;e^j_{b,a}-(P^b_j e^j_a)_{,b}
\nonumber\\
C_a^S &=& \pi\; \phi_{,a}
\nonumber\\
C_a^M &=& E^b_\alpha\;A^\alpha_{b,a}-(E^b_\alpha A^\alpha_a)_{,b}
\nonumber\\
C_a^F &=& -\frac{1}{2}[\eta^A\; \rho_{A,a}-\eta^A_{,a}\; \rho_A]
\ea
Comparing the first and third line we see that the gravity contribution 
is the same as for the gauge boson contribution for the gauge group SO(D) 
(local frame rotations). Moreover, in view of (\ref{2.2})
the contributions for different values of
$j,\alpha,A$ decouple so that as far as representations of $\mathfrak{G}$ are 
concerned we can focus on a single vector pair $(p^a,q_a)$ and a single 
Grassman pair $\rho,\eta$.

It is not difficult to check that with 
\be \label{2.5}
C^V_a=p^b\;q_{b,a}-(p^b q_a)_{,b}\;
C^G_a=-\frac{1}{2}[\eta\; \rho_{,a}-\eta_{,a}\; \rho]
\ee
with $\eta=i\rho^\ast$ 
and using (\ref{2.2}) we have the following non-vanishing brackets 
\be \label{2.6}
\{C^S(u),C^S(v)\}=-C^S([u,v]),\;\;
\{C^V(u),C^V(v)\}=-C^V([u,v]),\;\;
\{C^G(u),C^G(v)\}=-C^G([u,v]),\;\;
\ee
where 
\be \label{2.7}
[u,v]^a=u^b\; v^a_{,b}-v^b\; u^a_{,b}
\ee
In case that $\sigma$ has a boundary, we assume that suitable boundary 
conditions are imposed on all fields and the vector fields $u,v$ so that 
(\ref{2.6}) holds (e.g. $u,v$ could be of compact support). 

The relations (\ref{2.6}) establish that each contribution by itself is 
a Poisson bracket representation of the SDA. In particular, the SDC generate 
the following canonical transformation on the fields via the Poisson bracket
\ba \label{2.8}
&& \{C(u),\phi\}=u^a\;\;\phi_{,a},\;  
\{C(u),\pi\}=[u^a\;\;\pi]_{,a},\;
\{C(u),q_a\}=u^b q_{a,b}+u^b_{,a} \;q_b,\;
\nonumber\\
&&\{C(u),p^a\}=[u^b p^a]_{,b}-u^a_{,b} \;p^b,\;
\{C(u),\rho\}=u^a\; \rho_{,a}+\frac{1}{2}\; u^a_{,a}\;\rho,\;  
\{C(u),\eta\}=u^a\; \eta_{,a}+\frac{1}{2}\; u^a_{,a}\;\eta,\;  
\ea
which shows that these canonical transformations exactly match the 
geometrical tensor and weight structure of the fields mentioned above.

\section{Representations of the CCR, CAR, $\mathfrak{g}$ and $\mathfrak{G}$}
\label{s3}

We want to represent the CCR or CAR following from (\ref{2.2}) and the $\ast$
relations on a Hilbert space $\cal H$. Thus we ask that the fields be 
promoted to operator valued distributions satisfying the following 
non-vanishing canonical (anti-)commutation relations  
\be \label{3.1}
[\pi(x),\phi(y)]=i\delta(x,y),\;
[p^a(x),q_b(y)]=i\delta^a_b\;\delta(x,y),\;
[\eta(x),\rho(y)]_+=i\delta(x,y)\;\;\Leftrightarrow\;\;
[\rho^\ast(x),\rho(y)]_+=\delta(x,y)
\ee
Here $[A,B]=AB-BA,\;[A,B]_+=AB+BA$ denote commutator and anti-commutator 
respectively. Moreover we require the $\ast-$relations 
\be \label{3.2}
\phi^\ast-\phi=
\pi^\ast-\pi
=q_a^\ast-q_a=[p^a]^\ast-p^a=0
\ee
There are no $\ast$ relations on $\rho$ because the 
momentum $\eta=i\rho^\ast$ 
of $\rho$ is proportional to its conjugate. We can turn the operator valued 
distributions into operators by smearing them with test functions 
$f,g,F,G,\zeta$
\ba \label{3.3}
\phi(f) &=& \int_\sigma\; d^D x\; f(x)\;\phi(x),\; 
\pi(g)=\int_\sigma\; d^D x\; g(x)\;\pi(x),\; 
q(F)=\int_\sigma\; d^D x\; f^a(x)\;q_a(x),\; 
\nonumber\\
p(G) &=& \int_\sigma\; d^D x\; g_a(x)\;p^a(x),\; 
\rho(\zeta)=\int_\sigma\; d^D x\; \zeta(x)^\ast\;rho(x),\; 
\ea
where $f,g,F,G$ are real valued while $\zeta$ is complex valued. It follows 
for any $\psi\in {\cal H}$
\be \label{3.4}
<\psi,\;[\rho(\zeta),[\rho(\zeta)]^\ast]_+\;\psi>_{{\cal H}}
=||[\rho(\zeta)]^\ast\; \psi||_{{\cal H}}^2+
||\rho(\zeta)\; \psi||_{{\cal H}}^2=||\zeta||_{L_2}^2 \; 
||\psi||_{{\cal H}}^2
\ee
where
$||\zeta||_{L_2}:=\int_\sigma\; d^Dx\; |\zeta(x)|^2$.
This means that the fermionic 
operators $\rho(\zeta),\rho(\zeta)^\ast$ are automatically 
bounded in any representation.

The remaining bosonic operators are not automatically bounded. 
We pass to the corresponding Weyl elements 
\be \label{3.5}
W(f,g)=\exp(i[\phi(f)+\pi(g)]),\; 
W(F,G)=\exp(i[q(F)+p(G)])
\ee
which satisfy the usual Weyl relations 
\be \label{3.6}
W(f,g)\; W(f',g')=W(f+f',g+g')\;\exp(-\frac{i}{2}[
<g,f'>_{L_2}-<g',f>_{L_2}]),\;\;
W(f,g)^\ast=W(-f,-g)
\ee
and similar for $W(F,G)$. Thus the Weyl elements are automatically unitary 
in any representation and thus bounded.

Thus we look for $\ast-$representations $r$ of the above abstract algebra 
$\mathfrak{A}$ 
as bounded operators on a Hilbert space $\cal H$. Without loss of generality
we can consider a cyclic representation (i.e. $r(\mathfrak{A})\Omega$ is dense 
in ${\cal H}$) based on a cyclic vector 
$\Omega\in {\cal H}$ as every (non-degenerate i.e. 
$r(\mathfrak{A}){\cal H}$ is dense in $\cal H$ which is automatic for 
unital $\mathfrak{A}$ for which $r(1)={\rm id}_{{\cal H}}$) 
representation may be decomposed into cyclic ones \cite{18} and the CCR and 
CAR algebras have no degenerate representations because they are unital 
(i.e. contain a unit element which is represented as the identity operator).
Cyclic representations are in 1-1 correspondence with states $\omega$ 
(positive, linear, normalised functionals) on 
$\mathfrak{A}$ via the GNS construction, i.e. 
$<\Omega,r(a)\Omega>_{{\cal H}}=\omega(a)$. 
If $\cal H$ is separable or has
an ONB labelled by $\mathbb{R}$ then there exists an Abelian 
sub$^\ast-$algebra $\mathfrak{B}$ of the $C^\ast-$algebra 
${\cal B}({\cal H})$ of bounded 
operators on $\cal H$ for which $\Omega$ is still cyclic \cite{19}.
If the GNS represengtation is surjective as a map 
$r:\;\mathfrak{A}\to {\cal B}({\cal H})$ (a $C^\ast-$algebra is uniformly
closed but $r(\mathfrak{A})$ could be a proper $C^\ast-$subalgebra
of ${\cal B}({\cal H})$) then 
for each $B\in \mathfrak{B}$ there exists $a_B\in \mathfrak{A}$
such that $B=r(a_B)$. Then $\mathfrak{A}_{\mathfrak{B}}=\{a_B:\;
B\in  \mathfrak{B}\}$ is an Abelian $C^\ast-$subalgebra of 
$\mathfrak{A}$ if $r$ is faithful (i.e. $r(a)=0$ implies $a=0$):  
Since $[B,B']=r([a_B,a_{B'}])=0$ and $B\mapsto a_B$ is a 
$^\ast-$representation 
of $\mathfrak{B}$ in $\mathfrak{A}$ as 
$B^\dagger=r(a_{B^\dagger})=r(a_B)^\dagger=r(a_{B}^\ast)$ and 
$B B'=r(a_{BB'})=r(a_B a_{B'})$. Then $\mathfrak{A}_{\mathfrak{B}}$ is 
generated from a set of Abelian Weyl elements $W(f,g_f)$ i.e. we have 
$S((f,g_f),(f',g'_{f'})):=<g_f,f'>-<g'_{f'},f>=0$. 
As this is the symplectic 
structure of the classical Poisson algebra it follows that 
the corresponding Poisson subalgebra corresponds to choosing $g_f=0$ 
up to a canonical transformation. Thus the Abelian Weyl subalgebra 
generators are represented by $U\;r(W(f,0))\;U^{-1}$ 
if the canonical transformation is implementable as unitary $U$.
Since the GNS data are unique only up to a unitary trsnaformation we set 
$U=1$ and assume that $\mathfrak{B}$ is generated by the $r(W(f,0))$ and 
by the $r(W(F,0))$ by the same argument.

We can then think of $\mathfrak{B}$ as $C^0(\Delta)$, the continuous functions 
on the Gel'fand spectrum $\Delta$ of $\mathfrak{B}$, i.e. the characters 
$\chi:\;\mathfrak{B}\to \mathbb{C}$ satisfying 
$\chi(BB')=\chi(B)\chi(B'),\chi(B+B')=\chi(B)+\chi(B'),\;[\chi(B)]^\ast=
\chi(B^\ast)$ \cite{18}. Then by the Riesz Markov theorem \cite{20}
we can think 
of $\omega$ restricted to $\mathfrak{B}$ as a regular Borel probability
measure on 
$\Delta$ via 
$<\Omega,B\Omega>=\mu(\check{B})$ where $\check(B)(\chi)=\chi(B)$ is the 
Gel'fand transform \cite{18}. 

The upshot of these considerations is that we may think of ${\cal H}$ as
$L_2(\Delta,d\mu)$ as far as the bosonic degrees of freedom are concerned.
The total Hilbert space is a tensor product 
${\cal H}={\cal H}_B\otimes {\cal H}_F$ where the factors correspond to 
the bosonic and fermionic degrees of freedom respectively. Non-trivial 
examples of such representations exist, e.g. Fock representations.

Among the many possible representations of the CCR and CAR we now 
wish to select representations which are also representations of 
$\mathfrak{g}$ or $\mathfrak{G}$. Surprisingly, this easily
accomplished in Fock representations of the fermions but rather non-trivial
in Fock representations of the bosons. The geometric origin of this difference
is as follows:\\
A Fock representation for a bosonic field such as the scalar field $\phi$
is defined by an annihilation operator
\be \label{3.7}
A:=2^{-1/2}[\kappa\cdot \phi-i\kappa^{-1}\pi]
\ee
where $\kappa$ is an invertible, positive 
operator on $L_2(\sigma,d^Dx)$. However,
due to the fact that $\phi$ is a scalar of density weight zero while 
$\pi$ is a scalar of density weight one, it follows that (\ref{3.8}) does 
not get mapped to an annihilation operator under a diffeomorphism but rather 
becomes a linear combination of annihilation and creation operators. 
This means that the cyclic vector $\Omega$ cannot be invariant under 
diffeomorphisms. In particular it is itself not annihilated by a potential
generator of spatial diffeomorphisms, even when normal ordered. This 
means that the expression for $C^S(u)$, which will be a homogeneous 
quadratic polynomial in $A,A^\ast$, contains an $(A^\ast)^2$ contribution.
That $(A^\ast)^2$ contribution typically maps $\Omega$ out of the Fock space 
depending on the choice of $\kappa$. The situation is similar for 
$C^V(u)$.

By contrast, the natural Fock vacuum 
for the fermions is annihilated by $\rho$ which itself is the annihilator
$A$. 
As $\rho$ is a scalar density of weight 1/2, $\rho$ is mapped to another 
annihilator under diffeomorphisms and therefore the Fock vacuum is invariant
under diffeomophisms. Indeed, the normal ordered expression for $C^F(u)$ 
does not contain an $(A^\ast)^2$ term as we will see.

Thus, the source of the trouble to represent $C^S(u),C^V(u)$ in Fock 
representations is due to the fact that for bosons the configuration 
and momentum degrees of freedom have different density weight and different 
(namely dual) tensorial structure while for fermions the density weight 
and tensorial structure 
for configuration and momentum degrees of freedom is the same. 
This observation directs towards a possible at least partial 
solution: Perform a canonical 
transformation on the bosonic phase space such that at least the matter 
fields and their conjugate momenta become scalar densities of weight 1/2
so that it makes geometrical sense to form complex linear combinations that 
qualify as annihilation operators that are mapped to themselves under 
diffeomorphisms. We will show that this is indeed possible but that the 
problem persists in the gravitational sector.

The disadvantage of this canonical transformation is that it has a non-trivial 
effect on the gravitational momentum and thus renders the already complicated
HC even more complicated. Thus one may wonder whether by a judicious choice 
of $\kappa$ above one can avoid that canonical transformation and in addition 
achieve a Fock representation also in the gravitational sector. The surprising
answer is that this is indeed possible in D=1 as was already known but that 
it is not possible for general vector fields $u$ in higher dimensions.
One can find Fock representations if one makes an additional restriction 
on the allowed choice of vector fields which corresponds to a partial 
gauge fixing of the SDG. That restriction comes with a background structure, 
namely a volume element. Let $g$ be a fixed Euclidian D-metric of Euclidian 
signature. Define the divergence of the vector field $u$ with respect to 
$g$ by 
\be \label{3.8}
{\sf  div}_\nu(u):=[\det(g)]^{-1/2}\;\partial_a [\det(g)]^{1/2}\; u^a]   
\ee
which actually does not depend on the full metric $g$ but only on its volume 
element $\nu:=[\det(g)]^{1/2}$. A vector field is calles $\nu-$divergence free 
iff (\ref{3.8}) vanishes. These vector fields form a subalgebra of the 
SDA. To see this, given $\nu$ pick any metric $g$ such that 
$\nu=[\det(g)]^{1/2}$. Then ${\sf div}_\nu(u)=\nabla_a u^a$ where $\nabla$ 
is the unique torsion free and $g$ compatible covariant differential. 
It follows for $\nu-$divergence free vector fields $u,v$
\ba \label{3.9}   
{\sf div}_\nu([u,v]) &=&
\nabla_a[\nabla_u \; v^a-\nabla_v \; u^a]
\nonumber\\
&=& [\nabla_a u^b]\;[\nabla_b v^a]+u^b\;\nabla_a\nabla_b\; v^a
-u\leftrightarrow v
\nonumber\\
&=& u^b\;[\nabla_a,\nabla_b]\; v^a -u\leftrightarrow v
\nonumber\\
&=& u^b\;R_{bc}\; v^c -u\leftrightarrow v=0
\ea
where $R_{ab}$ is the symmetric Ricci tensor of $g$.

If one does not make that restriction we are forced to step outside the realm 
of Fock representations for D$>1$. A possibility to do this arises from the 
measure theoretic framework sketched above. Let $\mu$ be a 
probability measure on a space 
$X$ of tensor fields $\phi$ of a certain type and suppose that the 
natural action $\varphi^\ast$ of $\mathfrak{G}$ on elements $\phi\in X$ 
reviewed in section \ref{s2}
is measurable, i.e. preserves the $\sigma-$algebra of measurable sets in $X$
upon which $\mu$ is based (that is, if $B$ is measurable, then is 
$\varphi^\ast(B)=\{\varphi^\ast(\phi);\; \phi \in B\}$ for all $\varphi\in 
\mathfrak{G}$). 
Then we may try to define a unitary representation of $\mathfrak{G}$ on 
${\cal H}=L_2(X,d\mu)$ by 
\be \label{3.10} 
[U(\varphi)\psi](\phi):=m_\varphi(\phi)\;\psi(\varphi^\ast \pi)
\ee 
for a certain functional $m_\varphi:\; X\to \mathbb{C}$ to be determined
which may depend parametrically on $\varphi$ but not on the vector $\psi$.
Suppose that the measure $\mu$ is quasi-invariant for the action 
$\varphi^\ast$, that is, the measure $\mu$ and the measure 
$\mu\circ \varphi^\ast$ have the same sets of measure zero. Then by the 
Radon Nikodym theorem \cite{20} there exists a non-negative $L_1(X,d\mu)$ 
function such that $\mu-$a.e.
\be \label{3.11}
d\mu(\varphi^\ast\phi)=f_\varphi(\phi)\; d\mu(\phi)
\ee
It follows that for the choice 
\be \label{3.12}
m_\varphi(\phi):=\sqrt{f_\varphi(\phi)}
\ee
we achieve unitarity
\be \label{3.12a}
<U(\varphi)\psi,U(\varphi)\tilde{\psi}>
=\int_X\; d\mu(\phi)\;|m_\varphi(\phi)|^2\; 
[\psi^\ast\;\tilde{\psi}](\varphi^\ast \phi)            
=\int_X\; d\mu(\varphi^\ast\phi)\;
[\psi^\ast\;\tilde{\psi}](\varphi^\ast \phi)            
=<\psi,\tilde{\psi}>
\ee
Moreover, we have using the identity $[\varphi_1\circ \varphi_2]^\ast=
\varphi_2^\ast\circ\varphi_1^\ast$
\be \label{3.13}
f_{\varphi_1\circ\varphi_2}(\phi)
=\frac{d\mu(\varphi_2^\ast\circ\varphi_1^\ast\phi)}{d\mu(\phi)}
=\frac{d\mu(\varphi_2^\ast\circ\varphi_1^\ast\phi)}{d\mu(\varphi_1^\ast\phi)}
\frac{d\mu(\varphi_1^\ast\phi)}{d\mu(\phi)}
=f_{\varphi_2}(\varphi_1^\ast\phi)\;f_{\varphi_1}(\phi)
\ee
so that
\ba \label{3.14}
&&[U(\varphi_1\circ\varphi_2)\psi](\phi)
=
m_{\varphi_1\circ\varphi_2}(\phi)\;\psi([\varphi_1\circ\varphi_2]^\ast \phi)
=m_{\varphi_2}(\varphi_1^\ast\phi)\;
m_{\varphi_1}(\phi)\;
\psi(\varphi_2^\ast\circ\varphi_1^\ast\phi)
\nonumber\\
&=& m_{\varphi_1}(\phi)\; [U(\varphi_2)\psi](\varphi_1^\ast\phi)
=[U(\varphi_1)\;U(\varphi_2)\psi](\phi)
\ea

Thus we obtain a non-anomalous representation of $\mathfrak{G}$ if 
quasi-invariant measures exist. That this is non-trivial is due to the 
fact that formally the Radon-Nikodym derivative is the determinant of the 
Jacobi Matrix when one changes variables $\phi$ to variables $\varphi^\ast \phi$
and such determinants are typically ill-defined when an infinite number 
of degrees of freedom is involved. An exception is provided again in the 
case that $\phi$ is a scalar of density weight 1/2. In this case an example 
for a quasi-invariant (in fact precisely invariant) measure is the Gaussian
measure with white noise covariance. This measure is defined via its 
Fourier transform 
\be \label{3.15}
\hat{\mu}(f):=\int_X\;d\mu(\phi)\;\exp(i\phi(f))
:=\exp(-\frac{1}{2}<f,f>_{L_2})
\ee 
and the span of the vectors $\psi(\phi):=w_f(\phi):=e^{i\phi(f)}$ is dense in 
$L_2(X,d\mu)$. We have the identity 
\be \label{3.16}
\varphi^\ast\phi(f)
=\int_\sigma\; d^Dx\; f(x)
|\det(\partial\varphi)/\partial x)|^{1/2}\;\phi(\varphi(x))
=\int_\sigma\; d^Dy\; f(\varphi^{-1}(y))
|\det(\partial\varphi^{-1})/\partial y)|^{1/2}\;\phi(y)
=\phi([\varphi^{-1}]^\ast f)
\ee
where we assign $f$ to be a scalar density of weight 1/2 as well. 
Since 
\be \label{3.17}
<w_f,w_{\tilde{f}}>=\hat{\mu}(\tilde{f}-f)
\ee
we find with 
\be \label{3.18}
U(\varphi)w_f:=w_{[\varphi^{-1}]^\ast\;f}
\ee
that unitarity holds
\ba \label{3.18a}
&& <U(\varphi)\;w_f, U(\varphi)\;w_{\tilde{f}}>_{L_2(X)}=
\hat{\mu}([\varphi^{-1}]^\ast(\tilde{f}-f)]>
=
\exp(-\frac{1}{2}||[\varphi^{-1}]^\ast(\tilde{f}-f)||^2_{L_2(\sigma)}) 
=\exp(-\frac{1}{2}||\tilde{f}-f||^2_{L_2(\sigma)}) 
\nonumber\\
& = &<w_f, w_{\tilde{f}}>_{L_2(X)}
\ea
and $U(\varphi_1)\;U(\varphi_2)=U(\varphi_1\circ \varphi_2)$ thanks to 
$([\varphi_1\circ\varphi_2]^{-1})^\ast= 
([\varphi_1]^{-1})\ast \circ ([\varphi_2]^{-1})^\ast$. The Radon-Nikodym 
derivative in this case equals unity. It appears that otherwise little 
is known about the existence of such quasi-invariant measures, see
\cite{21} for an overview.

Note that our discussion is in agreement with the obstructions pointed out in 
\cite{21a}. In general, given a state $\omega$ on a $C^\ast-$algebra on which 
a group $G$ acts by $^\ast-$automorphisms $\alpha_g$, that is, 
$\alpha_g\circ \alpha_{g'}=\alpha_{gg'},\;\alpha_{1_G}=1_{\mathfrak{A}}$ 
and $\alpha_g(a)^\ast=\alpha_g(a^\ast)$ (besides being compatible with
algebraic operations) such that $\omega$ is invariant 
$\omega=\omega\circ\alpha_g$ for all $g\in G$ this induces a unitary 
representation of $G$ in the GNS representation by 
$U(g)r(a)\Omega=r(\alpha(g))\Omega$. Applied to the SDG we thus ask for 
a SD invariant state $\omega$, i.e. since the state $\omega$ is completely 
fixed if we know the expectation value functional 
$(f,g)\mapsto \chi(f,g):=\omega(W(f,g))$ we require that 
$\chi(f,g)=\chi(\varphi^\ast f,\varphi^\ast g)$ where $\varphi^\ast$ denotes 
the corresponding dual pull-back representation on the smearing 
fields and $\chi$ must define a positive linear functional. If $\omega$ 
is continuous in both $f,g$ (regular state) 
this invariance property transcends to the 
all $N-$point functions, e.g. $\omega(q(f_1)..q(f_N))$ and   
$\omega(p(g_1)..p(g_N))$ which in turn define distributions in $N$ 
points. However, if $f$ is a non-trivial tensor of type $(A,B,w)$ with 
$A\not=B$ or $w\not=1/2$ there are no invariant such distributions except 
the trivial one.
Thus $\omega(q(f_1)..q(f_N))=0$. Thus by $q(f)^\ast=q(f)$ we have
$||r(q(f_1)..q(f_N))\Omega||^2=\omega(q(f_N)..q(f_1)q(f_1)..q(f_N))=0$
i.e. $r(q(f_1)..q(f_N))\Omega=0$ and similarly
$r(p(g_1)..p(g_N))\Omega=0$. Thus we get the contradiction 
$0=[r(p(g)),r(q(f)]\Omega=i<g,f>\Omega$. This obstruction can be evaded
e.g. in the following cases:\\
1.\\ 
the state is irregular in either $f$ or $g$ or both, 
this is the option followed in LQG.\\
2.\\
one achieves $A=B,w=1/2$, this is what we will discuss in the next 
section\\
3.\\ 
One does not insist on an invariant state but merely asks for 
quasi-invariance \cite{21b}, that is, for each $g\in \mathfrak{G}$ we 
find a generalised Radon-Nikodym derivative $b_g\in \mathfrak{A}$ such 
$\omega(\alpha_g(a))=\omega(b_g a)$ which generalises the above 
measure theoretic definition. The above N-point distributions then need not 
be SD invariant.\\
4.\\ 
One allows for (central) extensions of the SDA. This again 
means that the corresponding state is not SD invariant.
Now additional complications 
arise, namely in solving the constraints, the anomalous term in the 
constraint algebra prevents the existence of generalised zero 
eigenvectors. Suppose that the anomaly vanishes when the vector 
fields $u$ are restricted to a subspace $U_1^\mathbb{C}$ 
with $[U_1^\mathbb{C},U_1^\mathbb{C}]\subset U_1^\mathbb{C}$
of the 
complexification $U^\mathbb{C}$ of the space of real vector fields 
$U$ such that $U_1^\mathbb{C}\cup U_2^\mathbb{C}=U^\mathbb{C}$ where 
$U_2^\mathbb{C}$ is the space of complex conjugate 
fields $u_1^\ast,\; u_1\in U_1^\mathbb{C}$. Let $\psi_1,\psi-1'$ 
be joint generalised 
zero eigenvectors of the $C(u_1)$, i.e. 
$<\psi_1, C(u_1) \psi>=0$, $<\psi_1', C(u_1) \psi>=0$, 
for all
$u_1\in U_1^\mathbb{C},\;\psi\in D$ where $D$ is a common dense domain for 
all the $C(u_1)$. Then for $U_2^\mathbb{C}\ni u_2=u_1^\ast,\; u_1\in 
U_1^\mathbb{C}$ we formally have $C(u_2)=C(u_1)^\ast$ as $C(u)$ is symmetric
for real valued $u$ and thus 
$<\psi_1,C(u_2)\psi_1'>=<C(u_1)\;\psi_1,\psi_1'>=<\psi_1',C(u_1)\psi_1>^\ast 
=0$ because $\psi_1$ is a (non-normalisable) linear combination of elements 
$\psi\in D$. In that sense one has solved all $C(u),\; u=u_1+u_1^\ast\in U$
in the sense of matrix elements between generalised 
eigenvectors $\psi_1$. The space
of solutions $\psi_1$ ist then to be equipped with a new inner product, 
often of the form $<\psi_1,\psi_1'>_1=<\psi_1,\psi_1'>/<\psi_1^0,\psi_1^0>$
where $\psi_1^0$ is a chosen reference solution. \\
5.\\
The above obstruction argument relies on phase spaces which are 
cotangent spaces over spaces tensor fields which themselves are vector 
spaces over each point $x\in \sigma$. 
One may however allow more general phase spaces such as cotangent 
spaces over groups or cones over each point $x\in \sigma$. This 
can be argued to be the case for the gravitational 
D-metrics which are classically 
of Eucldian signature \cite{21c} and thus do not form a vector space
(although one may argue that deviations from Euclidian signature 
may be allowed in the quantum theory, i.e. ``tunneling'' into the 
classically forbidden regime is allowed).\\
6.\\ 
One gives up the requirement that $C(u)$ be a densely defined 
{\it operator} altogether and contents oneself by asking that it 
is at least a densely defined {\it quadratic form}. Then while 
$C(u)C(v)$ is ill defined there is a chance to define $[C(u),C(v)]$ as 
a quadratic form using standard regularisation techniques exploiting 
the fact that due to the minus sign involved certain divergences can 
cancel out. This strategy is perhaps the minimal physical requirement 
because when solving the constraints $C(u)=0$ in the quantum theory we 
ask for generalised zero eigenvectors $\psi$ that satisfy 
$<\psi,C(u) \psi'>=0$ for all $\psi'$ in the dense quadratic form 
domain $\cal D$. 
Thus we are actually
only using the constraint as a quadratic form in the search for suitable 
$\psi$ which are not normalisable linear combinations of elements in $\cal D$,
hence all coefficients in that linear combination are in fact well defined.
We will see that this is in fact possible.\\
\\
In the search for quasi-invariant states or measures resulting in 
well defined constraint operators one    
may try to approach the problem using methods of 
constructive QFT (CQFT). First of all we can e.g. split 
$C^S(u)=C^S_D(u)+C^S_H(u)$ into two 
pieces where $C^S_D(u)$ generates dilatations and $C^S_H(u)$ is the 
``half-density piece'' of $C^S(u)$. The second piece presents no problem 
in any Fock representation and thus we may focus on the dilatation piece 
$C^S_D(u)$. The idea, following \cite{15}, is now to introduce both IR and 
UV cut-offs. E.g. we work on a large torus $\sigma=T^D$ of radius $R$
so that momentum modes are discrete labelled by an integer $m\in\mathbb{Z}$
and we use a mode cut-off $M$. 
Then  
we expand the exact $C^S_D(u)$ into those modes and drop all terms involving 
modes $|m|>M$. 
Suppose now that we find a 
an invertible ``dressing operator'' $T_M$ on ${\cal H}$ acting 
only on the modes $|m|\le M$ such 
that $\hat{C}^S_D(u)_M:=T_M^{-1}\; C^S_D(u)_M\;T_M$ does not 
contain any $[A^\ast]^2$ terms. We define 
${\cal H}_M:=T_M\;{\cal H}$ and for any $T_M \psi,T_M \tilde{\psi}\in 
{\cal H}_M$ the ``renormalised'' inner product
\be \label{3.19}
<T_M \psi,T_M \tilde{\psi}>_M
:=\frac{<T_M\psi,\;T_M\tilde{\psi}>}{<T_M\Omega,T_M\Omega>}    
\ee
The operators $\hat{C}^S_D(u)_M$ are now well defined on $\cal H$ with 
range therein. 
Equivalently $C^S_D(u)_M$ is well defined on ${\cal H}_M$ with range in 
${\cal H}_M$ as 
\be \label{3.20}
C^S_D(u)_M\; T_M\;\psi=T_M\;\hat{C}^S_D(u)_M\psi
\ee
The idea is now to try to take the limit $M\to \infty$ in order to obtain
a representation of $C^S_D(u)$ on ${\cal H}_\infty$. Whether this works 
depends on the convergence of (\ref{3.19}), in particular whether the rate 
of divergence of the vectors $T_M\psi$ is the same as that of $T_M\Omega$.
In $\Phi^4$ theory in $D=3$ this works for the Hamiltonian operator. 
In our case we will show that this fails, among other things because 
the dressing operator acquires a non-trivial dependence on $u$ which 
it must not in order that (\ref{3.19}) is well defined.

We may therefore try to obtain a more modest result and consider the 
objects $\exp(i C^S_D(u))$ on a chosen Fock space. Even if $C^S_D(u)$
is normal ordered with respect to the chosen annihilation/creation operator 
structure, this object is ill defined as it stands and thus does not 
even qualify as a quadratic form on that Fock space. We therefore compute 
the normal ordered object $:\exp(i C^S_D(u)):$ which is surprisingly 
complicated. We find that $:\exp(i C^S_D(u)):=\lim_{M\to\infty}
\exp(i C^S_D(u)_M)\; e^{\gamma_M(u)}$ where $\gamma_M(u)$ is a number that 
diverges as we remove the UV regulator $M$. It is still 
remarkable that one compute this in closed form as one proceeds just as 
in ordinary QFT perturbation theory: We formally Taylor expand 
$\exp(i C^S_D(u))$ and normal order the powers of $C^S(u)$ that one obtains 
this way. The correction terms are collected into a counter term. One would 
not have guessed that one can do this to all orders and that the counter 
terms organise themselves into a simple exponential. The normal ordered 
quadratic forms 
$:\exp(i C^S_D(u)):$ are now well defined on the Fock space and while 
their products are therefore generically 
ill defined one may ask whether in the sense 
of quadratic forms 
\be \label{3.21}
:\exp(i C^S_D(u)):\;
:\exp(i C^S_D(v)):\;
:\exp(i C^S_D(-u)):\;=\;
:\exp(i C^S_D(v)):
\ee
which serves as a renormalised, exponentiated substitute of the classical 
identity $\{C^S_D(u),C^S_D(v)\}=0$.
Alterantively, at the level of the algebra ony may ask whether in the sense 
of quadratic forms $[C^S_D(u),C^S_D(v)]=0$. In both cases one works at 
finite cut-off to compute the commutators and then takes the limit. We will 
see that the answer is in the affirmative.

\section{Half-density transformations and Fock representations}
\label{s4}

In this section we show that Fock representations of the SDG or SDA are 
easy to construct if the underlying fields are scalar half-densities 
of weight 1/2 and if one uses the white noise covariance of the corresponding
Gaussian measure in the bosonic case. Fermions naturally come with density 
weight 1/2 and are already expressed as annihilation and creation operators
therefore we start this section with fermions. For bosonic scalar fields
one can perform an intrinsic canonical transformation from the density zero 
and one fields $\phi,\pi$ to halt density fields $\hat{\phi},\hat{\pi}$.
For higher rank bosonic tensor fields an intrinsic canonical transformation
is no longer possible, but if a D-Bein field is present one can find
a canonical transformation such that all matter fields and their 
canonical momenta are scalar densities of weight 1/2.

All representations constructed in this section are background independent
and are very different from the background independent 
representation used in LQG. One may wonder how that can be true in view 
of the uniqueness result \cite{32}. The answer is quite simple: In 
\cite{32} one uses completely different Weyl algebras. The fields 
are not scalar densities of weight $1/2$ but keep their natural tensorial 
and weight character. Furthermore, in order to solve the non-Abelian 
Gauss constraint one uses less than $D$ smearing dimensions of the fields 
(here we use $D$ smearing dimensions). Therefore the representations 
considered in \cite{32} and the present section are for different algebras 
and therefore cannot be compared, there is no contradiction.

\subsection{Fermions}
\label{s4.1}

We start this subsection by defining the natural Fock representation of 
the CAR for fermions using $\rho(x)\Omega=0$. The fermionic contribution 
to the SDC reads in normal ordered form 
\be \label{4.1} 
C^G(u)=-\frac{i}{2}\int_\sigma\; d^Dx\; u^a(x)\; 
[\rho^\ast(x)\;\rho_{,a}(x) -\rho^\ast_{,a}(x)\;\rho(x)] 
\ee 
This 
operator is well defined on the corresponding Fock space because it does 
not contain a term of the form $\rho^\ast(x)\rho^\ast_{,a}(x)$, in 
particular it annihilates the vacuum. It is useful to integrate by parts 
and write it in the alternative form 
\be \label{4.2} 
C^G(u)=-i\int_\sigma\; d^Dx\;[ u^a(x)\; 
[\rho^\ast(x)\;\rho_{,a}(x)+\frac{1}{2}\; u^a_{,a}(x)\; 
\rho^\ast(x)\;\rho(x)] 
\ee 
where we have assumed that $u^a$ has compact 
support if $\sigma$ has a boundary. Using the abbreviations 
\ba \label{4.3} 
&& B_{ax}=\rho^\ast(x)\rho_{,a}(x),\; 
B_x=\rho^\ast(x)\rho_{,a}(x),\; 
\rho_{ax}=\rho_{,a}(x),\;\rho_x=\rho(x),\;
u^a_x=u^a(x),\;v^a_x=v^a(x),\;
\\
&& \mu_x=\frac{1}{2}\;u^a_{,a}(x),\; 
\nu_x=\frac{1}{2}\;v^a_{,a}(x),\;
\delta=\delta(x,y),
\;\delta_{ax}=\delta_{,x^a}(x,y)
\;\delta_{by}=\delta_{,y^b}(x,y)
\nonumber
\ea 
and the well known identity among commutators $[.,.]$ and anti-commutators
$[.]_+$ given by 
$[AB,CD]=[A,CD]B+A[B,CD]=([A,C]_+ D-C[A,D]_+)B+A([B,C]_+ D -C[B,D]_+)$ 
as well as the CAR we find 
\ba \label{4.4}
&& -[C^G(u),C^G(v)]
=\int \;d^D x\;d^D y\;
\{
u^a_x\; v^b_y [B_{ax},B_{by}]+
u^a_x\; \nu_y [B_{ax},B_y]+
\mu_x\; v^b_y [B_x,B_{by}]+
\mu_x\; \nu_y [B_x,B_y]
\}
\nonumber\\
&=&\int \;d^D x\;d^D y\;
\{
u^a_x\; v^b_y [B_{ax},B_{by}]+
(u^a_x\; \nu_y-v^a_x \mu_y)  [B_{ax},B_y]+
\frac{1}{2}\;(\mu_x\; \nu_y-\mu_y \nu_x) [B_x,B_y]
\}
\nonumber\\
&=&\int \;d^D x\;d^D y\;
\{
u^a_x\; v^b_y (\rho^\ast_x\; [\rho_{ax},\rho^\ast_y]_+ \;\rho_{by}
-\rho^\ast_y\;[\rho^\ast_x,\rho_{by} ]_+ \;\rho_{ax})
+(u^a_x\; \nu_y-v^a_x \mu_y)\;
(\rho^\ast_x\; [\rho_{ax},\rho^\ast_y]_+ \;\rho_y
-\rho^\ast_y\; [\rho^\ast_x,\rho_y]_+ \;\rho_{ax})
\}
\nonumber\\
&=&\int \;d^D x\;d^D y\;
\{
(u^a_x\; v^b_y-u^b_y v^a_x) 
\rho^\ast_x\; \delta_{ax} \;\rho_{by}
+(u^a_x\; \nu_y-v^a_x \mu_y)\;
(\rho^\ast_x\; \delta_{ax} \;\rho_y
-\rho^\ast_y\; \delta \;\rho_{ax})
\}
\nonumber\\
&=&\int \;d^D x\;
\{-((u^a \rho^\ast)_{,a}\; v^b - u^b (v^a \rho^\ast)_{,a})\;\rho_{,b}
-((u^a \rho^\ast)_{,a}\; \nu-(v^a \rho^\ast)_{,a} \mu)\;\rho
-(u^a\; \nu-v^a \mu_)\;\rho^\ast\;\rho_{,a})
\}
\nonumber\\
&=&\int \;d^D x\;\rho^\ast
\{
(u^a \;v^b_{,a}-v^a \;u^b_{,a}) \rho_{,b}\;
+(u^a\nu_{,a}-v^a\mu_{,a}) \rho
\}
\nonumber\\
&=&\int \;d^D x\;
\{[u,v]^a B_a+\frac{1}{2} [u,v]^a_{,a} B
\}
=i C^G([u,v])
\ea
In the first step we expanded out all terms, in the second step we relabelled 
$a\leftrightarrow b,\;x\leftrightarrow y$ in the third and fourth term,
in the third step we dropped the fourth term which is proportional to 
$\delta(x,y)$ and used above identity, in the fourth step we interchanged
labels in the second term and used the CAR, in the fifth step we integrated 
by parts and carried out the delta distribution so that we can drop the 
$x,y$ labels, in the sixth step we integrated by parts so that no derivative 
acts on $\rho^\ast$, in the seventh step we could drop the $\rho_{(ab)}$ term 
which is multiplied by $2 u^{[a} v^{b]}$ and could drop the third term 
which cancels against the term proportional to $\rho_{,a}$ from the 
second term and in the eighth step we used 
\be \label{4.5}
\partial_b\;[u,v]^b=
u^a_{,b}\;v^b_{,a}+u^a\;v^b_{,ab}
-v^a_{,b}\;u^b_{,a}+v^a\;u^b_{,ab}=u^a\nu_{,a}-v^a\mu_{,a}
\ee
Accordingly
\be \label{4.6a}
[C^G(u),C^G(v)]=i\; C^G(-[u,v])
\ee
i.e. the commutator equals $i$ times the result of the  classical Poisson 
bracket.

\subsection{Intrinsic Half Density Transformation for Scalar Fields}
\label{s4.2}

The scalar $\phi$ has density weight zero while $\pi$ has density weight
one. Therefore the apparently natural annihilation operator for white 
noise covariance 
\be \label{4.6}
A:=2^{-1/2}[\phi-i\pi]
\ee
behaves awkwardly under diffeomorphisms 
\be \label{4.7}
A\mapsto A_\varphi:=2^{-1/2}\;[\phi\circ\varphi-i\;J\;\pi\circ\varphi]
=\frac{1+J}{2}\;A\circ\varphi 
+\frac{1-J}{2}\;A^\ast\circ\varphi
\ee
where $J=|\det(\partial \phi/\partial x)|$. Thus the Fock vacuum $\Omega$ 
defined by $A\Omega=0$ cannot be left invariant by a potential unitary 
operator $U(\varphi)$ that accomplishes $U(\varphi) A U(\varphi)^{-1}
=A_\varphi$ as otherwise we get the contradiction 
$0=U(\varphi)A\Omega=A_\varphi \Omega
=\frac{1-J}{2}\;A^\ast\circ\varphi\Omega$. There is no contradiction 
if and only if $J=1$ i.e. for {\it unimodular diffeomorphisms} or
{\it volume preserving diffeomorphisms}. The 
set of unimodular diffeomorphisms forms a subgroup of the diffeomorphism 
group where the product is defined by concatenation of maps. Its Lie algebra 
consists of divergence free vecor fields with respect to the chosen volume 
form as follows from the expansion $\varphi^u_t(x)=x+t u(x)+o(t^2)$ in a
given chart and
$J=1+tu^a_{,a}+o(t^2)$ 
using the volume form of the Euclidian metric of that chart. 
 
Accordingly, for general diffeormorphisms we expect significant trouble when
using the Fock representation based on $A$. Indeed, when expanding 
$C^S(u)$ in terms of $A,A^\ast$ and normal ordering we find 
\ba \label{4.8}
C^S(u) &=&
\int\;d^D x\;u^a\;:\pi\phi_{,a}:
=\frac{i}{2}
\int\;d^D x\;u^a\;:(A-A^\ast)\;(A+A^\ast)_{,a}:
\nonumber\\
&=&
\frac{i}{2}
\int\;d^D x\;u^a
(A A_{,a}-A^\ast A_{,a}-A^\ast A_{,a}+A^\ast_{,a} A)
\nonumber\\
&=& -\frac{i}{2}
\int\;d^D x\;
\{\frac{1}{2}\;u^a_{,a}\;[A^2-(A^\ast)^2]+u^a\; [A^\ast A_{,a}-A^\ast_{,a} A]\}
\ea
where we integrated by parts.
As expected, the troublesome term $[A^\ast]^2$ is proportional to $u^a_{,a}$.
Only for $u^a_{,a}=0$ is (\ref{4.8}) densely defined on the Fock space,
otherwise it maps out of it which can be seen from the computation 
\be \label{4.9}
||C^S(u)\Omega||^2=\frac{1}{16}\delta(0,0)\;||u^a_{,a}||^2_{L_2(\sigma)}
\ee
no matter whether $\sigma$ is compact or not. 

This does not yet show that there are no Fock representations for scalar 
fields in which $\phi,\pi$ have density weight zero and one respectively 
because we can use a different covariance $\kappa$ (positive,
invertible integral kernel)
and set 
\be \label{4.10}
A_\kappa=2^{-1/2}[\kappa\cdot \phi-i\kappa^{-1}\cdot\pi]
\ee
and analyse whether there exist $\kappa$ such that $C^S(u)$ is densely 
defined in the Fock representation based on $A_\kappa \Omega_\kappa=0$. 
This analysis is non-trivial and reserved for the next section. We will
find that the answer depends on the dimension: For $D=1$ there do exist 
representations which however do not extend to $D>1$.

For the remainder of this subsection we consider a salar field 
$\hat{\phi}$ of density weight 1/2 so that its conjugate momentum 
$\hat{\pi}$ also 
has density weight 1/2 consistent with the Poisson bracket 
$\{\hat{\pi}(x),\hat{\phi}(y)\}=\delta(x,y)$. For such a scalar field 
the SDC would be 
\be \label{4.10a}
\hat{C}^S(u)=\frac{1}{2}\int\; d^Dx\;
[\hat{\pi}\hat{\phi}_{,a}-\hat{\pi}_{,a}\hat{\phi}]
\ee
since it correctly generates 
\be \label{4.11}
\{\hat{C}^S(u),\hat{\phi}=u^a\phi_{,a}+\frac{1}{2}\;u^a_{,a}\;\hat{\phi}
\ee
Since both $\hat{\phi},\hat{\pi}$ have equal density weight the annihilator
\be \label{4.12}
\hat{A}=2^{-1/2}[\hat{\phi}-i\hat{\pi}]
\ee
is still of the same density weight and is mapped to another annihilator 
under diffeomorphisms 
\be \label{4.13}
\hat{A}\to \hat{A}_\phi=J^{1/2}\;\hat{A}\circ\varphi
\ee
and the Fock vacuum $\hat{A}\hat{\Omega}=0$ is preserved under 
diffeomorphisms. This can also be confirmed by calculation the 
corresponding normal ordered operator 
\ba \label{4.14}
\hat{C}^S(u) &=& 
\frac{i}{4}\int\;d^D\;u^a\;
:[(\hat{A}-\hat{A}^\ast)\;(\hat{A}+\hat{A}^\ast)_{,a}
-(\hat{A}-\hat{A}^\ast)_{,a}\;(\hat{A}+\hat{A}^\ast)]:
\nonumber\\
&=&-\frac{i}{2}\int\;d^D\;u^a\;
[\hat{A}^\ast \hat{A}_{,a}-\hat{A}^\ast_{,a} \hat{A}]
\ea
in which the $\hat{A}^2,\; [\hat{A}^\ast]^2$ terms drop out using the CCR. 
The operator (\ref{4.14}) annihilates the vacuum. Moreover, comparing
(\ref{4.14}) with (\ref{4.1}) we see that they are formally identical 
if we formally substitute $\rho$ by $\hat{A}$. In verifying the SDA
we must use bosonic statistics and the corresponding identity
$[AB,CD]=A[B,CD]+[A,CD]B=A([B,C]D+C[B,D])+([A,C]D+C[A,D])B$ which differs
by some minus signs from the above identity in terms of 
anticommutators but by following the  
caculation (\ref{4.4}) step by step we see that 
$+[\rho_x,\rho^\ast_y]_+$ is replaced by $+[\hat{A}_x,\hat{A}^\ast_y]$ and   
$-[\rho^\ast_,\rho_y]_+$ is replaced by $+[\hat{A}^\ast_x,\hat{A}_y]$ and the
same for (anti)commutators with derivatives.
But these two distributions eactly coincide, thus without additional 
computation we immediately conclude 
\be \label{4.15}    
[\hat{C}^S(u),\hat{C}^S(v)]=
i\hat{C}^S(-[u,v]))
\ee
~\\
The question remains how $\hat{\phi},\hat{\pi}$ should come into existence
given that the canonical formulation uniquely leads to $\phi,\pi$ with
their unequal density weight. The observation is that we may use 
the scalar field momentum $\pi$ to define the density of weight 1/2
\be \label{4.16}
\sigma:=|\pi|^{1/2}
\ee
We define a momentum of density weight 1/2 by 
\be \label{4.17}
\hat{\pi}:=\frac{\pi}{|\pi|^{1/2}}\;\;
\Rightarrow\;\; |\hat{\pi}|=\sigma\;\;
\Rightarrow\;\; \pi=\sigma\;\;\hat{\pi}=|\hat{\pi}|\hat{\pi}
\ee
We thus have for the symplectic potential, dropping total functional 
exterior differentials as usual
\be \label{4.18}
\Theta
=-\int\;d^D x\; \phi\;[\delta\pi]
=-\int\;d^D x\; \phi\;[\sigma [\delta\hat{\pi}]
+{\rm sgn}(\hat{\pi})\hat{\pi}\;
[\delta\hat{\pi}]]
=-2\int\;d^D x\; \phi\;\sigma\; [\delta\hat{\pi}]
=\int\;d^D x\;  \hat{\pi}[\delta \hat{\phi}]
\ee
where 
\be \label{4.19}
\hat{\phi}=2|\pi|^{1/2} \; \phi\;\; \Rightarrow \phi=\frac{1}{2}
|\hat{\pi}|^{-1/2}\;\hat{\phi}
\ee
is canonically conjugate to $\hat{\pi}$. It follows 
\ba \label{4.20}
C^S(u) &=& \int\; d^Dx\; u^a\pi\phi_{,a}
=\frac{1}{2}\int\; d^Dx\; u^a \hat{\pi}\sigma (\frac{\hat{\phi}}{\sigma})_{,a}
\nonumber\\
&=& \frac{1}{2}\int\; d^Dx\; u^a (\hat{\pi}\hat{\phi}_{,a}
-\sigma^{-1}\hat{\pi}
{\rm sgn}(\hat{\pi})\hat{\pi}_{,a}\hat{\phi})
\nonumber\\
&=& \frac{1}{2}\int\; d^Dx\; u^a (\hat{\pi}\hat{\phi}_{,a}
-\hat{\pi}_{,a}\hat{\phi})
\ea
which is exactly (\ref{4.10}).\\
\\
Thus it is possible to obtain a Fock represntation of the SDA and SDG 
intrinsically, i.e. without using any other fields, for scalar fields 
after performing a canonical transformation on the classical 
phase space prior to quantisation. Note that this representation is 
manifestly background independent. Of course this reformulation is quite 
non trivial, e.g. in the HC the kinetic 
term $\pi^2$ becomes $\hat{\pi}^4$ and a mass term 
term $\phi^2$ becomes $\hat{\phi}^2/\hat{\pi}^2$. Thus the canonical 
transformation generates additional non-linearities and in particular
raises difficult 
questions about the divergences of the inverse of the operator 
$\hat{\pi}$. In that respect the transformations 
of the next subsection are more natural if one can keep the D-metric
sufficiently non degenerate in 
the quantum theory.

\subsection{Extrinsic half density transformation for all bosonic matter
fields}
\label{s4.3}

As fermions naturally come with a Fock representation based on density 
weight 1/2 as discussed in 
section \ref{s4.1}, we only have to construct Fock representations 
for bosonic tensor fields and the corresponding SDA or SDG. It is already 
difficult to do this for scalars with density weight zero, hence one would 
expect this to be even harder for higher rank tensors of density weight zero.
Consider for instance a co-vector field $q_a$ of density weight 
zero with conjugate momentum vector field $p^a$ of density weight unity.
The experience with scalar fields of the previous subsection has shown 
that that it is benefitial to turn both objects into tensor fields of the 
same type. One may wonder whether this is possible intrinsically. 
The first objective would be to construct a density out of $q_a,p^a$ and 
then pass to density 1/2 valued fields $\hat{q}_a,\hat{p}^a$ 
by a canonical transformation following the 
pattern of the previous subsection. It is not possible to do this 
algebraically, however, a natural scalar density of weight 1/2 is 
$\sigma=|\partial_a p^a|^{1/2}$. This is background independently defined 
because $p^a$ has density weight one. Thus one would begin the canonical 
transformation using $\hat{p}^a:=p^a\;\sigma^{-1}$. However, this is 
already difficult to invert since $\hat{p}^a_{,a}
=\frac{p^b_{,b}}{|p^c_{,c}|^{1/2}}-\frac{p^a\sigma_{,a}}{\sigma^2}$
cannot be solved algebraically for $p^a_{,a}$, it is a non-trivial first 
order PDE of the form (for $p^b_{,b}>0$)
\be \label{4.20a}
p^a \sigma_{,a}=\sigma^3-\sigma^2 \hat{p}^a_{,a}
\ee
to be solved for $\sigma$. Even if this can be solved in any useful way,   
the next even more difficult task would then be to lower the index of
$\hat{p}^a$ so that it be of same tensor type as 
$\hat{q}_a\propto \sigma q_a$. This would seem to require a {\it 
background metric} if done intrinsically
which then would make an annihilation constructed out of $\hat{q}_a,\hat{p}_a$
transform into a linear combination of annihilation and creation operators 
which comes with the difficulties discussed in the previous subsection.\\
\\
The discussion suggests that we use the {\it dynamic metric} or rather the 
D-Bein $e_a^j$ to accomplish the task {\it extrinsically} to turn 
canonical pairs of tensor fields 
$q^{a_1..a_A}\;_{b_1..b_B},\;p_{a_1..a_A}\;^{b_1 .. b_B}$ 
into scalar half densities as follows.
It will be sufficient to consider $A=B=1$, the general case follows 
analogously. We define
\ba \label{4.21}  
\hat{q}^j\;_k &:=& |\det(e)|^{1/2}\; e_a^j\;e^b_k\; q^a_b\;\;
\Leftrightarrow\;\; q^a_b=|\det(e)|^{-1/2}\; e^a_j\;e_b^k\; \hat{q}^j_k
\nonumber\\
\hat{p}_j\;^k &:=& |\det(e)|^{-1/2}\; e^a_j\;e_b^k\; p_a^b\;\;
\Leftrightarrow\;\; p_a^b=|\det(e)|^{1/2}\; e_a^j\;e^b_k\; \hat{p}_j^kk
\ea
and 
where $e^a_j e_a^k=\delta_j^k$.
Then the symplectic potential can be written with $P^a_j$ the momentum 
conjugate to $e_a^j$
\ba \label{4.22}
\Theta 
&=& \int \; d^Dx\; 
\{
P^a_j\;[\delta e_a^j]+p_a^b\;[\delta q^a_b]
\}
\nonumber\\
&=& \int \; d^Dx\; 
\{
P^a_j\;[\delta e_a^j]+\hat{p}_j^k \;[\delta \hat{q}^j_k]+
p_a^b \hat{q}^j_k [\delta (e^a_j e_b^k |\det(e)|^{-1/2})]
\}
\nonumber\\
&=& \int \; d^Dx\; 
\{
P^a_j\;[\delta e_a^j]+\hat{p}_j^k \;[\delta \hat{q}^j_k]+
p_a^b q^c_d  e_c^j e^d_k |\det(e)|^{1/2}
([\delta e^a_j] e_b^k |\det(e)|^{-1/2}
\nonumber\\
&& +e^a_j [\delta e_b^k] |\det(e)|^{-1/2}
+e^a_j e_b^k [\delta |\det(e)|^{-1/2}])
\}
\nonumber\\
&=& \int \; d^Dx\; 
\{
P^a_j\;[\delta e_a^j]+\hat{p}_j^k \;[\delta \hat{q}^j_k]+
p_a^b q^c_d  (
[\delta e^a_j] e_c^j \delta_b^d 
+\delta^a_c e^d_k [\delta e_b^k] 
+\delta^a_c \delta_b^d |\det(e)|^{1/2} [\delta |\det(e)|^{-1/2}])
\}
\nonumber\\
&=& \int \; d^Dx\; 
\{
P^a_j\;[\delta e_a^j]+\hat{p}_j^k \;[\delta \hat{q}^j_k]+
p_a^b (-e^a_j] [\delta e_c^j] q^c_b 
+q^a_d e^d_k [\delta e_b^k] 
-\frac{1}{2} q^a_b e^c_j [\delta e_c^j])
\}
\nonumber\\
&=& \int \; d^Dx\; 
\{
[P^a_j
- p_c^b q^a_b e^c_j   
+p_c^a q^c_b e^b_j  
-\frac{1}{2} p_c^b q^c_b e^a_j]\;
[\delta e_a^j]+\hat{p}_j^k \;[\delta \hat{q}^j_k]
\}
\nonumber\\
&=& \int \; d^Dx\; 
\{
\hat{P}^a_j\; [\delta \hat{e}_a^j]
+\hat{p}_j^k \;[\delta \hat{q}^j_k]
\}
\ea
where 
\be \label{4.23}
\hat{e}_a^j:=e^a_j,\;
\hat{P}^a_j:=
P^a_j
- p_c^b q^a_b e^c_j   
+p_c^a q^c_b e^b_j  
-\frac{1}{2} p_c^b q^c_b e^a_j
\ee
The calculation reveals that the map from 
$(e_a^j,P^a_j),\;(q^a_b,p_a^b)$ to
$(\hat{e}_a^j,\hat{P}^a_j),\;(\hat{q}^j_k,\hat{p}_j^k)$ 
is a canonical transformation. The first term in $\hat{P}^a_j-P^a_j$ 
is due to the contra-variant index $a$ in $q^a_b$, 
the second due to the co-variant index $b$ and the third due to the 
switch from density weight zero to 1/2. For a tensor field $q$ of type
$(A,B,w=0)$ we would have correspondingly $A$ terms of the contra-variant
type, $B$ terms of the co-variant type and one term of the density type.
The contributions from all tensor fields involved simply add up 
in $\hat{P}^a_j-P^a_j$. Thus, we have also treated also the general case.
In the case above, inverse map of the canonical transformation is given
by  
\ba \label{4.24}
q^a_b &=& \hat{e}^a_j\;\hat{e}_b^k \;|\det(\hat{e})|^{-1/2}\; \hat{q}^j_k
\nonumber\\
p_a^b &=& \hat{e}_a^j\;\hat{e}^b_k \;|\det(\hat{e})|^{1/2}\; \hat{p}_j^k
\nonumber\\
e_a^j &=& \hat{e}_a^j
\nonumber\\
P^a_j &=& \hat{P}^a_j
+p_c^b q^a_b e^c_j   
-p_c^a q^c_b e^b_j  
+\frac{1}{2} p_c^b q^c_b e^a_j
\nonumber\\&=& \hat{P}^a_j
+\hat{p}_j^k \hat{q}^l_k\;\hat{e}^a_l    
-\hat{p}_k^l \hat{q}^k_j \hat{e}^a_l   
+\frac{1}{2} \hat{p}_l^k \hat{q}^l_k \hat{e}^a_j
\ea
These expressions have to be substituted into the HC 
and the various Gauss constraints (GC) which of course make them 
algebraically more complicated but it poses no problem in principle 
because (\ref{4.24})
are just algebraic relations that do not involve solving PDE's. In particular,
since in the HC the $P^a_j, p_a^b$ appear without derivatives, there are no 
new derivative terms appearing there. 

After these transformations all tensor fields and fermion fields 
together with their conjugate momenta are scalar densities of weight 1/2,
except for the D-Bein conjugate pair $(\hat{e}_a^j,\hat{P}^a_j)$. The 
transformation cannot applied to $q_a^k:=e_a^k$ viewed as D co-vector fields
labelled index $k$ itself because $\hat{q}_j^k=e^a_j q_a^k |\det(e)|^{1/2}
=\delta_j^k |\det(e)|^{1/2}$ cannot provide a canonical transformation 
as $\hat{q}_j^k$ captures only one degree of freedom rather than $D^2$.
Note also that if $q,p$ is a tensor of type $(A,B,w=0), (B,A,w=1)$ 
and transforms 
in some representation $r,r^\ast$ 
of a Yang-Mills gauge group or (the covering group
of) SO(D), then $\hat{q},\hat{p}$ are tensors of type 
$(A=0,B=0,w=1/2),(B=0,A=0,w=1/2)$ transforming
in 
$[\otimes^A R]\otimes [\otimes^B R^\ast]\otimes r,\;
[\otimes^A R^\ast]\otimes [\otimes^B R]\otimes r^\ast$
where $R$ is the 
defining representation of SO(D) and $R^\ast$ its dual. Because of this 
additional gauge symmetries of Yang-Mills type and their corresponding 
Gauss constrains remain unbroken and one simply has to to substitute 
by the new fields.  

More in detail, for the gravitational Gauss constraint
for the frame rotation group SO(D) there are also no new derivative terms 
appearing because it also involves all fields only algebraically
\cite{13} when written in terms of $e_a^j, P^a_j$. When substituting  
$\hat{e},\hat{P}$ new terms depending on $\hat{q},\hat{p}$ appear that 
take their behaviour under frame rotations correctly into account
\ba \label{4.25}
G_{kj}
&=& 
2\delta_{l[k} e_a^l \;(P^a_{j]}
=2\delta_{l[k} \hat{e}_a^l \;(\hat{P}^a_{j]}
+\hat{p}_{j]}^m \hat{q}^n_m\;\hat{e}^a_n    
-\hat{p}_m^n \hat{q}^m_{j]} \hat{e}^a_n   
+\frac{1}{2} \hat{p}_n^m \hat{q}^n_m \hat{e}^a_{j]})
\nonumber\\
&=&
2\delta_{l[k} \hat{e}_a^l \;\hat{P}^a_{j]}
+2\delta_{l[k} \hat{e}_a^l
(
\hat{p}_{j]}^m \hat{q}^l_m    
-\hat{p}_m^l \hat{q}^m_{j]}    
+\frac{1}{2} \hat{p}_n^m \hat{q}^n_m \delta^l_{j]})
\nonumber\\
&=&
2\delta_{l[k} \hat{e}_a^l \;\hat{P}^a_{j]}
+2\delta_{l[k} \hat{e}_a^l
(\hat{p}_{j]}^m \hat{q}^l_m-\hat{p}_m^l \hat{q}^m_{j]})    
\ea
For YM type 
constraints we have only algebraic dependence on the fermion fields which 
are not affected by above transformation, only algebraic dependence for 
the tensor fields and their momenta which are different from the YM 
connection and its canonical momentum while the connection contribution
has a specific form. Consider e.g. a G Gauss constraint
\be \label{4.26} 
G_\alpha=(\partial_a E^a_\alpha+f_{\alpha\beta}\;^\gamma A_a^\beta E^a_\gamma
+p_a^{bI}\;[\tau^r_\alpha]_I\;^J q^a_{bJ}
\ee
where $\tau_\alpha$ are generators of the Lie algebra $L(G)$ with stucture
constants $[\tau_\alpha,\tau_\beta]=f_{\alpha\beta}\;^\gamma \tau_\gamma$,
$\tau^r_\alpha=[\frac{d}{dt} r(\exp(t\tau_\alpha))]_{t=0}$ representation 
matrices of $L(G)$ in the representation $r$ in which $q^a_b$ transforms 
while $p_a^b$ transforms in the dual (or contra-gredient) 
representation $r^\ast$ with $[r^\ast(g)]^I\;_J:=[r(g^{-1})]_J\;^I$. 
Substituting 
$A_a^\alpha=\hat{A}_j^\alpha e_a^j |\det(e)|^{-1/2},\;
E^a_\alpha=\hat{E}^j_\alpha e^a_j |\det(e)|^{1/2}$
and (\ref{4.21}) we find for (\ref{4.26})
\be \label{4.27}
G_\alpha=\partial_a(|\det(\hat{e})|^{1/2} \hat{e}^a_j 
\hat{E}^j_\alpha)+f_{\alpha\beta}\;^\gamma 
\hat{A}_j^\beta \hat{E}^j_\gamma
+\hat{p}_j^{kI}\;[\tau^r_\alpha]_I\;^J \hat{q}^j_{kJ}
\ee
The interesting part is now the SDC for the benefit of 
which we invented this canonical 
transformation. The fermion fields are unaffected by it, hence it suffices 
to consider ($L_u$ denotes the Lie derivative with respect to $u$ and 
$\sigma:=|\det(e)|^{1/2}|$) 
\ba \label{4.28}
C(u) 
&=& \int\; d^Dx\; \{P^a_j \;[L_u e^j]_a
+p_a^b\;[L_u q]^a_b\}
\nonumber\\
&=& \int\; d^Dx\; \{P^a_j \;[L_u e^j]_a
+p_a^b\;([L_u \sigma^{-1} \hat{q}^j_k] e^a_j e_b^k
+\sigma^{-1} \hat{q}^j_k (
[L_u e_j]^a e_b^k + e^a_j [L_u e^k]_b)) 
\}
\nonumber\\
&=& \int\; d^Dx\; \{P^a_j \;[L_u e^j]_a
+\hat{p}_j^k u^a \sigma [\sigma^{-1} \hat{q}^j_k]_{,a} 
+ \hat{q}^j_k (\hat{p}_l^k e_a^l [L_u e_j]^a + \hat{p}_j^l e^b_l [L_u e^k]_b)) 
\}
\nonumber\\
&=& \int\; d^Dx\; \{P^a_j \;[L_u e^j]_a
+\hat{p}_j^k u^a ([\hat{q}^j_k]_{,a}-\frac{1}{2} \hat{q}^j_k e^b_l e^l_{b,a}) 
+\hat{q}^j_k (-\hat{p}_l^k e^a_j [L_u e^l]_a + \hat{p}_j^l e^b_l [L_u e^k]_b)) 
\}
\nonumber\\
&=& \int\; d^Dx\; \{P^a_j \;[L_u e^j]_a
+u^a \hat{p}_j^k [\hat{q}^j_k]_{,a}
-\frac{1}{2} \hat{p}_j^k \hat{q}^j_k e^b_l ([L_u e^l]_b-u^a_{,b} e^l_a) 
-\hat{q}^l_k \hat{p}_j^k e^a_l [L_u e^j]_a + 
+\hat{q}^k_j \hat{p}_k^l e^a_l [L_u e^j]_a)) 
\}
\nonumber\\
&=& \int\; d^Dx\; \{[P^a_j
-\hat{q}^l_k \hat{p}_j^k e^a_l 
+\hat{q}^k_j \hat{p}_k^l e^a_l 
-\frac{1}{2} \hat{p}_k^l \hat{q}^k_l e^a_j]\; [L_u e^j]_a
+u^a (\hat{p}_j^k [\hat{q}^j_k]_{,a}
-\frac{1}{2} [\hat{p}_j^k \hat{q}^j_k]_{,a})
\}
\nonumber\\
&=& \int\; d^Dx\; \{\hat{P}^a_j [L_u \hat{e}^j]_a
+\frac{1}{2}(\hat{p}_j^k [\hat{q}^j_k]_{,a}
-[\hat{p}_j^k]_{,a} \hat{q}^j_k)
\}
\ea
In summary, after above canoncial transformations 
the gravitational field $\hat{e}_a^j,\hat{P}^a_j$ retains its tensorial 
tarnsformation behaviour while all matter fields have been transformed into
scalar densities of weight 1/2. The representations $R,R^\ast$ 
can be identified
because we have the Euclidian metric $\delta_{jk},\delta^{jk}$ to intertwine
the representations $R,R^\ast$. Thus we can identify e.g.
\be \label{4.29}
\hat{q}^j_k\equiv \hat{q}_{jk},\;\;
\hat{p}_j^k\equiv \hat{p}_{jk},\;\;
\ee
which is in fact a canonical transformation.
Thus it is no longer relevant whether $j$ is contra-variant or co-variant,
just the sequence in which these indices appear is relevant. This enables us 
to give geometrical meaning to the annihilation operator
\be \label{4.30}
\hat{A}_{jk}:=2^{-1/2}[\hat{q}_{jk}-i\hat{q}_{jk}]
\ee
By the results of the previous subsection, the resulting Fock representation
is a non-anomalous, background independent and continuous unitary 
representation of the SDG for all matter fields. 

As far as the 
geometrical degrees of freedom are concerned, we may resort to a unitary 
but discontinuous representation of Narnhofer-Thirring type
\cite{22}, e.g. defined by $\hat{e}_a^j(x)\Omega=0$ similar 
as the one used in Loop Quantum Gravity \cite{9,12}. The reason for this 
choice is that the gravitational constraint $C^G(u)=\int\; d^Dx\;
\hat{P}^a_j [L_u \hat{e}]_a^j$ just derived can be ordered such that all 
terms depending on $\hat{e}$ are to the right while it does not allow 
for complex annihilation and creation operators with no $(A^\ast)^2$ terms.
Then in this ordering the constraint formally annihilates the vacuum.  
The representation 
space of that representation has as dense domain the Weyl states 
$w[0,g]\Omega:=\exp(i\int\; d^Dx\; g_a^j \hat{P}^a_j)$ with inner product
\be \label{4.31}
<w[0,g]\Omega,w[0,\tilde{g}]\Omega>=\delta_{g,\tilde{g}}
\ee
where on the right hand side we have the Kronecker symbol, i.e. the 
two vectors have non-vanishing inner product if and only if 
$g_a^j(x)=\tilde{g}_a^j(x)$ for all $a,j,x$. The relation (\ref{4.31}) 
is a simple consequence of the Weyl relations and $\hat{e}_a^j(x)\Omega=0$:
Let $w(f,0)=\exp(i\int\;d^Dx\; f^a_j \hat{e}_a^j)$ then 
\ba \label{4.32}
&&<w[0,g]\Omega,w[0,\tilde{g}]\Omega>=
<\Omega,w[0,\tilde{g}-g]\Omega>
=<w[f,0]\Omega,w[0,\tilde{g}-g]\Omega>
\nonumber\\
&=& <\Omega,w[-f,0]\;w[0,\tilde{g}-g]\;w[f,0]\Omega>
=\exp(-i\int\;dx\;f^a_j[\tilde{g}-g]_a^j)\;
<w[0,g]\Omega,w[0,\tilde{g}]\Omega>
\ea
for all $f$. Picking $f^a_j=s\;[\tilde{g}-g]_a^j$ for any real number $s$
it follows that (\ref{4.32}) must vanish unless $g=\tilde{g}$ almost 
everywhere, thus $g=\tilde{g}$ everywhere when the space of $g$ consists 
of at least continuous functions.

Then $U(\varphi) w[0,g]\Omega=w[0, g\circ \varphi^{-1}]\Omega$ is a 
discontinuous but unitary non-anomalous representation of the SDG on the 
non-separable Hilbert space defined by (\ref{4.31}). 

\subsection{Half-density transformation for all geometry and matter 
fields}
\label{s4.4}

One can avoid the discontinuous representation in the geometry sector 
sketched at the end of the previous subsection provided that the 
matter content consists of at least D scalar fields $\phi^I$ with 
conjugate momentum $\pi_I$ with $I=1,..,D$ and natural density weights 
zero and unity each, provided that the matrix with entries 
\be \label{4.33}
h_a^I:=\phi^I_{,a} 
\ee
is non-degenerate, i.e. the map $x\mapsto \phi(x)$ provides a {\it dynamical
diffeomorphism} in the sense that the fields $\phi^I$ are not components 
of a background coordinate transformation but rather
enter themselves into
the SDC and HC. Then (\ref{4.33}) provides a flat (i.e. the metric 
$h_{ab}:=\delta_{IJ} h^I_a h^J_b$ is flat) co-D-Bein with inverse $h^a_I$.
In that case, in a fist step we use the extrinsic 
strategy of section \ref{s4.3} to 
turn all
fields into scalar densities of weight 1/2 {\it including the gravitational
field} but {\it excluding} the fields $\phi^I,\pi_I$ which keep their 
density weight zero and unity respectively when carrying out the 
corresponding canonical transformation. Then, in a second step, we use 
the intrinsic strategy of section \ref{s4.2} in order to obtain the transformed
scalar fields $\hat{\phi}^I,\hat{\pi}_I$ of of equal density weight 1/2.
Surprisingly this is possible without solving PDE's despite the fact that 
the canonical transformations involve not only algebraic but also 
derivative expressions of the fields.

Proceeding to the details, it will again be sufficient to exemplify this 
for a tensor field pair $q^a_b,p_a^b$. We set with $\sigma=|\det(h)|^{1/2}$
\be \label{4.34}     
\hat{q}^I_J=\sigma\; h_a^I\; h^b_J\;q^a_b,\;\;
\hat{p}_I^J=\sigma^{-1/2}\; h^a_I\; h_b^J\;p_a^b
\ee
which up to the factor $\sigma^{\pm 1}$ is nothing but the pull-back 
map under the dynamical diffeomorphism. 
Then the symplectic potential is 
\ba \label{4.35}
\Theta &=& \int\; d^Dx\;
\{
\pi_I\;[\delta\phi^I]+p_a^b\; [\delta q^a_b]
\}
\nonumber\\
&=& \int\; d^Dx\;
\{
\pi_I\;[\delta\phi^I]+\hat{p}_I^J\;h_a^I h^b_J \sigma [\delta 
(h^a_M h_b^N \sigma^{-1} \hat{q}^M_N)]
\}
\nonumber\\
&=& \int\; d^Dx\;
\{
\pi_I\;[\delta\phi^I]+\hat{p}_I^J \;[\delta \hat{q}^I_J]
+\hat{p}_I^J\;(
[\delta h^a_M] h^a_I \hat{q}^M_J
+h_b^J [\delta h_b^N] \hat{q}^I_N
-\frac{1}{2} h^a_M [\delta h_a^M]\hat{q}^I_J)
\}
\nonumber\\
&=& \int\; d^Dx\;
\{
\pi_I\;[\delta\phi^I]+\hat{p}_I^J \;[\delta \hat{q}^I_J]
- \hat{p}_I^J h^a_L [\delta h^a_I] \hat{q}^L_J
+\hat{p}_L^J h^a_J [\delta h_a^I] \hat{q}^L_I
-\frac{1}{2}\hat{p}_L^J h^a_I [\delta h_a^I]\hat{q}^L_J
\}
\nonumber\\
&=& \int\; d^Dx\;
\{
[\pi_I+(
\hat{p}_I^J \hat{q}^L_J h^a_L
-\hat{p}_L^J  \hat{q}^L_I h^a_J
+\frac{1}{2}\hat{p}_L^J  \hat{q}^L_J h^a_I)_{,a}]
[\delta\phi^I]+\hat{p}_I^J \;[\delta \hat{q}^I_J]
\}
\ea 
We thus obtain the canonical transformation (\ref{4.34}) and 
\be \label{4.36}
\hat{\phi}^I=\phi^I,\;
\hat{\pi}_I=
\pi_I+
(h^c_I p_c^b q^a_b 
-p_c^a  \hat{q}^c_b h^b_I 
+\frac{1}{2} p_b^c  q^b_c h^a_I)_{,a}
\ee
Note that the round bracket in (\ref{4.36}) is a vector density so that 
its divergence is background independent and yields a scalar density. We 
can invert (\ref{4.34}) and (\ref{4.36}) to find with 
$\hat{h}_a^I=\hat{\phi}^I_{,a}$ etc.
\ba \label{4.37}
&& q^a_b=\hat{\sigma}^{-1} \hat{h}^a_I \hat{h}_b^J \hat{q}^I_J,\;\;
p_a^b=\hat{\sigma} \hat{h}_a^I \hat{h}^b_J \hat{p}_I^J
\nonumber\\
&& \phi^I=\hat{\phi}^I,\;\;
\pi_I=\hat{p}_I
-(
\hat{p}_I^J \hat{q}^L_J \hat{h}^a_L
-\hat{p}_L^J  \hat{q}^L_I \hat{h}^a_J
+\frac{1}{2}\hat{p}_L^J  \hat{q}^L_J \hat{h}^a_I)_{,a}
\ea
The reason why this can be done is that the canonical transformation does 
not depend on derivatives of $\pi_I$ and is trivial for $\phi^I$. As 
compared to the previous subsection, substituting this into the 
HC now will yield more complicated expressions due to the derivatives acting 
on the momenta $\hat{p}_I^J$ but can be done explicitly.

The computation of the SDC can be based on the calculation of (\ref{4.28}) 
if substitute $\hat{q}^j_k,\hat{p}_j^k,\hat{e}_a^j,\hat{e}^a_j$ by 
$\hat{q}^I_J,\hat{p}_I^J,\hat{h}_a^I,\hat{h}^a_I$ and gives immediately 
\ba \label{4.38}
C(u) &=& \int\; d^Dx\; 
\{
\pi^I\;[L_u \phi^I] 
+p^a_b\; [L_u q]^a_b
\nonumber\\
&=& 
\int\; d^Dx\; 
\{
\pi^I\;[L_u \phi^I]
+u^a \hat{p}_I^J [\hat{q}^I_J]_{,a}
-\frac{1}{2} \hat{p}_J^L \hat{q}^J_L h^a_I ([L_u h^I]_a-u^b_{,a} h^I_b) 
-\hat{q}^L_J \hat{p}_I^J h^a_L [L_u h^I]_a + 
+\hat{q}^J_I \hat{p}_J^L h^a_L [L_u h^I]_a)
\}
\nonumber\\
&=& 
\int\; d^Dx\; 
\{
[\pi^I\;
+(\frac{1}{2} \hat{p}_J^L \hat{q}^J_L h^a_I 
+\hat{q}^L_J \hat{p}_I^J h^a_L 
-\hat{q}^J_I \hat{p}_J^L h^a_L)_{,a}] [L_u \phi^I]+\frac{1}{2}u^a 
(\hat{p}_I^J [\hat{q}^I_J]_{,a}-[\hat{p}_I^J]_{,a} [\hat{q}^I_J)
\}
\nonumber\\
&=&
\int\; d^Dx\; u^a\;
\{\hat{\pi}_I \hat{\phi}_{,a}+\frac{1}{2}
(\hat{p}_I^J [\hat{q}^I_J]_{,a}-[\hat{p}_I^J]_{,a} [\hat{q}^I_J)
\}
\ea
where in the last step we used $[L_u h^I]_a=[L_u \phi^I]_{,a}$. 

Thus 
$\hat{\phi}_I,\hat{\pi}_I$ are still scalar densities of weight zero and one 
respectively. We now can apply the intrinsic transformation of section 
\ref{s4.2} to each of them separately without involving the 
$\hat{q}^I_J,\hat{p}_I^J$ any more and obtain canonical pairs of scalar
half densities $\tilde{\phi}^I,\tilde{\pi}_I$. Then {\it all geometry
and matter degrees of freedom} are transformed 
into scalar densities of weight 1/2. 
Then the {\it background independent, anomaly free, strongly 
continuous, unitary Fock representation} of the SDG constructed in sections 
(\ref{s4.1}) and (\ref{s4.2}) can be constructed. For instance the scalar,
YM  and geometry contribution to the SDC would be  
\be \label{4.39}
C(u)=\frac{1}{2}\int\;d^Dx\;u^a
\{
\tilde{\pi}_I \;\tilde{\phi}^I_{,a}-\tilde{\pi}_{I,a} \;\tilde{\phi}^I+
+\hat{E}^I_\alpha \hat{A}_{I,a}^\alpha-\hat{E}^I_{\alpha,a} \hat{A}_I^\alpha
+\hat{P}^I_j \hat{e}_{I,a}^j-\hat{P}^I_{j,a} \hat{A}_I^j
\}
\ee
and we obtain the annihilators by ignoring the position of the scalar species 
index (formally moving it with $\delta_{IJ},\delta^{IJ}$)
\be \label{4.40}
\tilde{a}_I=2^{-1/2}[\tilde{\phi}^I-i\tilde{\pi}_I],\;
\hat{a}_I^\alpha=2^{-1/2}[\hat{A}_I^\alpha-i\hat{E}^I_\alpha],\; 
\hat{a}_I^j=2^{-1/2}[\hat{e}_I^j-i\hat{P}^I_j]
\ee
Of course under the assumptions of this subsection, one may also 
use the matter sector given by $\phi^I, \pi_I$ to provide 
a preferred gauge fixing \cite{8} to the coordinates $y^I:=\phi^I(x)$.
Then one just has to substitute 
the solution $\pi_I=-h_a^I C'_a$ of the SDC into the HC to reduce the system
and representations of the SDC no longer have to be constructed.

\section{Fock representations without half density transformations}
\label{s5}

While the Fock representations constructed in the previous section are 
geometrically natural, they are far away from the background dependent 
Fock representations of QFT in CST (curved spacetime) defined using 
e.g. a background Laplacian. This is no constradiction because the Fock
reps. of the SDC are constructed to {\it define} the SDC and of 
{\it non SD invariant operators}, that is, it is a {\it kinematical 
reprsentation}. Its solutions 
(kernel) are expected to be distributions (generalised zero eigenvectors)
and one must construct a new Hilbert space of solutions and a new 
{\it physical representation} of SD invariant operators (observables). 
The physical representation may then be based on that more familiar type   
of Fock representations. This is even expected because one can instead 
of defining a quantum SDC on a kinematical HS reduce the theory classically
imposing gauge fixing conditions that select a coordinate system and 
feed in information about a preferred background that may then be used to
construct those more familiar representations, possibly perturbatively. 

Yet the half density transformations render the HC even more complicated 
than it already is because of the high non-linearity of the necessary 
canonical transformations. Hence one may wonder whether Fock representations
exist, background dependent or not, 
which do not make use of those canonical transformations, keep the density 
weight as it is and do not mix field species. In this section we show 
for the illustrative example of a scalar field on $\sigma=\mathbb{R}^D$ 
that such Fock reps. based on a reflection invariant and translation invariant 
covariance of the corresponding Gaussian measure do not exist in $D>1$ while 
they do exist with central extension in $D=1$, We will also be able to 
pin point the reason for this dimension dependence. We will argue that 
this obstruction is also valid for other topologies, higher tensor rank and 
covariances without symmetries.\\
\\
We begin generally and consider real valued tensor fields $q$ of type 
$(A,B,w=0)$ on $\sigma$ with conjugate momentum 
$p$ of type $(B,A,w=1)$. We use the multi index notation
$q^\mu(x):=q^{a_1..a_A}\;_{b_1..b_B}(x)$ and 
$p_\mu(x):=p_{a_1..a_A}\;^{b_1..b_B}(x)$ and and consider the 1-particle 
HS $L_2(\sigma)$ on complex valued tensor fields of that type 
with inner product $<f,g>=\int\; d^Dx\; 
[f^\mu(x)]^\ast \delta_{\mu\nu} g^\nu(x)$ which 
of course breaks SD invariance. We define likewise 
$p^\mu(x)=\delta^{\mu\nu}p_\nu(x), q_\mu(x)=\delta_{\mu\nu} q^\nu(x)$.
Let
$\kappa_{\mu,\nu}(x,y)=\kappa_{\nu,\mu}(y,x)$ be a positive,
invertible integral kernel on $L_2$, i.e. $<f,\kappa f>\ge 0$ where 
$[\kappa f]_\mu(x):=\int\; d^Dx\; \kappa_{\mu,\nu}(x,y) q^\nu(y)$.

We define a Fock representation of the CCR using the annihilation operator
\be \label{5.1}
A_\mu(x):=2^{-1/2}[\kappa q-i\kappa^{-1}p]_\mu(x),\;
A(f):=<f,A>_{L_2}
\ee
and the vacuum $A_\mu(x)\Omega=0$. The question is now for which $\kappa$ if 
any, the SDC, normal ordered with respect to chosen $\kappa$, becomes 
a well defined operator on the corresponding Fock space ${\cal H}$
which is the completion of the span of vectors 
$A(f_1)^\ast..A(f_N)^\ast \Omega$. We recall that ${\cal H}=L_2(Q,d\mu)$ is 
an $L_2$ space with Gaussian measure $\mu$ over a quantum configuration
space $Q$ of tempered 
distributional, real valued tensor fields of type $(A,B,w=0)$
\cite{23}. The covariance of that Gaussian measure is given by 
$C=(2\kappa)^{-2}$ as can be seen by computing the generating functional
for real valued $f$
\ba \label{5.2}
\mu(\exp(i<f,.>)) &=& \int_Q\; d\mu(q)\;e^{i<f,q>}=
<\Omega,e^{i<f,q>}\Omega>
\\
&=&
<\Omega,e^{i 2^{-1/2}[A(\kappa^{-1}f)^\ast+A(\kappa^{-1} f)]}\Omega>
=\exp(-\frac{1}{4}<f,\kappa^{-2} f>)=:\exp(-<f,C f>)
\nonumber
\ea
where we used the BCH formula.

To examine this question we decompose the SDC into annihilation and creation 
operators and normal order. We have (focussing just on this type of tensor 
field because in this intrinsic construction the contributions from different
field species mutually commute)
\ba \label{5.3}
&& C(u) =\int\; d^Dx\; p^\mu\; [L_u q]_\mu
\\
&=&\int\; d^Dx\; p^\mu\; \{
u^a q_{\mu,a}
-\sum_{k=1}^A\;u^{a_k}_{,e} q^{a_1..\hat{a}_k e ..a_A}\;_{b_1..b_B}
+\sum_{l=1}^B\;u^e_{,b_l} q^{a_1..a_A}\;_{b_1..\hat{b}_l e ..b_B}
\}
\nonumber\\
&=& \frac{1}{2}\int\; d^Dx\; [p^\mu\; q_{\mu,a}-p^\mu_{,a} q_\mu]\;u^a
\nonumber\\
&&+\int\; d^Dx\; p^\mu \{
-\frac{1}{2} u^a_{,a} 
\prod_{m=1}^A \delta^{a_m}_{c_m}\prod_{l=1}^B \delta^{d_n}_{b_n}
-\sum_{k=1}^A\;u^{a_k}_{,e}
\delta^e_{c_k}\prod_{k\not=m=1}^A \delta^{a_m}_{c_m}\prod_{n=1}^B 
\delta^{d_n}_{b_n}
+\sum_{l=1}^b\;u^e_{,b_l}
\delta^{d_l}_e \prod_{m=1}^A \delta^{a_m}_{c_m}\prod_{l\not=n=1}^B 
\delta^{d_n}_{b_n}\} q^{c_1..c_A}_{d_1..d_B}
\}
\nonumber\\
&=:& \frac{1}{2}\int\; d^Dx\; [p^\mu\; q_{\mu,a}-p^\mu_{,a} q_\mu]\;u^a
+\int\; d^Dx\; p_\mu T(u)^\mu_\nu q^\nu
\nonumber\\
&=:& D(u)+S(u)
\ea
Here we have split $C(u)$ into two pieces $D(u), S(u)$. The first 
``differential'' piece 
$D(u)$ contains no derivatives of $u$ and by itself generates 
diffeomorphisms treating $q^\mu,p_\mu$ as if they were 
scalar half densities.
The second ``scaling'' piece $S(u)$ contains no derivatives of $q,p$ but 
contains a matrix valued, linear first order partial 
differential operator $u\mapsto Tu$. 

It is interesting to note that the $D(u)$ themselves represent the SDA because
they generates spatial diffeos treating $p_\mu, q^\mu$ as if they 
were scalar half densities  
\be \label{5.3a}
\{D(u),D(v)\}=-D([u,v])
\ee
Therefore, as $S(u)$ does not contain derivatives of $p_\mu,q^\mu$, $D(u)$ 
acts on $S(v)$ treating $p_\mu \; q^\nu$ as a scalar density of weight one
\be \label{5.3b}
\{D(u),S(v)\}=
\int \; d^Dx\; (T(v)^\mu\;_\nu)\; [u^a p_\mu q_\nu]_{,a} 
=-\int \; d^Dx\; p_\mu\; u^a (T(v)^\mu\;_{\nu,a}\; q_\nu
\ee
while explicit calculation gives immediately
\be \label{5.3c}
\{S(u),S(v)\}=-\int \; d^Dx\; p_\mu\; ([T(u),T(v)])^\mu\;_\nu\; q_\nu
\ee
Since $\{C(u),C(v)\}=-C([u,v])=-D([u,v])-S([u,v])$ we obtain without 
effort the interesting matrix identity
\be \label{5.3d}
T([u,v])=
u^a\;(T(v))_{,a}-v^a\;(T(u))_{,a}+[T(u),T(v)]
\ee
We expect the trouble to come 
from the $S(u)$ part 
while $D(u)$ has a chance to be well defined in the chosen Fock 
representation.   
To confirm this we substitute $q=2^{-1/2}\kappa^{-1}[A+A^\ast]$ and 
$p=i 2^{-1/2} [A-A^\ast]$ into $D(u)$ and normal order. We find 
\ba \label{5.4}
D(u) &=& 
:
\frac{i}{2}\;\int\;d^Dx\;u^a\;\delta^{\mu\nu}\;
\{
(\kappa[A-A^\ast])_\mu\;(\kappa^{-1}[A+A^\ast])_{\nu,a}
-(\kappa[A-A^\ast])_{\mu,a}(\kappa^{-1}[A+A^\ast])_\nu
\}
:
\nonumber\\
&=&
:
\frac{i}{2}\;\int\;d^Dx\;u^a\;\delta^{\mu\nu}\;
\{
-[\kappa A^\ast]_\mu\;[\kappa^{-1} A]_{\nu,a}
+[\kappa^{-1} A^\ast]_{\mu,a} \;[\kappa A]_\nu
-[\kappa A^\ast]_\mu\;[\kappa^{-1} A^\ast]_{\nu,a}
+[\kappa A]_\mu\;[\kappa^{-1} A]_{\nu,a}
\nonumber\\
&&
+[\kappa A^\ast]_{\mu,a}\;[\kappa^{-1} A]_\nu
-[\kappa^{-1} A^\ast]_\mu \;[\kappa A]_{\nu,a}
+[\kappa A^\ast]_{\mu,a}\;[\kappa^{-1} A^\ast]_\nu
-[\kappa A]_{\mu,a}\;[\kappa^{-1} A]_\nu
\}
:
\ea
If $\kappa$ is the identity kernel, the third and fourth term 
respectively would cancel with the the seventh and eighth term respectively
in agreement with the previous section. Then $D(u)$ would not contain 
$(A^\ast)^2$ terms and $D(u)$ would be well defined on the Fock space.
For general $\kappa\not={\rm id}$ we have 
\ba \label{5.6}
&&4\; ||D(u)\Omega||^2= 
\int\; d^D x\; \int\; d^Dy\;u^a(x)\;u^b(y)\; 
\delta^{\mu_1\nu_1}\;\delta^{\mu_2\nu_2}\;
\nonumber\\
&& <\Omega,
([\kappa A]_{\mu_1}\;[\kappa^{-1} A]_{\nu_1,a} 
-[\kappa A]_{\mu_1,a}\;[\kappa^{-1} A]_{\nu_1})(x)\;
([\kappa A^\ast]_{\mu_2}\;[\kappa^{-1} A]_{\nu_2,b} 
-[\kappa A^\ast]_{\mu_2,b}\;[\kappa^{-1} A^\ast]_{\nu_2})(y))\Omega
\nonumber\\
&=& \int\; d^D x\; \int\; d^Dy\;u^a(x)\;u^b(y)\; 
\delta^{\mu_1\nu_1}\;\delta^{\mu_2\nu_2}\;
\int\; d^{2D} u\; \int\; d^{2D}v\;
\nonumber\\
&& \{
\kappa_{\mu_1\rho_1}(x,u_1)\kappa^{-1}_{\nu_1,\sigma_1,x^a}(x,v_1)\;
\kappa_{\mu_2\rho_2}(y,u_2)\kappa^{-1}_{\nu_2\sigma_2,y^b}(y,v_2)
\nonumber\\
&& -\kappa_{\mu_1\rho_1}(x,u_1)\kappa^{-1}_{\nu_1,\sigma_1,x^a}(x,v_1)\;
\kappa_{\mu_2\rho_2,y^b}(y,u_2)\kappa^{-1}_{\nu_2\sigma_2}(y,v_2)
\nonumber\\
&& -\kappa_{\mu_1\rho_1,x^a}(x,u_1)\kappa^{-1}_{\nu_1,\sigma_1}(x,v_1)\;
\kappa_{\mu_2\rho_2}(y,u_2)\kappa^{-1}_{\nu_2\sigma_2,y^b}(y,v_2)
\nonumber\\
&&+\kappa_{\mu_1\rho_1,x^a}(x,u_1)\kappa^{-1}_{\nu_1,\sigma_1}(x,v_1)\;
\kappa_{\mu_2\rho_2,y^b}(y,u_2)\kappa^{-1}_{\nu_2\sigma_2}(y,v_2)]\;
\}
\nonumber\\
&& 
<\Omega,
A_{\rho_1}(u_1)\;A_{\sigma_1}(v_1)
A^\ast_{\rho_2}(u_2)\;A^\ast_{\sigma_2}(v_2)\Omega>
\ea
The last factor is 
\be \label{5.7}
\delta_{\rho_1\rho_2}\delta_{\sigma_1,\sigma_2}
\delta(u_1,u_2)\delta(v_1,v_2)
+
\delta_{\rho_1\sigma_2}\delta_{\sigma_1,\rho_2}
\delta(u_1,v_2)\delta(v_1,u_2)
\ee
We then integrate over $u_1,u_2,v_1,v_2$ and sum over 
$\rho_1,\rho_2,\sigma_1,\sigma_2$ in (\ref{5.6}) keeping in mind 
the symmetry properties of the kernel $\kappa$. We find
\ba \label{5.8}
&& 4 ||D(u)\Omega||^2= 
\int\; d^D x\; \int\; d^Dy\;u^a(x)\;v^b(y)\; 
\delta^{\mu_1\nu_1}\;\delta^{\mu_2\nu_2}\;
\nonumber\\ 
&& \{
\kappa^2_{\mu_1\mu_2}(x,y)\kappa^{-2}_{\nu_1\nu_2,x^a,y^b}(x,y)
+\delta_{\mu_1\nu_2}\delta_{\mu_2\nu_1}\delta_{,x^a}(x,y)\delta_{,y^b}(x,y)
-\kappa^2_{\mu_1\mu_2,y^b}(x,y)\;\kappa^{-2}_{\nu_1\nu_2,x^a}(x,y)
\nonumber\\
&&-\delta_{\mu_1\nu_2} \delta_{\mu_2\nu_1}\delta(x,y)\delta_{,x^a,y^b}(x,y)
-\kappa^2_{\mu_1\mu_2,x^a}(x,y)\kappa^{-2}_{\nu_1\nu_2,y^b}(x,y)
-\delta_{\mu_1\nu_2}\delta_{\mu_2\nu_1}\delta_{,x^a,y^b}(x,y)\delta(x,y)
\nonumber\\
&&+\kappa^2_{\mu_1\mu_2,x^a,y^b}(x,y)\;\kappa^{-2}_{\nu_1\nu_2}(x,y)
+\delta_{\mu_1\nu_2}\delta_{\mu_2\nu_1}\delta_{,x^a}(x,y)\;\delta_{,y_b}(x,y)
\}
\ea
This is the formula for general tensor fields on general $\sigma$ for general 
kernel. We now consider the simplest case of scalar fields on 
$\sigma=\mathbb{R}^D$ for a translation and reflection 
invariant kernel $\kappa(x,y)=\kappa(x-y)=\kappa(y-x)$. Then (\ref{5.8})
simplifies to 
\ba \label{5.9}
&& 4 ||D(u)\Omega||^2
=\int\; d^D x\; \int\; d^Dy\;u^a(x)\;u^b(y)\; 
\nonumber\\ 
&& \{
\kappa^2(x,y)\kappa^{-2}{,x^a,y^b}(x,y)
+\delta_{,x^a}(x,y)\delta_{,y^b}(x,y)
-\kappa^2_{,y^b}(x,y)\;\kappa^{-2}_{,x^a}(x,y)
-\delta(x,y)\delta_{,x^a,y^b}(x,y)
\nonumber\\
&& -\kappa^2_{,x^a}(x,y)\kappa^{-2}_{,y^b}(x,y)
-\delta_{,x^a,y^b}(x,y)\delta(x,y)
+\kappa^2_{,x^a,y^b}(x,y)\;\kappa^{-2}(x,y)
+\delta_{,x^a}(x,y)\;\delta_{,y_b}(x,y)
\}
\ea
The virtue of working on $\mathbb{R}^D$ is that we can invoke Fourier analysis
\be \label{5.10}
\kappa(x,y)=\int\frac{d^D k}{(2\pi)^D} \hat{\kappa(k)} e^{ik(x-y)},\;\;
\hat{\kappa}(k)=\hat{\kappa}(-k)=
\hat{\kappa}(k)^\ast
\ee
to work out the product of distributions in (\ref{5.9}). 
\ba \label{5.11}
&& 4 ||D(u)\Omega||^2
=\int\; d^D x\; \int\; d^Dy\;
(2\pi)^{-2D}\;\int\;d^D k\int\;d^D l\;
u^a(x)\;u^b(y)\; 
\int\;\frac{d^D l}{(2\pi)^D}\; e^{i[(x-y)(k+l)}
\nonumber\\ 
&& \{
\hat{\kappa}^2(k)\hat{\kappa}^{-2}(l)\; l_a\; l_b
+k_a l_b
-\hat{\kappa}^2(k)\hat{\kappa}^{-2}(l)\; k_b\; l_a
-l_a l_b
-\hat{\kappa}^2(k)\hat{\kappa}^{-2}(l)\; k_a\; l_b
-k_a k_b
+\hat{\kappa}^2(k)\hat{\kappa}^{-2}(l)\; k_a\; k_b
+k_a l_b
\}
\nonumber\\
&=& 
(\pi)^{-2D}\;\int\;d^D k\int\;d^D l\;
[\hat{u}^a(k+l)]^\ast\;\hat{u}^b(k+l)\; 
\nonumber\\ 
&& \{[\frac{\hat{\kappa}(k)}{\hat{\kappa}(l)}]^2-1\} [k_a-l_a][k_b-l_b]
\nonumber\\
&=&\frac{1}{2} 
(2\pi)^{-2D}\;\int\;d^D k\int\;d^D l\;
|\hat{u}^a(k+l)(k_a-l_a)|^2
\{
[\frac{\hat{\kappa}(k)}{\hat{\kappa}(l)}]^2+
+[\frac{\hat{\kappa}(l)}{\hat{\kappa}(k)}]^2-2
\}
\nonumber\\
&=&\frac{1}{2} 
(2\pi)^{-2D}\;\int\;d^D k\int\;d^D l\;
|\hat{u}^a(k+l)(k_a-l_a)|^2\;
\;[\frac{\hat{\kappa}(k)}{\hat{\kappa}(l)}-1]^2
\ea
where we introduce the Fourier transform of the vecor field and replaced 
the integral by half the sum of the integral and the integral with $k,l$ 
interchanged. As expected (\ref{5.11}) vanishes for the identity kernel 
$\hat{\kappa}=1$. We introduce $k^\pm=k\pm l$ then
\be \label{5.12}
32\;(2\pi)^{2D}||D(u)\Omega||^2
=\int\;d^D k^+\int\;d^D k^-\;
|\hat{u}^a(k^+)k_a^-|^2\;
\;[\frac{\hat{\kappa}([k^+ + k^-]/2)}{\hat{\kappa}([k^+ - k^-]/2)}-1]^2
\ee
We can use the decay or support properties of $u^a$ to make $\hat{u}^a$ 
decrease rapidly or even choose it 
of compact momentum support. Then the integral over $k^+$ converges. 
To make the 
integral over $k^-$ converge, it is sufficient 
to grant that the bracket term decays faster than 
$||l||^{D+2+1}$. There are an infinite number of kernels that accomplish
this, for example 
\be \label{5.13}
\hat{\kappa}^2(k):=c+\frac{1}{1+||k||^{2N} L^{2N}}
\ee
where $c$ is a positive 
constant and $L$ is some length scale since then the bracket term becomes
\ba \label{5.14}   
&& [\frac{\hat{\kappa}([k^+ + k^-]/2)}{\hat{\kappa}(k^+ - k^-]/2)}-1]^2
=[\hat{\kappa}((k^- -k^-)/2)]^{-2}\;
[\hat{\kappa}((k^- -k^-)/2)+\hat{\kappa}((k^- +k^-)/2)]^{-2}\;
\times 
\nonumber\\
&& [\hat{\kappa}^2((k^- +k^+)/2)-\hat{\kappa}^2((k^- +k^+)/2)]^2
\nonumber\\
&=& [\hat{\kappa}((k^- -k^-)/2)]^{-2}\;
[\hat{\kappa}((k^- -k^-)/2)+\hat{\kappa}((k^- +k^-)/2)]^{-2}\;
\times\nonumber\\
&& [1+[(L/2)^2||k^- - k^+||^2]^N]^{-2}\;
[1+[(L/2)^2||k^- + k^+||^2]^N]^{-2}\;
\times \nonumber\\
&& [(L/2)^{2N} ||k^- + k^+||^{2N}-(L/2)^{2N} ||k^- - k^+||^{2N}]^2
\ea
If $\hat{u}$ has compact momentum support we can assume that $k^+$ is finite 
and consider the limit $||k^-||\to \infty$ of (\ref{5.14}) at fixed 
$k^+$ (formally we integrate first $k^-$ and then  $k^+$). Then the first 
two factors in (\ref{5.14}) approach a constant of order unity, 
the third and fourth decay as 
$||k^-||^{-8N}$ while the fifth factor grows as $||k^-||^{4(N-1)+2}$ 
when $||k^-||\to\infty$. Hence we need $4N+2>D+3$ i.e. $N>(D+1)/4$. For 
$D=1$ we can choose e.g. any integer $N\ge 1$ while for $D=3$ we can choose 
any integer $N\ge 2$.\\ 
\\
This shows that there exist kernels on $\sigma=\mathbb{R}^D$ 
different from the identity for which 
$D(u)$ is a well defined operator on the corresponding Fock space for scalar
fields. For arbitrary tensor fields this still holds if we choose 
the above scalar kernel multiplied by the tensor space identity i.e.
$\kappa_{\mu\,\nu}(x,y)=\kappa(x,y)\delta_{\mu\nu}$. Thus we have established:
\begin{Proposition} \label{prop5.1} ~\\
Let $\kappa$ be a translation and reflection invariant positive, invertible
integral kernel for scalar fields on $\sigma=\mathbb{R}^D$ and 
$\kappa_{\mu\nu}=\kappa\delta_{\mu\nu}$ the corresponding kernel for the 
tensor fields of a given type. Then there are infinitely many choices of 
$\kappa$ such that $D(u)$ is a densely defined operator on the corresponding 
Fock space.
\end{Proposition}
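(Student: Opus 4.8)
The plan is to reduce the statement to the convergence of one momentum-space integral and then to exhibit an explicit infinite family of admissible kernels. After substituting $q=2^{-1/2}\kappa^{-1}[A+A^\ast]$, $p=i2^{-1/2}[A-A^\ast]$ and normal ordering as in (\ref{5.4}), $D(u)$ is a homogeneous quadratic in $A,A^\ast$ and splits into a number-lowering part ($\propto A^2$), a number-preserving part ($\propto A^\ast A$) and a number-raising part ($\propto (A^\ast)^2$). For the kernels we shall use $\hat\kappa$ is bounded above and bounded below away from zero, so $\kappa^{\pm 1}$ act boundedly on $L_2(\sigma)$; consequently the first two parts are quantizations of quadratic expressions with controlled kernels and map the dense domain of finite particle vectors with Schwartz-class components into the Fock space $\cal H$. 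The only possible obstruction to $D(u)$ being a densely defined operator is therefore the $(A^\ast)^2$ part, and on a finite particle vector this raises the particle number by two through the very kernel that also produces the two-particle state $D(u)\Omega$ (all other terms annihilate $\Omega$); by a Hilbert--Schmidt-type bound it suffices to show that this kernel is square integrable, i.e. that $\|D(u)\Omega\|^2<\infty$ for every smooth, compactly supported vector field $u$.

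The computation of $\|D(u)\Omega\|^2$ I would carry out exactly as in (\ref{5.6})--(\ref{5.12}): retain the $(A^\ast)^2$ terms of the normal-ordered $D(u)$, evaluate $\langle D(u)\Omega, D(u)\Omega\rangle$ by the two Wick contractions (\ref{5.7}) while carefully tracking the symmetry $\kappa_{\mu\nu}(x,y)=\kappa_{\nu\mu}(y,x)$ and the placement of derivatives, obtaining the general kernel formula (\ref{5.8}); then specialize to scalar fields on $\sigma=\mathbb{R}^D$ with $\kappa(x,y)=\kappa(x-y)=\kappa(y-x)$, Fourier transform via (\ref{5.10}), and do the $x,y$ integrations to reach (\ref{5.11}). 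The decisive manoeuvre is the symmetrization in the two loop momenta: replacing the integrand by half the sum of itself and its $k\leftrightarrow l$ image collapses the kernel-dependent factor to the manifestly non-negative perfect square $[\hat\kappa(k)/\hat\kappa(l)-1]^2$, which after the substitution $k^\pm=k\pm l$ gives the clean identity (\ref{5.12}). This is the one non-routine step; it removes any sign ambiguity and makes the convergence criterion transparent. Since $\hat u$ may be taken of compact support the $k^+$ integral is harmless, so everything hinges on the decay of $[\hat\kappa((k^++k^-)/2)/\hat\kappa((k^+-k^-)/2)-1]^2$ as $\|k^-\|\to\infty$ at fixed $k^+$: it must beat $\|k^-\|^{-(D+3)}$ to absorb the $|k^-_a|^2$ factor and the $d^D k^-$ measure.

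For the existence half I would insert the family (\ref{5.13}), $\hat\kappa^2(k)=c+(1+\|k\|^{2N}L^{2N})^{-1}$ with $c>0$, $L>0$, $N\in\mathbb{N}$, into (\ref{5.14}) and count powers: as $\|k^-\|\to\infty$ at fixed $k^+$ the prefactors tend to positive constants, the two denominators contribute $\|k^-\|^{-8N}$, and the difference of the two $\|k\|^{2N}$-type numerator terms contributes $\|k^-\|^{4(N-1)+2}$, so the bracket decays like $\|k^-\|^{-(4N+2)}$ and (\ref{5.12}) converges whenever $4N+2>D+3$, i.e. $N>(D+1)/4$ --- any integer $N\ge 1$ in $D=1$, any integer $N\ge 2$ in $D=3$, and so on. There are infinitely many such integers $N$ (and, for each, infinitely many pairs $(c,L)$), all giving genuinely distinct positive invertible kernels; for a tensor field of a given type one simply takes $\kappa_{\mu\nu}=\kappa\,\delta_{\mu\nu}$, whereupon (\ref{5.8}) decouples into $\#\{\mu\}$ copies of the scalar computation and the same conclusion holds. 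The bulk of the labour is the Wick-contraction and derivative bookkeeping that produces (\ref{5.8}); the genuinely delicate conceptual point is recognizing the perfect-square structure of (\ref{5.11})--(\ref{5.12}) after symmetrization, after which the power-counting in (\ref{5.14}) is elementary.
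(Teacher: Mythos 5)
Your proof follows the paper's argument essentially step for step: normal-order $D(u)$ into $A,A^\ast$ monomials, compute $\|D(u)\Omega\|^2$ via Wick contractions, pass to Fourier space on $\mathbb{R}^D$ with a translation- and reflection-invariant $\kappa$, symmetrize in the loop momenta to expose the non-negative bracket, and power-count against the explicit family $\hat\kappa^2(k)=c+(1+\|k\|^{2N}L^{2N})^{-1}$. The one genuine addition is your preliminary observation that since $\hat\kappa$ is bounded above and away from zero the number-lowering and number-preserving parts are harmless and only the $(A^\ast)^2$ part can obstruct, so that finiteness of $\|D(u)\Omega\|^2$ (the Hilbert--Schmidt condition on that kernel) actually suffices for dense definedness on finite-particle vectors; the paper leaves this reduction implicit, so your version closes a small logical gap rather than departing from the paper's route.
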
 
Note that for different choices of $c$ the representations (\ref{5.13}) are 
unitarily inequivalent because the Hilbert-Schmidt condition asks
that 
$\frac{\hat{\kappa}^2-[\hat{\kappa}']^2}{\hat{\kappa}^2+[\hat{\kappa}']^2}$
decays at least as $||k||^{-(D+1)}$.  

We 
now consider the scaling operator $S(u)$ for the same example in the 
scalar case. Abusing 
the notation we write it as $S(\lambda),\;\lambda=u^a_{,a}$. Then 
its normal ordered form is 
\ba \label{5.15}
S(\lambda) &=&
:
\frac{i}{2}\int \; d^Dx\;
\lambda\;(\kappa[A-A^\ast])(\kappa^{-1}[A+A^\ast]) 
:
\nonumber\\
&=&
\frac{i}{2}\int \; d^Dx\;
\lambda\;\{
-[\kappa A^\ast]\;[\kappa^{-1} A]
+[\kappa^{-1} A^\ast]\;[\kappa A]
+[\kappa A]\;[\kappa^{-1} A]
-[\kappa A^\ast]\;[\kappa^{-1} A^\ast]
\}
\ea
and applied to the vacuum we find 
\ba \label{5.16}
&& 4\; ||S(\lambda)\Omega||^2
=\int\; d^Dx\;\int\; d^Dy \lambda(x)\lambda(y)
\int\; d^{2D}u \int\; d^{2D}v\;
\kappa(x,u_1)\kappa^{-1}(x,v_1)
\kappa(y,u_2)\kappa^{-1}(y,v_2)
\times \nonumber\\
&&
<\Omega,A(u_1) A(v_1) A^\ast(u_2) A^\ast(v_2)\Omega
\nonumber\\
&=&\int\; d^Dx\;\int\; d^Dy \lambda(x)\lambda(y)
\int\; d^{2D}u \int\; d^{2D}v\;
\kappa(x,u_1)\kappa^{-1}(x,v_1)
\kappa(y,u_2)\kappa^{-1}(y,v_2)
\times \nonumber\\
&&
<\Omega,A(u_1) A(v_1) A^\ast(u_2) A^\ast(v_2)\Omega>
\nonumber\\
&=&\int\; d^Dx\;\int\; d^Dy \lambda(x)\lambda(y)
\int\; d^{2D}u \int\; d^{2D}v\;
[\kappa(x,u_1)\kappa^{-1}(x,v_1)
\kappa(y,u_2)\kappa^{-1}(y,v_2)]\;
\times \nonumber\\
&&
[\delta(u_1,u_2)\delta(v_1,v_2)+
\delta(u_1,v_2)\delta(v_1,u_2)]
\nonumber\\
&=&\int\; d^Dx\;\int\; d^Dy \lambda(x)\lambda(y)
[\kappa^2(x,y)\kappa^{-2}(x,y)+\delta(x,y)^2]
\nonumber\\
&\ge& \delta(0,0) ||\lambda||_{L_2}^2
\ea
which shows that the integral over $k^-$ has no chance to converge 
even for a general kernel. Thus we have established:
\begin{Proposition} \label{prop5.1a} ~\\
Let $\kappa$ be any
 positive, invertible
integral kernel for scalar fields on $\sigma=\mathbb{R}^D$ 
Then for no choice of $\kappa$ is 
$S(u)$ a densely defined operator on the corresponding 
Fock space.
\end{Proposition}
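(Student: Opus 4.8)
The plan is to show that the vector $S(u)\Omega$ fails to lie in the Fock space ${\cal H}$ as soon as the divergence $\lambda:=u^a{}_{,a}$ is not identically zero. This already settles the proposition: a normal ordered bilinear in $A,A^\ast$ can be a densely defined (closable) operator only on a domain on which such an expression makes sense, in particular on the finite-particle domain spanned by the $A(f_1)^\ast\cdots A(f_N)^\ast\Omega$, and that domain contains $\Omega$; hence $S(u)\Omega$ would have to be a genuine element of ${\cal H}$. Equivalently, the kernel of the creation--creation part of $S(u)$ would have to be square integrable, and up to a constant that kernel is exactly the two-particle vector $S(u)\Omega$.

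First I would start from the normal ordered form (\ref{5.15}) of $S(\lambda)$ and observe that, acting on the vacuum, every term carrying an annihilator to the right of a creator is killed, so that only the pure-creation piece $-\tfrac{i}{2}\int d^Dx\,\lambda(x)\,[\kappa A^\ast](x)\,[\kappa^{-1}A^\ast](x)\,\Omega$ contributes. This produces, after symmetrisation, the two-particle wavefunction $(v_1,v_2)\mapsto\int d^Dx\,\lambda(x)\,\kappa(x,v_1)\,\kappa^{-1}(x,v_2)$.

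The heart of the argument is then the Wick computation of $\|S(\lambda)\Omega\|^2$ displayed in (\ref{5.16}). The four-point function $\langle\Omega,A(u_1)A(v_1)A^\ast(u_2)A^\ast(v_2)\Omega\rangle$ splits into two contraction patterns. One pattern pairs the two copies of $\kappa$ together and the two copies of $\kappa^{-1}$ together, giving $\int d^Dx\,d^Dy\,\lambda(x)\lambda(y)\,\kappa^2(x,y)\,\kappa^{-2}(x,y)$, i.e. $\lambda$ paired against the Hadamard (pointwise) product of the kernels $\kappa^2$ and $\kappa^{-2}$. The other pattern pairs each $\kappa$ with a $\kappa^{-1}$, so that the $\kappa$-dependence collapses through $\kappa\kappa^{-1}={\rm id}$ and one is left with $\int d^Dx\,d^Dy\,\lambda(x)\lambda(y)\,\delta(x,y)^2=\delta(0,0)\,\|\lambda\|_{L_2}^2$. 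The decisive point is that the divergent constant $\delta(0,0)$ — the coincidence limit of the $\delta$-distribution, which in any UV regularisation grows like the number of retained modes — is completely independent of $\kappa$, so it cannot be tuned away by any choice of covariance.

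The only remaining input is positivity of the first, $\kappa$-dependent contraction term, which prevents the two terms from cancelling: since $\kappa$ is positive and invertible, both $\kappa^2$ and $\kappa^{-2}$ are positive kernels, and by the Schur product theorem their Hadamard product is again positive, hence $\int\lambda\lambda\,\kappa^2\kappa^{-2}\ge 0$ (in the translation and reflection invariant case this is immediate from (\ref{5.10}), since $\hat\kappa^{\pm2}\ge 0$ forces $\int\frac{d^Dk\,d^Dl}{(2\pi)^{2D}}\hat\kappa^2(k)\hat\kappa^{-2}(l)\,|\hat\lambda(k+l)|^2\ge 0$). Therefore $4\|S(\lambda)\Omega\|^2\ge\delta(0,0)\,\|\lambda\|_{L_2}^2=\infty$ for every $\lambda\neq 0$, so $S(u)\Omega\notin{\cal H}$ and $S(u)$ is not densely defined, in every dimension $D$ and for every admissible kernel. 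I do not anticipate a serious obstacle here: in contrast with Proposition~\ref{prop5.1}, where a cancellation among several $(A^\ast)^2$ terms had to be arranged for $D(u)$, the operator $S(u)$ carries a single $(A^\ast)^2$ term whose self-overlap produces a positive, manifestly $\kappa$-independent divergence; the only mildly delicate point is the Schur-positivity step when $\kappa$ is not translation invariant.
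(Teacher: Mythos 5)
Your proof is correct and follows essentially the same route as the paper: you reproduce the Wick computation of $\|S(\lambda)\Omega\|^2$ in (\ref{5.16}), identify the $\kappa$-independent $\delta(0,0)\,\|\lambda\|_{L_2}^2$ contraction as the unavoidable divergence, and observe that the other contraction cannot cancel it. The one thing you add beyond the paper is an explicit justification — via the Schur product theorem applied to the positive kernels $\kappa^2$ and $\kappa^{-2}$ — for the non-negativity of the $\int\lambda\lambda\,\kappa^2\kappa^{-2}$ term, which the paper's inequality $4\|S(\lambda)\Omega\|^2\ge\delta(0,0)\|\lambda\|_{L_2}^2$ uses only implicitly; this is a worthwhile clarification since the proposition claims the result for arbitrary (not translation invariant) $\kappa$.
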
 
Thus, for translation and reflection invariant kernels on $\mathbb{R}^D$ 
the operators $D(u), S(u)$ are not separately welll defined on the 
corresponding Fock space for scalar fields. 
On the other hand, it is well known that 
the combination $C(u)=D(u)+S(u)$ can become a well defined operator 
for suitable kernels in $D=1$ e.g. in string theory (ST) or parametrised field
theory (PFT) 
\cite{24,25}. To see how this comes about we recall that in these 
D=1 theories one exploits the fact that tensors of type $(A,B,w)$ are 
scalar densities of weight $w+B-A$. Therefore the objects $\pi\pm \phi'$ 
with $(.)'=d/dx(.)$ that one uses there to define the annihilation 
and creation operators are in 
fact densities of 
weight one so that annihilators transform into annihilators under spatial
diffeomorphisms which is one of the reasons why $D=1$ is special:
In no other dimension can one change the density weight background 
independently, that is, only using the differential structure, without 
changing the tensor type $A,B$ of a field. 

We now repeat the calculation above directly for the combination 
$C(u)=D(u)+S(u)$. This gives with exactly the same manipulations 
\ba \label{5.17}
&& 32 (2\pi)^{2D} ||C(u)\Omega||^2
= 8\int\; d^Dk\; \int\; d^Dl\; 
\hat{u}^a(k+l)^\ast
\hat{u}^b(k+l)
[\hat{\kappa}(k)^2\hat{\kappa}(l)^{-2} \;l_a l_b+k_a l_b]
\nonumber\\
&=&4 \int\; d^Dk\; \int\; d^Dl\; 
\hat{u}^a(k+l)^\ast
\hat{u}^b(k+l)
[\hat{\kappa}(k)^2\hat{\kappa}(l)^{-2} \;l_a l_b+k_a l_b
+\hat{\kappa}(l)^2\hat{\kappa}(l)^{-2} \;k_a k_b+\l_a k_b]
\nonumber\\
&=& 4\int\; d^Dk\; \int\; d^Dl\; 
\hat{u}^a(k+l)^\ast
\hat{u}^b(k+l)
[\hat{\kappa}(k)\hat{\kappa}(l)^{-1} \;l_a+
\hat{\kappa}(l)\hat{\kappa}(k)^{-1} \;k_a]\;
[\hat{\kappa}(k)\hat{\kappa}(l)^{-1} \;l_b+
\hat{\kappa}(l)\hat{\kappa}(k)^{-1} \;k_b]\;
\nonumber\\
&=& 4\int\; d^Dk\; \int\; d^Dl\; 
|\hat{u}^a(k+l)\;
[\hat{\kappa}(k)\hat{\kappa}(l)^{-1} \;l_a+
\hat{\kappa}(l)\hat{\kappa}(k)^{-1} \;k_a]|^2
\ea
Consider first the case $D=1$ and the choice $\kappa(k)=|k|^{1/2}$ which 
corresponds to the Fock representation of a massless Klein Gordon field in 
two spacetime dimensions and which corresponds to the choice that is usually
made in ST and PFT.
Then 
\be \label{5.18}
\hat{\kappa}(k)\hat{\kappa}(l)^{-1} \;l+
\hat{\kappa}(l)\hat{\kappa}(k)^{-1} \;k=
\hat{\kappa}(k)\hat{\kappa}(l)[{\rm sgn}(l)+{\rm sgn}(k)]
\ee
which vanishes unless $k,l$ have the same sign. However, if $k,l$ have the 
same sign and $\hat{u}(k+l)$ vanishes for $|k+l|>k_0$ (compact momentum 
support) then the integral is confined to a subset of the 
union of the sets defined by $0\le k,l\le k_0$ or $-k_0\le k,l\le 0$ which 
has finite Lebesgue volume. Therefore (\ref{5.17}) converges 
in this case. 

Consider still in $D=1$ the more general choice 
\be \label{5.18a}
\kappa(k)=|k|^{1/2}(1+f(k)),\; f(k)=\frac{1}{1+(Lk)^{2N}}
\ee
for $N$ to be specified. The inverse kernel has Fourier transform 
$\hat{k}^{-1}(k)=[\hat{\kappa}(k)]^{-1}$ 
which diverges at $k=0$ as $|k|^{-1/2}$. Therefore for 
the corresponding Weyl algebra based on (\ref{5.18a}) to be 
well defined we consider test functions whose Fourier transform vanises 
rapidly both at $|k|=0,\infty$ \cite{27}. In the compact case $\sigma=T^D$
one can take care of the zero mode separately, see e.g. \cite{26}
for an example. Then 
\be \label{5.18b}
[\hat{\kappa}(k)\hat{\kappa}(l)^{-1} \;l
+\hat{\kappa}(l)\hat{\kappa}(k)^{-1} \;k]^2
=|k \;l|\;
[\frac{1+f(k)}{1+f(l)}{\rm sgn}(l)+\frac{1+f(l)}{1+f(k)}{\rm sgn}(k)]^2
\ee
If $k,l$ have the same sign, as the support of $\hat{u}$ enforces 
$|k+l|<k_0$, then $0\le |k|,|l|\le k_0$ and the integral converges in these 
two quadrants of the $k,l$ plane. If $k,l$ have different sign then 
\be \label{5.18b1}
[\hat{\kappa}(k)\hat{\kappa}(l)^{-1} \;l
+\hat{\kappa}(l)\hat{\kappa}(k)^{-1} \;k]^2
=|k \;l|\;
[\frac{1+f(k)}{1+f(l)}-\frac{1+f(l)}{1+f(k)}]^2
\ee
where we used reflection invariance $f(k)=f(-k)$. We now introduce 
$m=k+l\in [-k_0,k_0]$ as a new integration variable, keep $l\in \mathbb{R}$ 
as an 
independent integration variable and thus have $k=m-l$. In the sector 
$k<0,l>0$ this requires $l>m$ and in the sector   
$k>0,l<0$ this requires $l<m$. Then at fixed $m$, the factor $|l k|$ grows 
quadratically with $|l|$. On the other hand 
\ba \label{5.18c}
&& [\frac{1+f(k)}{1+f(l)}-\frac{1+f(l)}{1+f(k)}]^2
=[1+f(k)]^{-2}\;[1+f(l)]^{-2}
[(1+f(k))^2-(1+f(l))^2]^2
\nonumber\\
&=& [1+f(k)]^{-2}\;[1+f(l)]^{-2}
[f(k)+f(l))]^2\;[f(k)-f(l)]^2
\nonumber\\
&=& [\frac{f(k)+f(l))}{[1+f(k)]\;[1+f(l)]}]^2
[\frac{(Ll)^{2N}-(Lk)^{2N}}{[1+(Lk)^{2N}][1+(Ll)^{2N}]}]^2
\ea
The first factor approaches unity as $|l|\to \infty$ while the second decays 
as $|l|^{2(2N-1-4N)}=|l|^{-(4N+2)}$. Hence in $D=1$ the integral converges for 
$2+2\le 4N+2$ i.e. e.g. integer $N\ge 1$.\\  
\\
This mechanism does not exist in higher 
dimensions: Consider in any $D>1$ the support of 
$\hat{u}^a(m),\;a=1,..,D$ to be in the ball $||m||\le k_0$. 
We introduce again $m,l$ as independent integration variables and
thus have $k=m-l$. Furthermore we decompose for $||m||>0$ (the set 
$||m||=0$ has Lebesgue measure zero) $\hat{u}(m)=\hat{u}_1(m)\; n(m)
=\hat{u}_\perp(m)$ and $l=l_1 n(m)+l_\perp$ where $n(m)=m/||m||$ is 
the unit vecor in direction $m$ and 
$m\cdot \hat{u}_\perp(m)=m\cdot l_\perp=0$. This corresponds to a choice 
of frame with $m$ in direction $a=D$ and the corresponding directions in the
plane orthogonal it. Then the Lebesgue measure becomes 
$d^D k \; d^Dl=d^D m\; dl_1\; d^{D-1}l_\perp$ and the integrand 
becomes 
\ba \label{5.19}
&& |\hat{u}^a(m)\;
[\hat{\kappa}(m-l)\hat{\kappa}(l)^{-1} \;l_a+
\hat{\kappa}(l)\hat{\kappa}(m-l)^{-1} \;(m-l)_a]|^2
\\
&=& |\hat{u}^a(m)\;
[(\hat{\kappa}(m-l)\hat{\kappa}(l)^{-1}-\hat{\kappa}(l)\hat{\kappa}(m-l)^{-1})
\;l_a
+\hat{\kappa}(l)\hat{\kappa}(m-l)^{-1} \;m_a]|^2
\nonumber\\
&=& 
|
[(\hat{\kappa}(m-l)\hat{\kappa}(l)^{-1}-\hat{\kappa}(l)\hat{\kappa}(m-l)^{-1})
\;(\hat{u}_1(m) l_1+\hat{u}_\perp(m)\cdot l_\perp)
+||m||\;\hat{\kappa}(l)\hat{\kappa}(m-l)^{-1} \;u_1(m)]
|^2
\nonumber\\
&=& 
|
[(\hat{\kappa}(m-l)\hat{\kappa}(l)^{-1}-\hat{\kappa}(l)\hat{\kappa}(m-l)^{-1})
\;(\hat{u}_1(m) l_1+\hat{u}_\perp(m)\cdot l_\perp)
+||m||\;\hat{\kappa}(l)\hat{\kappa}(m-l)^{-1} \;u_1(m)]
|^2
\nonumber
\ea
We perform the integral at fixed $m$ independently over $l_1,l^a_\perp\in 
\mathbb{R}^D$. It is required to converge for all choices of 
$\hat{u}_1(m),\hat{u}_\perp(m)$. Consider first $u_1(m)=0$ (transversal
vector field). Then we conclude 
that 
\be \label{5.20}
[(\hat{\kappa}(m-l)\hat{\kappa}(l)^{-1}
-\hat{\kappa}(l)\hat{\kappa}(m-l)^{-1})]^2
\;(\hat{u}_\perp(m)\cdot l_\perp)^2
\ee
for all $\hat{u}_\perp(m)$ must decay as faster than $||l||^{-D}$ i.e. 
\be \label{5.21}
[\hat{\kappa}(m-l)\hat{\kappa}(l)^{-1}
-\hat{\kappa}(l)\hat{\kappa}(m-l)^{-1}]^2
\ee
must decay faster than $||l||^{-(D+2)}$. We saw above that such $\kappa$ exist. 
Now choose $\hat{u}_\perp(m)=0$ (longitudinal vector field). Then 
\be \label{5.22}
[(\hat{\kappa}(m-l)\hat{\kappa}(l)^{-1}-\hat{\kappa}(l)\hat{\kappa}(m-l)^{-1})
\;(\hat{u}_1(m) l_1)
+||m||\;\hat{\kappa}(l)\hat{\kappa}(m-l)^{-1} \;u_1(m)]^2
\ee
must decay daster than $||l||^{-D}$ for any $\hat{u}_1(m)$. 
This is already the case for the first term in the square bracket. However
if (\ref{5.21}) decays as $||l||\to \infty$ then 
$\frac{\hat{\kappa}(m-l)}{\hat{\kappa}(l)}$ approaches a constant value. 
By expanding 
\be \label{5.23}
\kappa(m-l)=\kappa(l-m)=\kappa(l)-[\partial_a\kappa](l)\;m^a+..=
\kappa(l)(1+O(||m||/||l||)
\ee
we see that this constant must be unity. Thus we conclude that the integral 
over (\ref{5.19}) diverges. Note that the argument fails for $D=1$ because 
in that case no transverse vector field components exist. 
We also see again that 
$C(u)=D(u)$ is well defined when $\partial_a u^a=0$, that is $S(u)=0$.
\begin{Proposition} \label{prop5.2} ~\\
Let $\kappa$ be a translation and reflection invariant positive invertible
integral kernel on $\sigma=\mathbb{R}^D$ defining a Fock representation
for scalar fields.\\
i. In D=1 there are infinitely many such $\kappa$ such that $C(u)$ is densely 
defined in the Fock representation for all $u$.\\
ii. In $D>1$ there are no such $\kappa$ such that $C(u)$ is densely 
defined in the Fock representation for all $u$.\\
iii. In $D>1$ there are infinitely many such $\kappa$ such that $C(u)$ is 
densely defined in the Fock representation for all transverse $u$.
\end{Proposition}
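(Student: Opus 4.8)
\emph{Proof strategy.} All three parts reduce to a single convergence analysis. Writing $C(u)=D(u)+S(u)$ as in (\ref{5.3}) and normal ordering with respect to the annihilators (\ref{5.1}), the only obstruction to $C(u)$ being a well defined operator on the dense finite particle domain is that the $(A^\ast)^2$ term in $:C(u):$ be Hilbert--Schmidt; its Hilbert--Schmidt norm squared is, up to a fixed positive factor, the number $\|C(u)\Omega\|^2$ computed in (\ref{5.17}) (the $A^\ast A$ piece being a densely defined ``second quantised'' first order differential operator, and the $AA$ piece being Hilbert--Schmidt exactly when the $(A^\ast)^2$ piece is, by the same computation). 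So the whole statement is that, for the kernels in question, the double momentum integral (\ref{5.17}) does or does not converge for the relevant class of vector fields $u$; throughout I would use the compact momentum support of $\hat u$ to control the ``centre of mass'' momentum $m=k+l$ and split the $(k,l)$ integration according to the angle between $k$ and $l$.

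For part (i) ($D=1$) I would take the explicit one--parameter families (\ref{5.18a}), $\hat\kappa(k)=|k|^{1/2}(1+f(k))$ with $f(k)=(1+(Lk)^{2N})^{-1}$ and integer $N\ge 1$. On the two same--sign quadrants of the $(k,l)$ plane the conditions $|k+l|\le k_0$ and ${\rm sgn}(k)={\rm sgn}(l)$ confine the integration to a bounded set, so that contribution is automatically finite. On the two opposite--sign quadrants the bracket in (\ref{5.17}) collapses to (\ref{5.18b1}); passing to the variables $m=k+l$ (bounded by ${\rm supp}\,\hat u$) and $l\in\mathbb{R}$, the prefactor $|k\,l|$ grows like $l^2$ at fixed $m$ while the ratio in (\ref{5.18c}) decays like $|l|^{-(4N+2)}$, so the integrand decays like $|l|^{-4N}$ and the integral converges for $N\ge 1$. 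Finally, as noted after Proposition \ref{prop5.1}, distinct $(N,L)$ (and the overall Gaussian covariance normalisation) fail the Hilbert--Schmidt comparison criterion pairwise, so these are infinitely many mutually inequivalent representations.

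For part (ii) I would show that \emph{every} translation and reflection invariant $\kappa$ fails for \emph{some} $u$. Using again $m=k+l$ and $l$ as independent variables and the decomposition $l=l_1\,n(m)+l_\perp$, $n(m)=m/\|m\|$, the integrand of (\ref{5.17}) is (\ref{5.19}). Testing transverse $u$, i.e. $\hat u_1(m)=0$: if that integral fails to converge for some such $u$ we are done; otherwise (\ref{5.21}) must decay faster than $\|l\|^{-(D+2)}$, which by the expansion (\ref{5.23}) forces $\hat\kappa(m-l)/\hat\kappa(l)\to 1$ as $\|l\|\to\infty$. Now test longitudinal $u$, $\hat u_\perp(m)=0$: the surviving term $\|m\|\,\hat\kappa(l)\,\hat\kappa(m-l)^{-1}\,\hat u_1(m)$ then tends to $\|m\|\,\hat u_1(m)\neq 0$, so the integrand of (\ref{5.19}) does not vanish at infinity; since $D>1$ the region $\|l_\perp\|\to\infty$ has infinite volume and the integral diverges. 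Either way $\|C(u)\Omega\|^2=\infty$ for some $u$. This step also explains the dimension dependence: in $D=1$ there is no transverse direction, so nothing forces $\hat\kappa(m-l)/\hat\kappa(l)\to 1$ and the opposite--sign cancellation used in (i) is allowed to save convergence.

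For part (iii) I would simply note that in the scalar case a transverse $u$ has $u^a_{,a}=0$, hence $S(u)=0$ and $C(u)=D(u)$; Proposition \ref{prop5.1}, with the kernels (\ref{5.13}) for any integer $N>(D+1)/4$, already shows $D(u)$ densely defined for all $u$, and the constant $c>0$ produces infinitely many unitarily inequivalent choices. The real work is the longitudinal step of part (ii): ``positive invertible integral kernel'' is a very weak hypothesis, and turning ``$\hat\kappa(m-l)/\hat\kappa(l)\to 1$'' plus the non--decay of the integrand into an honest divergence either needs a mild regularity assumption on $\hat\kappa$ (asymptotic comparability along rays, or monotonicity in $\|k\|$) or a separate treatment of kernels that oscillate at infinity; the Gaussian bookkeeping leading to (\ref{5.12}) and (\ref{5.17}) is, by contrast, routine.
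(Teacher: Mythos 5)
Your proposal follows the same line as the paper's own argument: density of $C(u)$ is controlled by convergence of the double momentum integral (\ref{5.17}); for $D=1$ you use the kernels (\ref{5.18a}) and the same quadrant decomposition of the $(k,l)$-plane, exploiting the compact support of $\hat u$ to bound the centre-of-mass momentum; for $D>1$ you use the same longitudinal/transverse split $l=l_1 n(m)+l_\perp$ and the same dichotomy (finiteness for all transverse $u$ forces $\hat\kappa(m-l)/\hat\kappa(l)\to 1$, which then makes the longitudinal term survive and the $l_\perp$-integral diverge); for part (iii) you correctly note that transverse means $u^a_{,a}=0$ hence $S(u)=0$ and Proposition~\ref{prop5.1} applies. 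Two remarks. First, you pass from $\|C(u)\Omega\|^2<\infty$ to dense definability via the Hilbert--Schmidt criterion on the $(A^\ast)^2$ kernel; the paper instead verifies this explicitly on the Weyl span via the estimate (\ref{5.27})--(\ref{5.30}). The two routes are equivalent on the finite-particle / Weyl domain, and yours is arguably cleaner since it avoids the auxiliary $w(f,0)\Omega$ computation. Second, the caveat you raise — that ``positive invertible integral kernel'' alone does not justify promoting $\hat\kappa(m-l)/\hat\kappa(l)\to 1$ to an honest asymptotic statement, absent some regularity or monotonicity of $\hat\kappa$ along rays — is a fair one, but it is shared by the paper: the expansion (\ref{5.23}) is just a first-order Taylor estimate and tacitly requires a bound on $\partial\hat\kappa/\hat\kappa$ at large momentum. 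So you have not missed a step the paper supplies; you have correctly identified where a small additional hypothesis on $\hat\kappa$ would be needed for a fully airtight proof of part (ii).
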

Note that had we replaced in (\ref{5.18a}) the function $1+f(k)$ by 
$c+f(k)$ for positive $c$ then proposition \ref{prop5.2} is still 
valid but for different $c$ the Fock representations are again unitarily 
inequivalent. 

The examples we have given are kernels which are not only translation and 
reflection invariant but even rotation invariant. These symmetries are 
of course geared to $\sigma=\mathbb{R}^D$ and the question arises what 
happens for more general topologies. First we note that the three 
propositions immediately generalise to the D-torus $T^D$ with periodic
vector fields or 
the solid D-ball with vector fields of compact support since Fourier 
transform is still possible. One just has to replace momentum integrals 
by momentum sums, however, the convergence conditions remain the same.
In other words, IR regulators do not help to make $C(u)$ a densely 
defined operator. Next, on a more general manifold we expect that the 
conclusions are qualitatively unchanged because we could consider vector 
fields with compact support in a coordinate patch that contains a solid 
D-ball. As concerns more general tensor fields one would need to study 
the scaling contribution in more detail but we also do not expect 
qualitatively different results because common to all of them is the 
contribution $u^a_{,a} p_\mu q^\mu$ which is the source of trouble
in the scalar case and whose tensorial structure is different from 
the remaining terms in $S(u)$. Concerning more general kernels $\kappa$ 
without any symmetries the result of proposition \ref{prop5.1} valid for any 
kernel shows that even in that general case at least $C(u),S(u)$ cannot 
be separately densely defined and the special role that $D=1$ has played 
in the case of symmetric $\kappa$ suggests that for $D>2$ the SDC $C(u)$ 
is also not densely defined for general non-symmetric $\kappa$.\\
\\
We close this section by remarking that for the Fock representations of 
the class (\ref{5.18a}) the central charge vanishes and there is 
no central extension of the SDA, i.e. in the D=1 case the 
$C(u)$ generate an anomaly free algebra. This is because the  
central charge even in the standard case $\hat{\kappa}(k)=|k|^{1/2}$ 
vanishes. However, in ST and PFT one considers the Fock rep. not using
$\hat{\kappa}=|k|^{1/2}$ but rather 
$\hat{\kappa}_\pm=|k|^{1/2}(1\pm {\rm sgn}(k)$. In this case one finds 
that $C(u)=C_+(u) - C_-(u)$ with $[C_\delta(u),C_{\delta'}(v)]=
\delta\; \delta_{\delta\delta'}[iC_\delta(-[u,v])+i \sigma(u,v)\cdot 1]$.
It follows that $[C(u),C(v)]=iC(-[u,v])$ i.e. the spatial diffeomorphism 
constraint $C(u)$ of PFT itself is without
anomaly, just $C_\pm(u)$ are anomalous, therefore also the 
Hamiltonian constraint $H(u)=C_+ + C_-$ is anomalous. 
See \cite{26} for an explanation
using the notation used here.

To see this more formally we note
that the normal ordered $C(u)$ differs from the natural ordering 
$\int\;dx\; u\pi\phi'$ by a constant as $C$ is bilinear
in the annihilation and creation operators. As the algebra is non-anomalous 
in the natural ordering (just using algebraic operations keep track of the 
order, the remaing manipulations are the same as in the classical Poisson
bracket calculation) the algebra in the normal ordering can have at most 
a central anomaly i.e. $[C(u),C(v)]=i(C(-[u,v])+\sigma(u,v))$ with 
real valued $\sigma(u,v)=-\sigma(v,u)$ since $C(u)$ is a symmetric operator.
It follows that 
\ba \label{5.24}
&& i\sigma(u,v)=
<\Omega,[C(u),C(v)]\Omega>     
=<C(u)\Omega,C(v)\Omega>-<C(v)\Omega,C(u)\Omega>
\nonumber\\
&=& 
<C'(u)\Omega,C'(v)\Omega>-<C'(v)\Omega,C'(u)\Omega>
\ea
where $C'(u)=-i\int\;\;dx\; u\;[\kappa A^\ast]\;[\kappa^{-1} A^\ast]'$
is the piece of $C(u)$ which conatins no annihilator. Then the computation
is the same as that for $||C(u)\Omega||^2$ with the result
\be \label{5.25}
i\sigma(u,v)=(4\pi)^{-2}\int_{\mathbb{R}}\;dk\;\int_{\mathbb{R}} 
dl\; \omega(k+l)
|\frac{\hat{\kappa}(k)}{\hat{\kappa}(l)} \; l 
+\frac{\hat{\kappa}(l)}{\hat{\kappa}(k)} \; k|^2,\;\;
\omega(m)=\hat{u}(m)^\ast v(m)-v(m)^\ast u^\ast
\ee
Decomposing the integral over $k,l$ into the four quadrants we find 
\ba \label{5.26}
&& i\sigma(u,v)=(4\pi)^{-2}\int_{\mathbb{R}_+}\;dk\;\int_{\mathbb{R}_+} 
dl\; 
\times \nonumber\\
&& \{
[\omega(k+l)+\omega(-[k+l])]
|\frac{\hat{\kappa}(k)}{\hat{\kappa}(l)} \; l 
+\frac{\hat{\kappa}(l)}{\hat{\kappa}(k)} \; k|^2\;\;
+
[\omega(-k+l)+\omega(-[-k+l])]
|\frac{\hat{\kappa}(k)}{\hat{\kappa}(l)} \; l 
-\frac{\hat{\kappa}(l)}{\hat{\kappa}(k)} \; k|^2,\;\;
\}
\ea
Hoever $\hat{u}(-m)=\hat{u}(m)^\ast$ for real $u$ implies $\omega(-k)
=-\omega(k)$. Thus $\sigma(u,v)=0$.

Note that the span of Fock states is not an invariant domain in the case 
$D=1$, even
$C(u)\Omega$ is not a finite linear combination of states of the 
form $A^\ast(f_1)..A^\ast(f_N)\Omega$. Also we have only checked 
that $||C(u)\Omega||<\infty$ for $D=1$ and kernels of the form (\ref{5.18a}).
To check that $C(u)$ is densely defined on all Fock states we check 
the equivalent statement that $||C(u)\;w(f,0)\Omega||<\infty$ with 
$w(f)=\exp(i<f,\phi>)$. We have with $F=2^{-1/2}\kappa^{-1} f$ 
\be \label{5.27}
A\;w(f)=w(f)\;[A+iF],\;
A^\ast\;w(f)=w(f)\;[A^\ast-iF]
\ee
thus
\ba \label{5.28}
&& w(f)^{-1}\;C(u)w(f)\Omega=
\frac{i}{2}\int \; dx\; u\;
\times\nonumber\\
&& \{
[\kappa(A+iF)]\;[\kappa^{-1}(A+iF)]'
-[\kappa(A^\ast-iF)]\;[\kappa^{-1}(A^\ast-iF)]'
+[\kappa^{-1}(A^\ast-iF)]'\;[\kappa(A+iF)]
\nonumber\\
&& -[\kappa(A^\ast-iF)]\;[\kappa^{-1}(A+iF)]'
\}\;\Omega
\nonumber\\
&=&
\frac{i}{2}\int \; dx\; u\;
\{
-[\kappa F]\;[\kappa^{-1} F]'
-[\kappa A^\ast]\;[\kappa^{-1}A^\ast]'
+i[\kappa A^\ast]\;[\kappa^{-1} F]'
+i[\kappa F]\;[\kappa^{-1}A^\ast]'
\nonumber\\
&& +[\kappa F]\;[\kappa^{-1}F ]'
+i[\kappa^{-1}A^\ast]'\;[\kappa F]
+[\kappa^{-1}F]'\;[\kappa F]
-i[\kappa A^\ast]\;[\kappa^{-1} F]'
-[\kappa F]\;[\kappa^{-1} F]'
\}\;\Omega
\nonumber\\
&=&
\frac{i}{2}\int \; dx\; u\;
\{
-[\kappa A^\ast]\;[\kappa^{-1}A^\ast]'
+i[\kappa A^\ast]\;[\kappa^{-1} F]'
+i[\kappa F]\;[\kappa^{-1}A^\ast]'
+i[\kappa^{-1}A^\ast]'\;[\kappa F]
-i[\kappa A^\ast]\;[\kappa^{-1} F]'
\}\;\Omega
\nonumber\\
&=&
\frac{i}{2}\int \; dx\; u\;
\{
-[\kappa A^\ast]\;[\kappa^{-1}A^\ast]'
+2i[\kappa F]\;[\kappa^{-1}A^\ast]'
\}\;\Omega
\nonumber\\
&=&
C(u)\Omega-2^{-1/2}\;<\kappa^{-1}[u\cdot f]',A>^\ast \Omega=:C(u)\Omega
+\psi
\ea
We have for the Fourier transform 
\be \label{5.29}
(\kappa^{-1}[u\cdot f]')^\wedge(k)=ik\hat{\kappa}^{-1}(k)\int\;dl\;
\hat{u}(k-l)\hat{f}(l)
\ee
If both $u,f$ have compact momentum support in say $[-k_0,k_0]$ then 
the integral in (\ref{5.24}) has momentum support in $[-2k_0,2k_0]$ and 
since $k \hat{\kappa}^{-1}(k)$ vanishes at $k=0$ and is bounded on bounded 
intervals we have $||g||_{L_2}<\infty$, thus 
$||\psi||^2=2^{-1} ||g||^2_{L_2}<\infty$.
Since we have already shown $||C(u)\Omega||<\infty$ it follows from the 
estimate 
\be \label{5.30}
||C(u)w(f)\Omega\||=||w(f)^{-1}C(u)w(f)\Omega\||\le ||C(u)\Omega||+
||\psi||
\ee
where we used unitarity of $w(f)$. Thus $C(u)$ is densely defined on the Fock
space.

\section{Constructive QFT methods for the SDA}
\label{s6}

As Fock representations of the non-volume preserving SDA meet obstacles as 
detailed in the previous section, the question arises whether one 
can construct the SDA in other representations. We therefore investigate 
in this section the possibility to employ
constructive QFT (CQFT) methods to construct $C(u)$ in a new representation
which is not a Fock representation. This method was invented 
in \cite{15} and has lead to a successful construction of $\Phi^4_3$ 
theory where one first works in finite volume and in a second step 
takes the thermodynamic limit. In the first subsection we sketch 
the method of \cite{15} and in the second we consider its application 
to the SDA.

\subsection{Dressing transformations}
\label{s6.1}

Suppose we are given a classical Hamiltonian $H$ which maybe split as 
$H=H_0+V$ where $H_0$ can be defined as symmetric 
operator on a Hilbert space ${\cal H}_0$ with dense domain 
${\cal D}_0\subset {\cal H}_0$ while 
$V$ is ill-defined as an operator but is at least a symmetric quadratic 
form with dense domain ${\cal D}_0$. 
Thus while $H_0\;D_0\subset {\cal H}_0$ (often even $H_0 {\cal D}_0
\subset {\cal D}_0$ is an invariant domain) we 
have $V\;{\cal D}_0\not\subset {\cal H}_0$. However, we have that both
$<\psi_0,H_0^2\psi_0'>_0,\;<\psi_0,V\psi_0'>_0$ exist for 
$\psi_0,\psi_0'\in {\cal D}_0$ where $<.,.>_0$ denotes the scalar product 
on ${\cal H}_0$. We intend to construct a new Hilbert space $\cal H$ with 
new dense domain ${\cal D}\subset {\cal H}$ such that $H$ is a 
symmetric operator on $\cal H$ with dense domain $\cal D$, i.e. 
$H {\cal D}\subset {\cal H}$ i.e $<\psi,H^2\psi'>$ exists for 
all $\psi,\psi'\in {\cal D}$ where $<.,.>$ denotes the inner product 
on ${\cal H}$.    

To have a concrete picture in mind consider the important case of 
practical importance that $H_0$ (``free part'')
can be defined on a Fock space ${\cal H}_0$
with annihilation operators $<f,A>$ for $f$ a smearing 
function in the 1-particle Hilbert space $L_2(\sigma)$ annihilating 
the Fock vacuum $\Omega_0$. We introduce an ONB $e_I$ of the 1-particle HS 
$L_2(\sigma)$ where I is a ``mode label'' 
(e.g. for 
$\sigma=\mathbb{R}^D$ we may pick the Hermite basis labelled by 
$I=(I_1,..,I_D)\in \mathbb{N}_0^D$ 
and on $\sigma=T^D$ we may use the Fourier 
basis labelled by $I=(I_1,..,I_D)\in \mathbb{Z}^D$). The corresponding
orthonormal Fock basis of the Fock space ${\cal H}_0$ is then labelled
by occupation numbers $n=\{n_I\}_I$ 
\be \label{6.1a}
b_n=\prod_I\; \frac{[A_I^\ast]^{n_I}}{\sqrt{n_I!}}\;\Omega,\;n_I\in 
\mathbb{N}_0,\;
||n||:=\sum_I\; n_I<\infty
\ee
where $A_I=<e_I,A>_{L_2}$. The space ${\cal D}_0$ is the finite linear span
of the $b_n$.

Let $S_M,\;M\in \mathbb{N}$ be finite sets of modes $I$ which are nested 
in the sense $S_M \subset S_{M+1}$ and such that $S_\infty$ is the complete 
index set (e.g. in the above examples on $\sigma=\mathbb{R}^D, T^D$
we could consider $||I||:=|I_1|+..+|I_D|$ and $S_M=\{I:\;||I||\le M\}$). 
Thus $M$ serves as a 
``mode cut-off''. Note that elements of the Fock basis 
have a built-in ``occupation number cut-off'', 
that is $||n||<\infty$ for every 
$n$. 

Let ${\cal H}_{0,M}$ be the subspace obtained as the completion of the span 
${\cal D}_{0,M}$
of those $b_n$ with $n_I=0$ for all $I\not\in S_M$. We expand the classical
expression for $H$ into annihilation and creation operators
and normal order it. Thus we obtain for both $H_0,V$ linear 
combinations of expressions of the form with $0\le k\le N\in \mathbb{N}_0$. 
\be \label{6.1}
B_{N,k}:=\sum_{I_1,..,I_N} \; B_{N,k}\;^{I_1..I_N}\; 
A_{I_1}^\ast\;..\;A_{I_{N-k}}^\ast\;A_{I_{N-k+1}}\;..\;A_{I_N}
+c.c.
\ee
which is a monomial with $N-k$ creators and $k$ annihilators with
certain, finite complex coefficients $B_{N,k}^{I_1..I_N}$. It depends on 
the 
index support and decay of those as $I_1,..,I_N\in S_M$ with $M\to \infty$
that decides about whether $B_{N,k}$ is either 
an operator or merely a quadratic form. We can enforce the 
operator qualifier by restricting 
all $I_1,..,I_N$ to $S_M$ (mode cut-off)
thereby obtaining operators $B^M_{N,k}$ and thus
$H^M$ is densely defined on all of ${\cal D}_0$. Restricted to ${\cal D}_{0,M}$ 
the operator $H_M$ has range in ${\cal H}_{0,M}$. 

Suppose now that we find an invertible operator $T_M:\; {\cal D}_0\to 
{\cal D}_0$ such that 
\be \label{6.2}
\hat{H}_M:=T_M^{-1}\; H_M\; T_M
\ee
has a strong limit $s-\lim_{M\to \infty}\hat{H}_M=\hat{H}$ with dense {\it 
and invariant} domain ${\cal D}_0$. To construct such $T_M$ we make the Ansatz 
\be \label{6.2b}
T_M=\exp(G_M)
\ee
where $G_M$ itself can be expanded as in (\ref{6.1}) with expansion 
coefficients $E_{N,k}^{I_1 .. I_N}$ and all indices 
are resricted to $S_M$. Then we work out $e^{-G_M} A_I e^{G_M}$ using 
the commutation relations and try to adjust the coefficients $E$ to the 
coefficients $B$ such that $\hat{H}$ has the desired property. Typically
one tries to achieve that after applying $T_M$, contributions 
(\ref{6.1}) with large $N$ 
are either deleted or have much improved decay behaviour in all 
indices, preferrably 
of compact fixed index support $M_0$ which is 
eventually independent of $M$ as $M$ increases or that the index range 
of each creator is compactly supported around one of the indices of at least 
one creator in an $M$-independent manner. 
This step may require renormalisation, i.e. we may in fact 
modify the classically given coefficients $B_{N,k}$ to $g_{N,k} B_{N,k}$ 
for some numbers $g_{N,k}$, i.e. the renormalised Hamiltonian has the same 
type of terms $B_{N,k}$ as the ``classical'' Hamiltonian but renormalised 
``coupling constants'' $g_{N,k}$.    

If this is possible, consider for $\psi_0,\psi_0'\in {\cal D}_0$
\be \label{6.3}
<T_M \psi,\; T_M\psi'>_M:=
\frac{<T_M \psi_0,T_M \psi'>}{<T_M \Omega_0,T_M \Omega_0>}
\ee
and we check for which $\psi_0,\psi_0'\in {\cal D}_0$ the limit
\be \label{6.4}
<T \psi_0, T\psi_0'>:=\lim_{M\to\infty}\;<T_M \psi_0,\; T_M\psi_0'>_M
\ee
exists. Denote that subspace by $\hat{{\cal D}}_0$ and define ${\cal D}$ as 
the space of vectors 
$T\psi_0:=\lim_{M\to \infty} T_M\psi_0,\; \psi_0\in \hat{{\cal D}}_0$. 
Suppose that 
$\hat{H}$, which has dense invariant domain ${\cal D}_0$,
even has dense invariant domain 
$\hat{{\cal D}}_0$. Then for $\psi_0\in \hat{{\cal D}}_0$
\be \label{6.5}
H\; T\psi_0:=\lim_{M\to \infty} H_M\; T_M\psi_0
=\lim_{M\to \infty} T_M \hat{H}_M\;\psi_0=T \hat{H} \;\psi_0
\ee
defines $H$ as an operator ${\cal D}\to {\cal H}$ where $\cal H$ is the 
completion of $\cal D$ with respect to (\ref{6.4}). Note that if the rate
of divergence of the numbers $||T_M \psi||_0,\; \psi\in {\cal D}_0$ 
is not the same as that of $||T_M \Omega_0||_0$ up to a number, then 
the resulting space $\cal H$ will be finite dimensional and thus 
not acceptable.

The construction of a possible $T_M$ can of course be performed 
succesively in steps,
i.e. $T_M=T_M^n\;..\;T_M^1$ where each of the $T_M^k$ is designed to remove 
a certain type of term. Another possible approach is a perturbative
construction: We formally write $H=H_0+g V$ where $g$ is an organisation 
parameter of the perturbation. Then we write 
$T_M=\exp(-\sum_{n=1}^\infty\; g^n\; t_{M,n})$ and find 
\be \label{6.5a}
T_M^{-1} H_M T_M=H_{0,M}+g\;[V_M+[t_{M,1}]+g^2([t_{M,1},V_M]+[t_{M,2},
H_{0,M}]+\frac{1}{2}[t_{M,1},[t_{M,1},H_{0,M}]]+O(g^3)
\ee
and try to construct $t_{M,n}$ such that each coefficient of $g^n$ defines 
an operator with range in ${\cal D}_{0}$.

\subsection{Application to the SDA}
\label{s6.2}

We limit ourselves to the scalar case, the higher tensor rank case being 
similar but more complex and even less likely to succeed, although a strict 
proof must be (and can be given the tools provided here) carried out
to confirm. We write the normal ordered 
$C(u)$ for a general Fock representation parametrised 
by $\kappa$ in the suggestive form
\be \label{6.6}
C(u)=\frac{i}{2}\;d^Dx\;
u^a\{[\kappa^{-1}(A+A^\ast)]_{,a}\;[\kappa A]  
-[\kappa A^\ast]\;[\kappa^{-1}(A+A^\ast)]_{,a}  
\ee
As $C(u)$ would be well defined on ${\cal D}_0$ if the $(A^\ast)^2$ was 
not present we try to remove it using a dressing transformation $T$. Consider 
for a symmetric, real valued kernel $c$ to be specified the {\it squeezing
operator}
\be \label{6.7}
T:=e^{\frac{1}{2}\; c(A^\ast,A^\ast)},\;
c(A^\ast,A^\ast):=\int\; d^Dx\; d^Dy \; c(x,y)\; A^\ast(x)\; A^\ast(y)
\ee
We find with $[c\cdot f](x):=\int\; d^Dy \; c(x,y) f(y)$
\ba \label{6.8}
T^{-1}\;C(u)\; T
&=& 
\frac{i}{2}\;\int\; d^Dx\;
u^a
\{[\kappa^{-1}(A+[1+c\cdot] A^\ast)]_{,a}\;[\kappa A+c\cdot A^\ast]  
-[\kappa A^\ast]\;[\kappa^{-1}(A+[1+c\cdot] A^\ast)]_{,a}  
\}
\nonumber\\
&=& 
\frac{i}{2}\;\int\; d^Dx\;
u^a
\{
[\kappa^{-1}([1+c\cdot] A^\ast)]_{,a}\;[\kappa A+c\cdot A^\ast]  
[\kappa A+c\cdot A^\ast]\;[\kappa^{-1} A]_{,a}
\nonumber\\
&& +[[\kappa^{-1} A]_{,a},[c\cdot A^\ast]]  
-[\kappa A^\ast]\;[\kappa^{-1}(A+[1+c\cdot] A^\ast)]_{,a}  
\}
\nonumber\\
&=& 
\frac{i}{2}\;{\rm Tr}(u^a\;[\kappa^{-1}]_{,a} \;c)\cdot 1
+
\frac{i}{2}\;\int\; d^Dx\;u^a
\{
[\kappa^{-1}([1+c\cdot] A^\ast)]_{,a}\;[\kappa A]  
+[\kappa A+c\cdot A^\ast]\;[\kappa^{-1} A]_{,a}
\nonumber\\
&& -[\kappa A^\ast]\;[\kappa^{-1} A]_{,a}  
-[\kappa^{-1}([1+c\cdot] A^\ast)]_{,a}\;[(1-c\cdot) A^\ast]  
\}
\ea
We see that a non-trivial renormalisation is necessary in order to use
(\ref{6.7}) namely we must replace the definition (\ref{6.6}) of $C(u)$ 
by adding a central term that originates from normal ordering of (\ref{6.8})
\be \label{6.9}
C(u)- \frac{i}{2}\;{\rm Tr}(u^a\;[\kappa^{-1}]_{,a} \;c)\cdot 1
\ee
which is unaffected by conjugating with $T$. Denoting the modified $C(u)$ 
by $C(u)$ again we see that $T^{-1}\; C(u)\; T$ is given by the 
integral in the last line of (\ref{6.8}). 
That expression would define an operator on the Fock space if 
the $(A^\ast)^2$ contribution to that integral would vanish. That 
would be possible for the choice of integral kernel $c=\pm 1_{L_2}$. 
However, the squeezing operator $T$ for $c=1$ does not allow a 
controlled limit of the inner product (\ref{6.4}). The reason for this 
is $c=\pm 1$ defines infinite squeezing and turns either $p$ or $q$
into an annihilation operator as for $c=1$ we have 
$T^{-1} p T\propto A+c\cdot A^\ast-A^\ast=A$ and for $c=-1$ we have
$T^{-1} q T\propto A+c\cdot A^\ast+A^\ast=A$. This means that 
either $p T\Omega=0$ or $q T\Omega=0$ so that the squeezed vacuum $T \Omega$
is that of the Narnhofer Thirring representation. But in this representation
the conugate operator $q$ and $p$ respectively does not exist.

A formal way to see this is to work out the cut-off dressing operator
$T_M$ for $c=\pm 1$ and the corresponding inner product $<.>_M$ at finite $M$.
One can see that (\ref{6.4}) diverges at every finite $M$. In order to 
control the limit $M\to \infty$ one must ensure that $0<c_M^2<1$ at finite 
$M$. We will prove this in appendix \ref{sa} which is of interest by itself 
because it provides tools to perform complicated normal ordering manipulations
of exponentials such as those that appear in 
\be \label{6.10}
<T_M\Omega_0,T_M\Omega_0>_0=<\Omega_0,\;
e^{<\frac{1}{2}c_M(A_M,A_M)>}\; e^{<\frac{1}{2}c_M(A_M^\ast,A_M^\ast)>}   
\ee
which is surprisingly hard to compute. The result is
\be \label{6.11}
<T_M\Omega_0,T_M\Omega_0>_0=<\Omega_0,\;
\exp(-\frac{1}{2}{\rm Tr}_{L_M}(\ln(1-c_M^2)))
\ee
where $L_M$ is the subspace of the 1-particle Hilbert space $L_2$ spanned by 
the modes $I\in V_M$. The logarithm is computed using the spectral theorem.
For a translation invariant kernel $c_M$ the trace exists only for compact 
$\sigma$ and indeed in CQFT one usually starts in finite volume and only later
takes the thermodynamic limit. Thus for $\sigma=T^D$ with coordinate 
volume unity and a translation 
invariant kernel $c$ which is diagonal in 1-particle Hilbert space basis 
$c\cdot e_I=c_I e_I$
\be \label{6.12}
<T_M\Omega_0,T_M\Omega_0>_0=
\prod_{I\in V_M} [1-c_I^2]^{-1/2}
\ee
This shows that we must not pick $c=1$ i.e. $c_I=1$ for any $I$, rather 
we want that $c_I\to \pm 1$ for $I\in S_M,\; M\to \infty$ in a controlled way.
However, if $|c_I|<1$ for any $I\in S_M,\; M<\infty$ it follows that 
(\ref{6.8}) when expanded into modes $I$ and then truncated to $I\in S_M$,
let us denote it by $\hat{C}_M(u)$, will not have range in ${\cal D}_0$ 
for some fixed $M_0$ as $M\to\infty$ 
Therefore $\hat{C}_M(u) {\cal D}_0$ will have at best 
range in ${\cal H}_0$ as $M\to \infty$ rather than ${\cal D}_0$ 
and thus it is not clear that the programme of section \ref{s6.1}
will work.   
  
To investigate this we evaluate the norm of 
$T_M \hat{C}_M(u)\Omega=C_M(u) T_M\Omega$ 
(all indices restricted to $S_M$) say for $\sigma=T^D$ ($T_M$ can be moved 
to the vacuum past the creation operators that survive in 
$\hat{C}_M(u)\Omega$)
\ba \label{6.13}
&& \frac{||T_M \hat{C}_M(u) \Omega||^2}{||T_M\Omega||^2}
=\sum{I,J,I',J'\in S_M} \; 
[\hat{u}^a_{I+J} \frac{1+\hat{c}_I}{\kappa_I}\; k_a^I \hat{\kappa}_J 
(1-\hat{c}_J]^\ast\;\;
[\hat{u}^a_{I'+J'} \frac{1+\hat{c}_{I'}}{\kappa_{I'}}\; k_a^{I'} 
\hat{\kappa}_{J'} 
(1-\hat{c}_{J'}]\;\;
\nonumber\\
&& \frac{<T_M \Omega, A_I A_J A_{I'}^\ast A_{J'} T_M\Omega>}
{<T_M \Omega, T_M\Omega>}
\ea
The matrix elements in the last line
receives non-vanishing contributions only when 
either $I=I',J=J',\; I\not=J$ or $I=J',J=I',\;I\not=J$ or 
or $I=J, I'=J',\;I\not=I'$ or
$I=J=I'=J'$
because $T_M\Omega$ contains only even numbers of excitations for each mode.
These matrix elements factorise into factors of the 
form 
\be \label{6.14}
<T_M \Omega, A_I A_I^\ast T_M \Omega>,\; 
<T_M \Omega, [A_I]^2 T_M \Omega>,\; 
<T_M \Omega, [A_I^\ast]^2 T_M \Omega>,\; 
<T_M \Omega, [A_I]^2 [A_I^\ast]^2 T_M \Omega> 
\ee
After dividing by $<T_M\Omega,T_M\Omega>$ these become with 
$\Omega_I=e^{c_I [A_I^\ast]^2/2}\Omega$ and $N_I=||\Omega_I||^2$
\be \label{6.14a}
N_I^{-1}\;<\Omega_I, A_I A_I^\ast \Omega_I>,\; 
N_I^{-1}\; <\Omega_I, [A_I]^2 \Omega_I>,\; 
N_I^{-1}\;<\Omega_I, [A_I^\ast]^2 \Omega_I>,\; 
N_I^{-1}\;<\Omega_I, [A_I]^2 [A_I^\ast]^2 \Omega_I> 
\ee
These numbers can be computed by elementary quantum mechanics of the 
harmonic oscillator by epanding the exponential
(we drop the label $I$ and $c$ is now just a number, not a kernel):
\ba \label{6.15}
&& N=||e^{c[A^\ast]^2/2}\Omega||^2=
\sum_{m,n}\;\frac{[c/2]^{m+n}}{m!\;n!}
<\Omega,A^{2m}\;[A^\ast]^{2n}\Omega>=    
\nonumber\\
&=&\sum_n\;\frac{[c/2]^{2n}\;[2n]!}{[n!]^2}=[1-c^2]^{-1/2}
\nonumber\\
&& ||A^\ast\; e^{c[A^\ast]^2/2}\Omega||^2
=\sum_n\;\frac{[c/2]^{2n}\;[2n+1]!}{[n!]^2}
=\frac{d}{dc} c\;
\sum_n\;\frac{[c/2]^{2n}\;[2n]!}{[n!]^2}
\nonumber\\
&=& \frac{d}{dc} c\;[1-c^2]^{-1/2}=[1-c^2]^{-3/2}
\nonumber\\
&& <e^{c[A^\ast]^2/2}\Omega,[A^\ast]^2\; e^{c[A^\ast]^2/2}\Omega>
=\sum_n\;\frac{[c/2]^{2n+1}\;[2n+2]!}{[n!]\;[n+1]!}
\nonumber\\
&=& c\sum_n\;\frac{[c/2]^{2n}\;[2n+1]!}{[n!]^2}
=c[1-c^2]^{-3/2}
\nonumber\\
&&||[A^\ast]^2\; e^{c[A^\ast]^2/2}\Omega||^2
=\sum_n\;\frac{[c/2]^{2n}\;[2(n+1)]!}{[n!]^2}
\nonumber\\
&=& \frac{d^2}{dc^2} c^2\;[1-c^2]^{-1/2}
=(2-c^2+c^4)[1-c^2]^{-5/2}
\ea
In the first step recognised the Taylor series of the function 
$\mapsto (1-x)^{-1/2}$ with radius of convergence $|x|<1$. In the second 
we used that the number is given by $d/dc\cdot c$ applied to the same series.
The third series turns out to be just $c$ times the second. In the fourth step
we applied an identity similar to the one in the second step. 

These 
computations reveal that 
the norms of excitations of the squeezed state
diverge at a faster rate than itself: In fact we can compute the rate of 
divergence by the same computation
\be \label{6.16}
\kappa_{I,n}^2:=||[A_I^\ast]^n\; e^{c_I[A_I^\ast]^2/2}\Omega||^2
=[\frac{d}{dc_I}]^n\; c_I^n\;[1-c_I^2]^{-1/2} 
\ee
which is of order $[1-c_I^2]^{-(2n+1)/2}$ as $c_I\to 1$ (for $c_I\to 0$ 
this becomes $(n)!$). Accordingly, if we 
want to use the squeezing operator as a dressing transformation we 
must use the span of the vectors 
\be \label{6.17}
b'_n=\prod_I\; \frac{[A_I^\ast]^{n_I}}{\kappa_{I,n_I}} \Omega,\;
||n||=\sum_I n_I<\infty
\ee
which is now orthonormal with respect to $<.,.>_M=<T_M.,T_M.>/||T_M\Omega^2||$
for $M$ sufficiently large. The span of the $b'_n$ is the same as that 
of the $b_n$, that is, ${\cal D}_0$.  

Returning to (\ref{6.13}) we focus first on the term with the strongest 
divergence corresponding to the $I=J=I'=K'$ term which is given by
\be \label{6.18}
\sum{I\in S_M} \; 
|\hat{u}^a_{2I}\; k_a^I (1-\hat{c}_I^2)|^2 \kappa_{I,2}^2\; 
[1-\hat{c}_I^2]^{1/2}
\ee
which can be made finite if $\hat{u}^a_I$ has finite mode support
because $\kappa_{I,2}^2$ diverges only as $[1-c_I^2]^{-5/2}$. Here 
we identify the mode labels $I$ on $T^D$ with $I\in \mathbb{Z}^D$ and 
$\hat{u}^a_I=<e_I,u^a>_{L_2}$ and $k^I$ is the momentum 
of model label $I$ (proportional to $I$). Next consider the configuration 
$I'=I\not=J=J'$ which contributes the term
\ba \label{6.19}
&&\sum_{I\not=J\in S_M} \; 
|\hat{u}^a_{I+J} \frac{1+\hat{c}_I}{\kappa_I}\; k_a^I \hat{\kappa}_J 
(1-\hat{c}_J)|^2\;\;\kappa_{I,1}^2 \kappa_{J,1}^2 N_I^{-1} N_J^{-1}
\nonumber\\
&=& \sum_{I\not=J\in S_M} \; 
|\hat{u}^a_{I+J} \frac{1+\hat{c}_I}{\kappa_I}\; k_a^I \hat{\kappa}_J 
(1-\hat{c}_J|^2\;\;(1-c_I^2)^{-1} (1-c_J^2)^{-1}
\nonumber\\
&=& \sum_{I\not=J\in S_M} \; 
|\hat{u}^a_{I+J} k_a^I|^2\;
\frac{[1+\hat{c}_I][1-c_J]}{[1-c_I][1+c_J]}\; 
\frac{\hat{\kappa}^2_J}{\kappa^2_I} 
\ea
which no longer converges as $M\to \infty$ because $\hat{u}_{I+J}$ cannot
suppress $I,J$ individually.\\
\\
\\
We conclude that the apparanetly natural choice of a squeezing operator as 
a dressing transformation does not quite work although it comes 
rather close. This does not exclude the existence
of better behaved dressing operator. The perturbative 
approach based on (\ref{6.5a}) 
however is unlikely to work because in our case $H_0,V$ depend on $u$ 
and so would the $T_M$ constructed from (\ref{6.5a})
while 
the needed transformation $T_M$ must not depend on $u$ so that for the 
commutator algebra we have the property 
$[C_M(u),C_M(v)]T_M=T_M \;[\hat{C}_M(u),\hat{C}_M(v)]$.

\section{S-Matrix techniques for the SDG}
\label{s7}

Given the fact that one can find Fock representations of $D(u)$ 
satisfying $[D(u),D(v)]=iD(-[u,v])$ and the split $C(u)=D(u)+S(u)$ we are 
reminded of the usual problem of QFT in which we are given a Hamiltonian
operator $H$ which can be split as $H=H_0+V$ where $H_0$ is well defined 
on a suitable Fock space and $V$ is the interaction which in general 
is no longer a well defined operator on that Fock space but only a 
quadratic form when normal ordered. 
Thus we have a correspondence between the roles 
of $C(u),D(u),S(u)$ and $H,H_0,V$ as in the previous section. 

In QFT one is often content to define the S-matrix $S$ which itself also is 
only a quadratic form. The S-matrix is constructed from the Moeller 
operator $M(t)=e^{it H} \; e^{-it H_0}$ perturbatively using 
Gell-Mann Low formula which expresses $V$ in terms of the free time 
evolution of the fields. In our case 
this would correspond to construct a 
quadratic form corresponding to $U(u):=e^{iC(u)}$ which can be deduced from 
the analogous Moeller operator expression 
$e^{i C(u)}\; e^{-iD(u)}$ as $e^{i D(u)}$
is a well defined unitary operator on the chosen Fock space. The
so constructed $U(u)$ is then a quadratic form representation of the 
SDG element $\varphi^u_{s=1}$ and one can analyse unitarity properties 
in the form $U(u)^\ast=U(-u)$ which does not require multiplication of 
quadratic forms. 

Accordingly we proceed to to construct 
\be \label{7.1}
M_u(s):=e^{is C(u)}\; e^{-is D(u)}
\ee
for simplicity in the Fock representation with $\kappa=1$ for scalars. 
It obeys the ODE
\be \label{7.2}
-i\frac{d}{ds}\; M_u(s)
=e^{is C(u)}\;[C(u)-D(u)]\;e^{-is\;D(u)}
=M(s)\; e^{is D(u)}\;S(u)\;e^{-is\;D(u)}
=:=M_u(s)\; S_s(u)
\ee
where $S_s(u)$ is the ``free evolution'' of $S(u)$ which treats $\pi,\phi$ 
or equivalently $A,A^\ast$
as scalar densities of weight 1/2. We have explicitly
\be \label{7.3}
S(u)=-\frac{i}{4}\int\; d^Dx\; u^a_{,a}\; [A^2-(A^\ast)^2]
\ee
and thus with the abbreviation $v=u^a_{,a}$ 
\ba \label{7.4}
S_s(u)
&=& =-\frac{i}{4}\int\;d^Dx\; v(x)\; 
|\det(\partial \varphi^u_s(x)/\partial x)|\; [A^2-(A^\ast)^2](\varphi^u_s(x))
\nonumber\\
&=& -\frac{i}{4}\int\;d^Dx\; v(\varphi^u_{-s}(x))\; 
[A^2-(A^\ast)^2](x)
\nonumber\\
&=:& S(u_s)
\ea
Here $s\mapsto \varphi^u_s$ is the one parameter group of diffeomorphisms
generated by the vector field $u$, that is, the integral curve 
$\varphi^u_s(x)=c^u_x(s)$ satisfying $c^u_x(0)=x,\;\frac{d}{ds} 
c^u_s(x)=u(c^u_x(s))$. 
Explicitly  $v(\varphi^u_{-s}(x))=[e^{-s L_u}\cdot v](x)$ where $L_u$ denotes
the Lie derivative on scalars.

The solution of (\ref{7.2}) is given by 
\be \label{7.5}
M_u(s)={\cal P}_r\;\exp(i\int_0^s\; ds'\; S_{s'}(u))
\ee
where the path ordering symbol orders the latest parameter dependence to 
the right. What now simplifies (\ref{7.5}) daramatically is the fact 
that the quadratic forms $S(u),S(u')$ are formally mutually commuting.
Strictly speaking, this statement is ill defined as it stands because 
quadratic forms cannot be multiplied. However, it makes sense in the presence 
of a mode cut-off when computing the commutator and if after computing 
the commutator the cut-off can be removed. This will be made precise 
in the next section. For the moment a formal computation will suffice
and we abbreviate $v=u^a_{,a},\;v'=u^{\prime a}_{,a}$
\ba \label{7.6}
[S(u),S(u')]
&=& \frac{1}{16}\int\; d^Dx\; d^Dy\; 
[u(x)\;u'(y)-u(y)\;u'(x)]\;[A^2(x),A^\ast(y)^2]
\nonumber\\
&=& \frac{1}{8}\int\; d^Dx\; d^Dy\;\delta(x,y) 
[u(x)\;u'(y)-u(y)\;u'(x)]\;\{A(x) A^\ast(y)+A^\ast(y) A(x)\}
\ea
The r.h.s. is no longer normal ordered, bringing it into normal ordered
form yields the constant 
\be \label{7.7}
\int\;=\frac{1}{8}\int\; d^Dx\; d^Dy\;\delta^2(x,y)\;
[u(x)\;u'(y)-u(y)\;u'(x)]
\ee
which is of the type $0\cdot \infty$. It is given the value zero in the above
mentioned precise sense.

Since the $S_s(u)$ are mutually commuting we can drop the normal ordering 
symbol and simplify 
\be \label{7.8}
M_u(s)=\exp(i\int_0^s\; ds'\; S_{s'}(u))=\exp(iS(u_s)),\;\;
u_s(x)=\int_0^s\; dr\; [e^{-r L_u} u^a_{,a}](x)
\ee
This shows that we can define $e^{iC(u)}=M_u(1) e^{i D(u)}$ as a quadratic
form if and only if we can define $e^{iS(v)}$ as a quadratic form for any 
$v$. 

We could now do this using the {\it counter-term method}. That is, 
we try to give meaning to the Taylor expansion 
\be \label{7.9}
e^{i S(v)}=\sum_{n=0}^\infty\; \frac{i^n}{n!}\; S(v)^n
\ee
but already $S(v)^2$ is ill-defined because quadratic forms cannot be 
multiplied. Accordingly we modify $S(v)$ by a counter-term 
$S(v)\to S(v)+Z(v)$. The only counter term allowed is a constant 
$Z(v)$ because the counter terms must be of the same form as they 
appear in $S(v)$ modulo constants. In fact, only constants are needed 
in order to bring powers of $S(v)$ into normal ordered form.
To determine $Z(v)$ we 
define inductively (note that $S(v)=:S(v):$ is already normal ordered)
\ba \label{7.10}
&& S(v)^2= :S(v)^2:+Z_{2,0}(v)\; 1,\;
S(v)^3= :S(v)^3:+Z_{3,0}(v)\; S(v),\;
\\
&& S(v)^4= :S(v)^4:+Z_{4,0}(v)\; :S(v)^2: + Z_{4,1}(v)\; 1,\;
S(v)^n= \sum_{k=0}^{[n/2]}\; Z_{n,k}(v)\;:S(v)^{n-2k}:
\nonumber\\
\ea
where $[.]$ is the Gauss bracket. The constants $Z_{n,k}(v)$ can be 
inductively computed by this method but it is very difficult to 
find the building scheme in this way. Fortunately, for the particular
form of 
$S(v)$ we can apply proposition \ref{prop.a1} in the appendix to immediately 
write $e^{i S(v)}$ in normal ordered form. To apply the proposition
we note that 
\be \label{7.11}
iS(v)=a(A^\ast,A^\ast)/2+c(A,A)/2,\;
c(x,y)=\frac{1}{2}v(x)\;\delta(x,y),\; a(x,y)=-c(x,y)
\ee
thus we consider in the terminology of that proposition the 
case $b=0, a=-c$ so that we can apply formula (\ref{a.25})
\be \label{7.12}
e^{iS(v)}=e^{k/2}\;
e^{l(A^\ast,A^\ast)/2}\;e^{m(A^\ast,A)}\;e^{r(A,A)/2},\;
k=-{\rm Tr}(c {\rm th}(c)),\;
l=-r={\rm th}(c),\;
m=-\ln({\rm ch}(c))
\ee
This means that 
\be \label{7.13}
e^{iS(v)-k/2\cdot 1}=[e^{i S(v)}]_{{\rm q.f.}}
\ee
is a well defined quadratic form with form domain given by the span of 
Fock states which can be interpreted as renormalised {\it scattering matrix}.

This is quite surprising: From (\ref{7.10}) one 
would not have guessed that the explicit expression 
of the normal ordering constants $Z_{n,k(v)}$ allows for a 
reorganisation of the series to the effect that the exponential 
of the {\it renormalised scaling transformation quadratic form}
\be \label{7.14}
S_{{\rm ren}(v)}=S(v)+ik(v)/2\cdot 1
\ee
results in a well defined quadratic form. Note that $S_{{\rm ren}}(v)$ 
by itself is not a symmetric quadratic form. This is because $e^{iS(v)}$ is 
formally 
unitary, thus $[e^{iS(v)}]^\ast=e^{-iS(v)}=e^{iS(-v)}$. This is 
implemented by (\ref{7.12}) because $l,r$ are odd under $c\to -c$ caused 
by $v\to -v$ while $k,m$ are even. Due to 
$c^2(x,y)=\int\; d^Dz\; v(x)\delta(x,z)\;v(z)\;\delta(z,y)
=v^2(x) \delta(x,y)$ we have explicitly
\be \label{7.15}
k(v)=\int\; d^Dx\; v(x)\;{\rm th}(v(x))\; \delta(x,x)
\ee
which must be interpreted again to the effect that one first works at finite 
mode cut-off upon which $\delta(x,x)$ is replaced by the finite number
$p_M(x,x),\; p_M(x,y)=\sum_{I\in S_M}\; e_I(x)\; e_I(y)^\ast$, then computes 
$S_{{\rm ren}}(v)$ and then takes the limit.    

Of course $e^{iS_{{\rm ren}}(v)}$ still cannot be multiplied by 
$e^{iS_{{\rm ren}}(v')}$ since these are just quadratic forms. However we have 
due to $k(v)=k(-v)$
\be \label{7.16}
e^{iS_{{\rm ren}}(v)}\;e^{iS_{{\rm ren}}(v')}\;  
e^{-iS_{{\rm ren}}(v)}\;e^{-iS_{{\rm ren}}(v')}  
=e^{iS(v)}\;e^{iS(v')}\;e^{-iS(v)}\;e^{-iS(v')}  
=e^{i[S(v)+S(v')-iS(v)-iS(v')]}  
=1
\ee
where we used (\ref{7.6}). In this sense the SDA holds for the renormalised 
scaling transformations.

\section{Quadratic form constraint algebra}
\label{s8}

In the first subsection we review the procedure to construct 
commutators of constraint {\it operators}. We use it to motivate 
the second subsection to define commuators of {\it qudratic forms} 
which seems contradictory as quadratic forms cannot be multiplied.
In the third subsection we apply this concept to the SDA. 

\subsection{Operator constraint algebras}
\label{s8.1}

The ususal setup for a Poisson algebra of first class constraints 
$C(u)$ with smearing functions $u\in V$ and with 
structure constants $\{C(u),C(v)\}=C(f(u,v))$ where $f(u,v)=-f(v,u)$ are
constant functions on the phase space is to ask for a Hilbert space 
$\cal H$ with dense subspace ${\cal D}$ on which the constraints $C(u)$ 
are implemented as {\it operators}, i.e. $C(u)\;{\cal D}\subset {\cal H}$.
In order that also commutators be well-defined we ask for an 
{\it invariant operator domain}, i.e. we require even that 
$C(u)\;{\cal D}\subset {\cal D}$. The joint kernel of all the $C(u)$ are 
common generalised zero eigenvectors, i.e. linear functionals 
$l:\;{\cal D}\to \mathbb{C}$ such that 
\be \label{8.1a}
l[C(u)\psi]=0\;\forall u\in U,\;\forall \psi\in {\cal D}
\ee
Suppose first that the classical $C(u)$ generate a Lie algebra 
$\{C(u),C(v)\}=C(f(u,v))$ with structure constants 
$f:\;V\times V\to V$. If we have a representation of the $C(u)$
in the quantum theory, that is,
\be \label{8.2a}
[C(u),C(v)]=iC(g(u,v))
\ee
with structure constants 
$g:\;V\times V\to V$ then we are granted that the space of solutions 
$l$ is not too small in the 
following sense: If $[C(u),C(v)]$ would not be of the form $C(g(u,v))$ 
then $l$ automatically satisfies more conditions than just (\ref{8.1a}). 
This allows that $g(u,v)\not=f(u,v)$ i.e. the constraint algebra is 
is {\it mathematically non-anomalous}. If in addition $g=f$ we call it
{\it physically non-anomalous}. 

Now it may happen that in the classical theory $f$ is not a constant function 
on the phase space $\Gamma$, then $f:\; V\times V\to V\times 
{\rm Fun}(\Gamma)$ takes values in the phase space function valued smearing
functions (in short: structure functions). In the quantum theory then 
$f:\; V\times V\to V\times 
{\cal L}({\cal H})$ will be an operator valued smearing function. In this 
case a relation like (\ref{8.2}) faces ordering issues. In case that  
\be \label{8.3a}
[C(u),C(v)]=i\int\; d^D\; C(x)\;[g(u,v)](x)
\ee
with $g(u,v)$ is ordered to the right and such $g(u,v)(x)\; {\cal D}\to
{\cal D}$ will still lead to no etra condition (the same if $g=f$).

Now a solution $l$ is entirely determined by its values $l(\psi),\; \psi\in 
{\cal D}$ and thus given an orthonormal basis $b_n\in {\cal D}, n\in {\cal N}$
any solution 
$l$ can be written in the form
\be \label{8.4a}
l=\sum_n l(b_n)\; <b_n,.>_{{\cal H}}
\ee
Thus, to construct a solution we consider arbitrary coefficients $l_n$ which 
must solve 
\be \label{8.5a}
\sum_n\; l_n\; <b_n,C(u)\;b_{n'}>=0\; \forall \; u\in V,\;n'\in {\cal N} 
\ee
The point of repeating this well known formalism is to note that the 
condition (\ref{8.5a}) {\it never needs that $C(u)$ is an operator}! 
To state (\ref{8.5a}) it is sufficient to require that $C(u)$ is a quadratic
form with form domain given by ${\cal D}$. The solutions $l$ typically are not 
elements in $\cal H$ but merely distributions (linear functinals) on 
$\cal D$. We therefore advertise the point of view that {\it it is sufficient
to quantise constraints as quadratic forms}.

This however appears to meet a caveat: Quadratic forms cannot be multiplied
(even operators cannot unless e.g. 
they have a common dense invariant domain) 
thus it appears impossible to check whether (\ref{8.2a}) or (\ref{8.3a})
hold. In the first subsection 
we will therefore develop a formalism that could 
be called {\it quadratic form commutator algebra}. It depends on an 
additional structure called a {\it regularisation of the quadratic form
product} and which is inspired by the CQFT procedure laid out in section 
\ref{s6}. In the second subsection we will then apply this formalism
to the SDA.

\subsection{Truncation structures and quadratic form commutators}
\label{s8.2}

\begin{Definition} \label{def8.1} ~\\
i.\\
Fix some system of coordinates on $\sigma$.
Let $\{e_I\}_{I\in S}$ be a real valued ONB of the 1-particle Hilbert space 
$L_2:=L_2(\sigma,d^Dx)$ of scalar fields on $\sigma$. 
Here the mode labels $I$ take values in the mode set $S$. 
A truncation structure on $S$ is a nested 
sequence $\mathbb{N}_0\to S;\;M\mapsto S_M$ of finite subsets $S_M\subset S$
such that $S_0=\emptyset,\; S_M\subset S_{M+1},\; 
\cup_{M=0}^\infty \;S_M=S$.\\
ii.\\
Given an ONB with mode set $S$ and a truncation structure on $S$, we define
the truncation kernel $k_M,\; M\in \mathbb{N}_0$ on $f\in L_2$ by
\be \label{8.1}
[k_M\; f](x)=\sum_{I\in S_M}\; e_I(x)\; <e_I,f>_{L_2}    
\ee
iii.\\
Suppose that $H=H[q,p]$ is a real valued polynomial functional of canonically 
conjugate tensor fields $q,p$ of type $(A,B,w),(B,A,1-w)$
with a symmetric ordering chosen (e.g. the one obtained from 
Weyl quantisation). The canonical truncation 
$H^c_M$ of $H$ is obtained by replacing $H$ by $H_M:=H[k_M q,k_M p]$ where 
$k_M q, k_M p$ denotes the tensor component-wise application of 
(\ref{8.1}).\\
iv.\\
Let $\kappa$ be a real valued, positive, scalar kernel and 
$A^\mu=2^{-1/2}[\kappa q^\mu-i \delta^{\mu\nu}\;\kappa^{-1} p_\nu]$
where $\mu$ is a compound index i.e. 
$q^\mu=q^{a_1..a_A}_{b_1..b_B},\; p_\mu=p_{a_1..a_A}^{b_1..b_B}$.
Let $H[A,A^\ast]$ be the rewriting of $H$ in terms of $A,A^\ast$ with 
normal ordering prescription applied. The normal truncation 
$H^n_M$ of $H$ corresponding to $\kappa$
is given by replacing $H$ by $H[k_M A,k_M A^\ast]$ where
$k_M A, k_M A^\ast$ denotes the component-wise application of (\ref{8.1}).
\end{Definition}
This could be generalised to more general, positive definite 
tangent space metrics $g^{\mu\nu}$ replacing $\delta^{\mu\nu}$. We 
note that in Hilbert space representations $q,p, A, A^\ast$ are 
typically operator valued distributions, while  
the truncations $k_m q, k_M p, k_M A, k_M A^\ast$ are operators
if the functions $e_I$ are chosen from a suitable space of test functions 
(e.g. Hermite functions for $\sigma=\mathbb{R}^D$ or momentum eigenfunctions 
for $\sigma=T^D$) because the sum in (\ref{8.1}) is finite. For the same 
reason, $H^c_M, H^n_M$ are then well defined operators thanks to the 
assumed polynomial dependence on $q,p$.      
\begin{Definition} \label{def8.2} ~\\
Let $\omega$ be a regular state (i.e. on the Weyl algebra $\mathfrak{A}$
generated 
by the Weyl elements $W^\mu_\nu(e_I,e_J)=\exp(i[<e_I,q^\mu>+<e_J,p_\nu>)$. 
Let $({\cal H}, \Omega, \rho)$ be its GNS data given by a Hilbert space 
$\cal H$, a representation $\rho$ of $\mathfrak{A}$ on $\cal H$ and a 
cyclic vector $\Omega$. Let ${\cal D}=\rho(\mathfrak{A})\Omega$ be 
the dense subspace of $\cal H$ invariant for $\rho(\mathfrak{A})$.
We will drop the letter $\rho$ for simplicity and do not distinguish 
between the algebra element and its operator representation. The regularity 
of the state means that we have also access to the 
$q_I=<e_I,q>, p_I=<e_I,p>, A_I=<e_I,A>, A_I^\ast=<e_I,A>^\ast$, not only their 
exponentials.\\ 
i.\\
The canonical quadratic form corressponding to $H$, if it exists, is defined 
as the weak limit 
\be \label{8.2} 
<\psi,\;H^c\;\psi'>_{{\cal H}}
:=\lim_{M\to\infty}\; <\psi,\;H^c_M\; \psi'>_{{\cal H}}   
\ee
with dense form domain $\psi,\psi'\in {\cal D}$.\\
ii.\\
The normal quadratic form corresponding to $H$ and $\kappa$, if it exists, 
is defined as the weak limit 
\be \label{8.3} 
<\psi,\;H^n\;\psi'>_{{\cal H}}
:=\lim_{M\to\infty}\; <\psi,\;H^n_M\; \psi'>_{{\cal H}}   
\ee
with dense form domain $\psi,\psi'\in {\cal D}$.
\end{Definition}
These definitions can be applied to classical constraint functions and 
yield canonical or normal truncations $C_M(u)$ as operators with 
dense and invariant domain $\cal D$ and, if existent, quadratic forms 
$C(u)$ with form domain $\cal D$. 
The smearing functions $u$ of the constraints belong to a certain space
of tensor fields 
which we take to be component wise spanned by the $e_I$ as well.

Quadratic forms cannot be multiplied.
However, one may be able to define their commutator in the following sense:
\begin{Definition} \label{def8.3} ~\\
Suppose that operator 
truncations $C_M(u),\; C_M(v)$ have quadratic form  
limits $C(u), C(v)$ as above. Their commutator, if existent, is defined 
as the quadratic form given by the weak limit  
\be \label{8.4}
<\psi,\;[C(u),C(v)]\;\psi'>_{{\cal H}}:=
\lim_{M\to\infty}\; <\psi,\;[C_M(u),C_M(v)]\;\psi'>_{{\cal H}}
\ee
with dense form domain $\psi,\psi'\in {\cal D}$.
\end{Definition}
To see why (\ref{8.4}) has a chance to converge even though neither 
$C(u) C(v)$ or $C(v) C(u)$ have any meaning, we note first that 
$[C_M(u),C_M(v)]$ is a well defined operator with dense invariant 
domain $\cal D$. Now due to the minus sign involved in 
$[C_M(u),C_M(v)]=C_M(u) C_M(v)-C_M(v) C_M(u)$ it is possible that the 
terms that diverge when we consider 
$<\psi, C_M(u) C_m(v)\psi'>$ and 
$<\psi,C_M(v) C_m(u)\psi'>$ separately in fact cancel out. Whether or not 
this happens may depend on all the choices $S, e_I, g^{\mu\nu}, \kappa, 
\omega$. It may also happen that one finds that (\ref{8.4}) does not 
exist with the definition of $C_M(u)$ as given in definition \ref{def8.1}
but that it exists if one subtracts from $C_M(u)$ certain counter terms 
which are of the same form as the terms that define $C_M(u)$ but with 
different coupling constants, see the previous section. In this 
case one will work with these subtracted, renormalised quadratic forms
which we denote by $C(u)$ again.
\begin{Definition} \label{def8.4} ~\\ 
Let $C(u)$ be classical first class constraints with structure constants 
$f(u,v)=-f(v,u)$ defined by the Poisson bracket $C(f(u,v)):=\{C(u),C(v)\}$. 
Here 
$f:V\times V\to V$ where $V$ is the space of smearing functions as defined 
above. Suppose that corresponding 
constraint quadratic forms $C(u)$ with quadratic form valued commutators 
can be defined as above. Then the quadratic form commutator algebra is 
called mathematically non-anomalous
\be \label{8.5}
[C(u),C(v)]=i\;C(g(u,v))
\ee
for any $g:\; V\times V\to V, \; g(v,u)=-g(u,v)$. It is called physically
non-anomalous if even $g=f$.
\end{Definition}
This definition encompasses constraints which form a Lie algebra but this 
is not the most general situation that one encounters in practice. For 
instance in GR
the algebra of SDC $C$ and HC $H$ 
closes in the sense that their Poisson brackets 
are linear combinations of SDC and HC but no longer with structure  
constants but rather structure functions on the classical phase space. 
Thus while we can still write $C(f(u,v)):=\{H(u),H(v)\}$ now 
$f$ depends on the point in the phase space at which one computes the 
Poisson bracket and therefore the quadratic form correspondent 
$C(f(u,v))$ is ill defined: It lacks an ordering prescription for the 
operator or quadratic form corresponding to $f(u,v)$. In particular,
the classical expression for $C(f(u,v)$ when normal ordered and which 
one expects to obtain from the commutator calculation if $H(u), H(v)$ are 
normal ordered, 
is not given by $\int \; d^Dx\; C(x)\; f(u,v)(x)$ with both $C(x),f(u,v)(x)$
normal ordered and $C(x)$ to the left. 
Note that normal ordering is crucial in order that the weak 
quadratic form limit exists. Thus while one could {\it define} 
the ordering of $i C(f(u,v))$ by $[H(u),H(v)]$, in this case this 
case $C(f(u,v))$ does not have the following property:
For any linear functional $l:\cal D\to \mathbb{C}$ such that 
$l(C(u)\psi)=0$ for all $\psi,u$ we automatically have 
$l(C(f(u,v))\psi)\equiv 0$ for all $u,v,\psi$. 
Similarly, a symmetric ordering of $H(u)$ induces that 
$C(f(u,v))=-i\;[H(u),H(v)]$ is symmetrically ordered which therefore
is not of the form of $\int \; d^Dx\; C(x)\; f(u,v)(x)$. See the discussion 
in \cite{29} for more details. 

The reason for why this is not problematic for commutators of 
quadratic forms in contrast to the commutator of operators is the 
following: If $C(u)$ is an 
actual operator with dense invariant domain $\cal D$ then indeed 
$l(C(u)\; C(v)\psi)$ is well defined for $l:\; {\cal D}\to \mathbb{C}$ 
for any $\psi\in {\cal D}$ since also $C(u)\;C(v)\psi\in {\cal D}$. 
However, if $C(u)$ is just a quadratic form then the formal object 
$C(u)\psi$ which one may try to define by $C(u)\psi:=\sum_n <b_n,C(u)\psi> 
b_n$ in terms of an ONB is not even an element of $\cal H$ any more! 
Therefore one cannot define $C(u)\psi$ and even less $C(u)C(v)\psi$. Thus 
in contrast to the case of operators, where $l(C(u)\psi)=0$ implies 
$l([C(u),C(v]\psi)=l(C(u)C(v)\psi)-l(C(v)C(u)\psi)=0$ such a conclusion 
is simply not possible for the case of quadratic form valued constraints 
because only $[C(u),C(v)]$ is defined as a quadratic form not the individual
products $C(u)C(v), C(v)C(u)$. One may be tempted to introduce a resolution 
of unity $l([C(u),C(v)]\psi)=\sum_n\; 
[l(C(u)b_n)\;<b_n,C(v)\psi>-l(C(v)b_n)\;<b_n,C(u)\psi>]$
but this ``identity'' is none because $C(u)\psi)\not\in {\cal H}$ 
and thus cannot 
be expanded into the $b_n$ (formally the interchange of limits $M\to \infty$
and $n\le N,\; N\to\infty$ implicit in this identity is not allowed).
Therefore the anomaly conclusion that holds for operators does not 
apply for quadratic forms.

\subsection{Application to the SDA}
\label{s8.3}
 
Still one may wish to avoid these complications and  
either work with an Abelianised version of the constraints or the 
corresponding master constraint \cite{9,30} which avoids the structure 
functions. More concretely, write the HC in density weight zero form $H(x)$ by 
multiplying by suitable powers of the determinant $Q$ 
of the spatial D-metric, and consider the normal ordered version of the 
master constraint
\be \label{8.6}
M:=\int \; d^Dx \; H(x)\; Q(x)\; H(x)
\ee
and try to perform above steps to construct a quadratic form. This is 
much harder than for $C(u)$ because the integrand of (\ref{8.6}) 
is not a polynomial. The reason for why we use a non-polynomial form for 
$M$ is that then classically $\{C(u),M\}=0$. In a polynomial form we 
could not close the Poisson algebra of the $M, C(u)$ not even with structure 
functions. Thus in the form (\ref{8.6}) we are left to to check $[C(u),M]=0$ in 
the quantum theory which is now a Lie algebra situation. 
 
In this paper we just focus on the SDA. We have the astonishing result:
\begin{Proposition} \label{prop8.1} ~\\
For any choice of $e_I, S, \kappa$ the normal ordered constraint quadratic 
form $C(u)$ exists in the Fock representation $\omega$ dictated by $\kappa$
and the algebra of quadratic form valued constraints not only exists but 
is also physically non-anomalous.
\end{Proposition}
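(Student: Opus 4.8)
The plan is to dispatch the proposition in two parts — existence of the quadratic form $C(u)$, then closure and absence of anomaly of the quadratic-form commutator — working throughout with the finite-cutoff definitions of Section \ref{s8.2}, with $\omega$ the Fock state fixed by $\kappa$ and $\mathcal D$ the finite-particle subspace (a dense domain, invariant under every truncated operator $C_M(u)$).

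\medskip\noindent\emph{Existence.} Expanding the normal-ordered $C(u)$ and its truncation in the chosen real ONB gives
\be
C_M(u)=\sum_{I,J\in S_M}\bigl[\,c^u_{IJ}\,A_I^\ast A_J+a^u_{IJ}\,A_I^\ast A_J^\ast+\overline{a^u_{IJ}}\,A_IA_J\,\bigr],
\ee
with coefficients that are finite pairings of the $e_I$ against $u$ and the kernels $\kappa^{\pm1}$ (one must choose the $e_I$ in a test-function space adapted to $\kappa$, so that $\kappa^{\pm1}$ and first derivatives map them into $L_2$; this is exactly the mild restriction already met in Section \ref{s5} and is implicit in regularity of $\omega$ on the Weyl algebra generated by the $W(e_I,e_J)$). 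For fixed $\psi,\psi'\in\mathcal D$ only the finitely many modes excited in $\psi$ or $\psi'$ contribute to $\langle\psi,C_M(u)\psi'\rangle$, because normal ordering sends every annihilator to the right and any term carrying an index outside that finite set kills $\psi$ or $\psi'$; hence the matrix element is eventually independent of $M$, so the weak limit defining the quadratic form $C(u)$ — and likewise $C(-[u,v])$ — exists automatically.

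\medskip\noindent\emph{Commutator.} Since all fields are truncated, $[C_M(u),C_M(v)]$ is an honest operator identity on $\mathcal D$, free of divergences. I would compute it by replaying the classical manipulations behind $\{C(u),C(v)\}=-C([u,v])$ — integration by parts on $\sigma$, the Leibniz rule, and the cutoff-insensitive vector-field identities (\ref{4.5}), (\ref{5.3d}) — keeping track of the projectors $k_M$. The outcome has the form
\be
[C_M(u),C_M(v)]=i\,C_M(-[u,v])+R_M(u,v)+z_M(u,v)\,1,
\ee
where $R_M$ is the bilinear ``commutator defect'' produced by the failure of $k_M$ to commute with $\partial_a$ (and with $\kappa^{\pm1}$): its coefficients are pairings $\langle e_I,\;X(1-k_M)Y\,e_J\rangle$ with $X^\dagger e_I,\ Ye_J\in L_2$, and $z_M$ is a c-number. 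Now $\langle\psi,C_M(-[u,v])\psi'\rangle\to\langle\psi,C(-[u,v])\psi'\rangle$ by the existence argument; and $\langle\psi,R_M(u,v)\psi'\rangle$ is a finite sum of coefficients each of which tends to $0$, because $k_M\to1$ strongly on $L_2$ forces $(1-k_M)Ye_J\to0$. So everything reduces to showing $z_M(u,v)\to0$.

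\medskip\noindent I expect the step $z_M\to0$ to be the conceptual heart of the argument — it is where the ``surprise'' of the proposition lives. Taking the vacuum expectation of the previous display (the bilinear parts having zero vacuum expectation) gives $z_M(u,v)=\langle\Omega,[C_M(u),C_M(v)]\Omega\rangle=2i\,\mathrm{Im}\langle C'_M(u)\Omega,C'_M(v)\Omega\rangle=4i\,\mathrm{Im}\sum_{I,J\in S_M}\overline{a^u_{IJ}}\,a^v_{IJ}$, with $C'_M(u)$ the $(A^\ast)^2$-part. Because $\kappa$ is a real, symmetric, positive kernel, the $e_I$ are real, and the quantization carries the overall factor $i$ in front of a real first-order differential expression, each $a^u_{IJ}$ is purely imaginary; hence $\overline{a^u_{IJ}}\,a^v_{IJ}$ is real and $z_M(u,v)\equiv0$ at every finite $M$. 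This is precisely the cutoff-level incarnation of the $0\cdot\infty=0$ phenomenon of Section \ref{s7}: the would-be anomaly $\propto\int d^Dx\,d^Dy\,\delta^2(x,y)[\cdots]$ is regulated by the \emph{symmetric} kernel $p_M(x,y)=\sum_{I\in S_M}e_I(x)e_I(y)$, and the antisymmetry under $u\leftrightarrow v$ then annihilates it. Combining the three limits yields $[C(u),C(v)]=i\,C(-[u,v])$ as quadratic forms, i.e. $g=f$, which is the claimed physical non-anomaly; and since nothing in the reasoning constrained $S$, the $e_I$, or the real kernel $\kappa$ beyond the test-function compatibility noted above, it holds for every Fock representation.
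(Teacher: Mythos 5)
Your overall architecture (truncate, exploit stabilization of matrix elements, compute the finite-$M$ commutator, kill the central term at every finite $M$, then use $k_M\to 1$ to resum) matches the paper's, but the critical step — why the c-number part vanishes — is handled quite differently, so it is worth contrasting the two. The paper computes $[C_M(u),C_M(v)]$ head-on in a sequence of index-exchange manipulations (eqs.~(\ref{8.19})--(\ref{8.28})) and identifies the central constant as ${\rm Tr}(UV-VU)$ over the finite compound index $\alpha=(I,a,b)$; the conclusion $z_M=0$ is then just cyclicity of the trace for \emph{finite} matrices, which holds for arbitrary coefficient matrices $U,V$ and makes no use of reality or of the specific structure of the SDC bilinear. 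Your route instead observes that with real $e_I$, real $u$ and a real symmetric $\kappa$, the coefficient of $A_I^\ast A_J^\ast$ in the normal-ordered $C_M(u)$ is $a^u_{IJ}=-\tfrac{i}{4}(U^u_{IJ}+U^u_{JI})$, hence purely imaginary, so that $\overline{a^u_{IJ}}\,a^v_{IJ}$ is real term by term and $z_M=4i\,{\rm Im}\sum\overline{a^u}a^v=0$ manifestly. Both arguments are correct and both crucially need the truncation (the paper to have a well-defined finite trace, you to have a finite sum whose imaginary part is defined); but the trace-cyclicity argument is the more robust one — it would survive, say, a complex or non-symmetric kernel, whereas your reality argument is tied to the precise antihermitian structure of the SDC and the reality of $\kappa, e_I, u$. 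The one place where your proposal is thinner than the paper's proof is the bilinear remainder: you assert $[C_M(u),C_M(v)]=iC_M(-[u,v])+R_M+z_M\cdot 1$ with $R_M$ a $(1-k_M)$-defect whose matrix elements vanish, whereas the paper actually derives the remainder's coefficient as $\sum_{J\in S_M}(U_{IJ}V_{JK}-V_{IJ}U_{JK})$ (eq.~(\ref{8.28})) and then shows explicitly (eqs.~(\ref{8.30})--(\ref{8.35})) that the unconstrained $J$-sum converges by Parseval and evaluates to the matrix $W_{IK}$ built from $[u,v]$. Your claim that $R_M\to 0$ because $k_M\to 1$ strongly is the right mechanism, but the nontrivial content — that the $J$-sum is of the form $\langle\kappa^{-1}X\,e_I,\ k_M\ \kappa Y\,e_K\rangle$ with both arguments landing in $L_2$ so that completeness applies — is exactly what makes the paper's derivation (\ref{8.31})--(\ref{8.35}) unavoidable, and you should spell it out rather than gesture at it.
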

Before we go into the details, let us explain the reason for why this 
works: Any normal ordered polynomial $C_M(u)$ is automatically a 
well defined quadratic form in the Fock representation. Therefore 
in the computation of $C_M(u),C_M(v)$ the possible source of trouble results 
from a term that is not automatically normal ordered. This is a term 
of the symbolic form $[A_M^2,(A_M^\ast)^2]$. In contrast to the untruncated 
case $M=\infty$ this can be meaningfully computed at any finite $M$.
When reordering into normal order, one picks up a normal ordering constant.
However that constant depends on a symmetric combination of labels 
$I,J\in S_M$ and we have to sum it over $S_M$ 
against an antisymmetric matrix built from a truncated version of 
$[u,v]$. As all sums are finite, the unwanted term cancels.\\
\begin{proof}:\\
Let ${\cal D}_M$ be the subspace of ${\cal D}$ which is spanned by 
Fock states $b_n$ with occupation number $n_I=0$ for all $I\not\in S_M$.
Then given $\psi,\psi'\in {\cal D}$ we find a finite, sufficiently large 
$M$ such that $\psi,\psi'\in {\cal D}_M$. It is sufficient to consider 
the case of tensor fields $q^a_b, p_a^b$ which displays all possible features 
of the calculation. We consider the annihilation operator 
\be \label{8.7}
A_{ab}:=2^{-1/2}[\kappa\; q^a\;_b-i \kappa^{-1}\;p_a\;^b]
\ee
and Fock vacuum $A_{ab}\Omega=0$. Note that 
\be \label{8.8}
[A_{ab}(x),A_{cd}(y)]=\delta_{ac}\; \delta_{bd}\;\delta(x,y)
\ee
For the truncated version $A_{ab,M}=k_M\cdot A_{ab}$ correspondingly
\be \label{8.9}
[A_{ab,M}(x),A_{cd,m}(y)]=\delta_{ac}\; \delta_{bd}\;k_M(x,y),\;
k_M(x,y)=\sum_{I\in S_M}\; e_I(x)\; e_I^\ast(y)
\ee
which is a smooth function, of rapid decrease on $\mathbb{R}^D$ 
or periodic on $T^D$, for above choice of $e_I$. The classical 
SDC is given by 
\ba \label{8.10}
C(u) &=&
\int\;d^Dx\; p_a^b\;(u^c\; q^a_{b,c}-u^a_{,c} q^c_b+u^c_{,b} q^a_c)
\\
&=& \frac{i}{2}\sum_{a,b,c}\;\int\; d^Dx\;
[\kappa(A_{ab}-A_{ab}^\ast)] \;
\{
u^c\;[\kappa^{-1}(A_{ab}+A_{ab}^\ast]_{,c}
-u^a_{,c}\; [\kappa^{-1}(A_{cb}+A_{cb}^\ast]
+u^c_{,b}\; [\kappa^{-1}(A_{ac}+A_{ac}^\ast]
\} 
\nonumber
\ea
Its truncation is, still not in normal ordered form, with the abbreviations 
\be \label{8.11}
A_{ab}^I:=<e_I,A_{ab}>,\; 
F^I_{ab}:=A_{ab}^I-[A_{ab}^I]^\ast,\;\;
G^I_{ab}:=A_{ab}^I+[A_{ab}^I]^\ast
\ee
given by
\ba \label{8.12}
C_M(u) &=&\frac{i}{2}\sum_{I,J\in S_M}\;\sum_{a,b,c}\;
F^I_{ab}\;
\{<(\kappa\cdot e_I),u^c (\kappa^{-1}\cdot e_J)_{,c}>\; G^J_{ab}
-<(\kappa\cdot e_I),u^a_{,c} (\kappa^{-1}\cdot e_J)>\; G^J_{cb}
\nonumber\\
&& +<(\kappa\cdot e_I),u^c_{,b} (\kappa^{-1}\cdot e_J)>\; G^J_{ac}
\ea
This suggests to define 
\be \label{8.13}
U^{ab,cd}_{IJ}:=
<(\kappa\cdot e_I),u^e (\kappa^{-1}\cdot e_J)_{,e}>\; \delta^{ac} \;\delta^{bd}
-<(\kappa\cdot e_I),u^a_{,c} (\kappa^{-1}\cdot e_J)>\; \delta^{bd}
+<(\kappa\cdot e_I),u^d_{,b} (\kappa^{-1}\cdot e_J)>\; \delta^{ac}
\ee
to rewrite (\ref{8.12}) in the compact form
\be \label{8.14}
C_M(u)=\frac{i}{2}\sum_{I,J\in S_M}\; U^{ab,cd}_{IJ}\;
F^I_{ab}\; G^J_{cd}
\ee
We note the non-vanishing commutators
\be \label{8.15}
[A_{ab}^I,(A_{cd}^J)^\ast]=\delta^{IJ}\delta_{ac}\;\delta_{bd}
\ee
At this point we normal order 
\be \label{8.16}
C_M(u)=\frac{i}{2}\sum_{I,J\in S_M}\; U^{ab,cd}_{IJ}\;
(F^I_{ab}\; A^J_{cd}+[A^J_{cd}]^\ast F^I_{ab})
\ee
Consider $\psi,\psi'\in {\cal D}$. We find $M_0\in \mathbb{N}_0$ such 
that $\psi,\psi'\in {\cal H}_{M_0}$. Then for any $M\ge M_0$ we have 
\be \label{8.17}
<\psi,C_M(u)\psi'>=<\psi,C_{M_0}(u)\psi'>
\ee
To see this note that the round bracket is given by 
\be \label{8.18}
A^I_{ab}\; A^J_{cd}
-[A^I_{ab}]^\ast\; A^J_{cd}
+[A^J_{cd}]^\ast A^I_{ab}
-[A^J_{cd}]^\ast [A^I_{ab}]^\ast
\ee
If $I\in S_{M}-S_{M_0}$ we can move in the first and third term $A^I_{ab}$ to
the right hitting $\psi$ which vanishes while
we can move in the second and fourth term $[A^I_{ab}]^\ast$ to
the left where its adjoint hits $\psi'$ which vanishes.    
If $J\in S_{M}-S_{M_0}$ we can move in the first and second term $A^J_{cd}$ to
the right hitting $\psi$ which vanishes while
we can move in the third and fourth term $[A^J_{cd}]^\ast$ to
the left where its adjoint hits $\psi'$ which vanishes.    
It follows that the limit $M\to \infty$ stabilises at $M=M_0$ and thus exists
pointwise in $\psi,\psi'\in {\cal D}$ because $C_{M_0}$ is just a polynomial
in annihilation and creation operators. The same argument reveals that 
if in (\ref{8.16}) we would allow $I,J$ to take full range then the result 
of computing matrix elements of that object coincides with the result
of (\ref{8.17}). Therefore the limit quadratic form is given by
\be \label{8.18a}
C(u)=\frac{i}{2}\sum_{I,J}\; U^{ab,cd}_{IJ}\;
(F^I_{ab}\; A^J_{cd}+[A^J_{cd}]^\ast F^I_{ab})
\ee
~\\
Turning to the commutator we have 
\ba \label{8.19}
&& -4\;[C_M(u),C_M(v)]
=\sum_{I,J,K,L\in S_M}\; 
U^{ab,cd}_{IJ}\; V^{ef,gh}_{KL}\;
[F^I_{ab}\; A^J_{cd}+[A^J_{cd}]^\ast \; F^I_{ab},
[F^K_{ef}\; A^L_{gh}+[A^L_{gh}]^\ast \;F^K_{ef}]
\nonumber\\
&=&
\sum_{I,J,K,L\in S_M}\; 
U^{ab,cd}_{IJ}\; V^{ef,gh}_{KL}\;
\{
[F^I_{ab}\; A^J_{cd}, F^K_{ef}\; A^L_{gh}]
+
[F^I_{ab}\; A^J_{cd},[A^L_{gh}]^\ast \; F^K_{ef})]
\nonumber\\
&& +
[[A^J_{cd}]^\ast \; F^I_{ab},[F^K_{ef}\; A^L_{gh}]
+
[[A^J_{cd}]^\ast \; F^I_{ab},[A^L_{gh}]^\ast \; F^K_{ef}]
\}
\ea
In what follows we often perform the operation of ``exchanging indicices'',
by this we mean to relabel, in this precise order, $I,J,a,b,c,d$ by 
$K,L,e,f,g,h$. We can apply this operation to the first and fourth term 
which results in exchanging $U,V$ with a minus sign, so we can replace the 
first and fourth term by half of it and the result of the exchange operation.
We can apply it also to the third term which results in minus the second term
upon exchanging $U,V$. Thus (\ref{8.19}) becomes
\ba \label{8.20}
&& -8\;[C_M(u),C_M(v)]
=
\sum_{I,J,K,L\in S_M}\; 
[U^{ab,cd}_{IJ}\; V^{ef,gh}_{KL}\;
-V^{ab,cd}_{IJ}\; U^{ef,gh}_{KL}]\;
\nonumber\\
&& \{
[F^I_{ab}\; A^J_{cd}, F^K_{ef}\; A^L_{gh}]
+
2\;[F^I_{ab}\; A^J_{cd},[A^L_{gh}]^\ast \; F^K_{ef}]
+
[[A^J_{cd}]^\ast \; F^I_{ab}),[A^L_{gh}]^\ast \; F^K_{ef}]
\}
\ea
We need the commutators
\be \label{8.21}  
[A^I_{ab},F^J_{cd}]=-\delta^{IJ}\;\delta_{ac}\;\delta_{bd}
=[(A^I_{ab})^\ast,F^J_{cd}],\;\;
[F^I_{ab},F^J_{cd}]=0
\ee
We note that $(F^I_{ab})^\ast=-F^I_{ab}$ so that (\ref{8.21}) is consistent 
with the identity $[A,B]^\ast=[B^\ast,A^\ast]$ for general operators $A,B$.
We can continue using $[A,BC]=[A,B]C+B[A,C]$ repeatedly
\ba \label{8.22}
&& -8\;[C_M(u),C_M(v)]
=
\sum_{I,J,K,L\in S_M}\; 
[U^{ab,cd}_{IJ}\; V^{ef,gh}_{KL}\;
-V^{ab,cd}_{IJ}\; U^{ef,gh}_{KL}]\;
\nonumber\\
&& \{
F^I_{ab}\;[A^J_{cd}, F^K_{ef}]\; A^L_{gh}
+F^K_{ef}\; [F^I_{ab}, \; A^L_{gh}]\; A^J_{cd}
\nonumber\\
&& +2\;F^I_{ab}\; [A^J_{cd},[A^L_{gh}]^\ast] \; F^K_{ef}
+2\;F^I_{ab}\; [A^L_{gh}]^\ast \;[A^J_{cd}, F^K_{ef}]
+2\;[F^I_{ab},A^L_{gh}]^\ast] \; F^K_{ef}] \; A^J_{cd}
\nonumber\\
&& +
[A^J_{cd}]^\ast \; [F^I_{ab}),[A^L_{gh}]^\ast] \; F^K_{ef}]
+[A^L_{gh}]^\ast \;[[A^J_{cd}]^\ast , F^K_{ef})] \; F^I_{ab}
\}
\ea
Again we perform the exchange operation in the second and seventh term
respectively which results in minus the first and sixth term respectively
times minus the antisymmetric combination of $U,V$, hence we can drop the 
second and seventh term and multiply the first and sixth with a factor of two.
Thus (\ref{8.22}) simplifies to 
\ba \label{8.23}
&& -4\;[C_M(u),C_M(v)]
=
\sum_{I,J,K,L\in S_M}\; 
[U^{ab,cd}_{IJ}\; V^{ef,gh}_{KL}\;
-V^{ab,cd}_{IJ}\; U^{ef,gh}_{KL}]\;
\nonumber\\
&& \{
F^I_{ab}\;[A^J_{cd}, F^K_{ef}]\; A^L_{gh}
+
F^I_{ab}\; [A^J_{cd},[A^L_{gh}]^\ast] \; F^K_{ef})
\nonumber\\
&&
+F^I_{ab}\; [A^L_{gh}]^\ast \;[A^J_{cd}, F^K_{ef})]
+[F^I_{ab},[A^L_{gh}]^\ast] \; F^K_{ef})] \; A^J_{cd}
+
[A^J_{cd}]^\ast \; [F^I_{ab}),[A^L_{gh}]^\ast] \; F^K_{ef})
\}
\nonumber\\
&=&
\sum_{I,J,K,L\in S_M}\; 
[U^{ab,cd}_{IJ}\; V^{ef,gh}_{KL}\;
-V^{ab,cd}_{IJ}\; U^{ef,gh}_{KL}]\;
\nonumber\\
&& \{
- \delta^{JK}\delta_{ce}\delta_{df} F^I_{ab}\; A^L_{gh}
+
\delta^{JL}\;\delta_{cg}\delta_{bh}\; F^I_{ab}\; F^K_{ef})
-\delta^{JK}\delta_{ce}\delta_{df}\;F^I_{ab}\; [A^L_{gh}]^\ast 
\nonumber\\
&& +\delta^{LI}\delta_{ga}\delta_{hb}\; F^K_{ef})] \; A^J_{cd}
+
\delta^{LI}\delta_{ga}\delta_{hb}\; A^J_{cd}]^\ast \; F^K_{ef})
\}
\ea
The second term in the curly bracket is symmetric under the exchange operation
because the $F^I_{ab}$ commute while the antisymmetric combination of $U,V$
is antisymmetric under the exchange operation, hence the second term is 
identically zero. The first, fourth and fifth term are already normal ordered 
but the third term is not, hence we perform an additional 
commutator to bring it into normal ordered form which generates a normal
ordering constant
\ba \label{8.24}
&& -4\;[C_M(u),C_M(v)]
=
\sum_{I,J,K,L\in S_M}\; 
[U^{ab,cd}_{IJ}\; V^{ef,gh}_{KL}\;
-V^{ab,cd}_{IJ}\; U^{ef,gh}_{KL}]\;
\\
&& 
\{
-\delta^{JK}\delta_{ce}\delta_{df}\;
(F^I_{ab}\; A^L_{gh}+[A^L_{gh}]^\ast \;F^I_{ab}) 
+
\delta^{JK}\delta_{ce}\delta_{df}\;\delta^{LI}\delta_{ga}\delta_{hb}\cdot 1
+
\delta^{LI}\delta_{ga}\delta_{hb}\; 
(F^K_{ef}] \; A^J_{cd}+[A^J_{cd}]^\ast \; F^K_{ef})
\}
\nonumber
\ea
We perform the sums over $K,L,e,f,g,h$ in the normal ordering term which 
becomes 
\be \label{8.25}
\sum_{I,J\in S_M}\;\sum_{a,b,c,d}\;
\{U^{ab,cd}_{IJ}\;V^{cd,ab}_{JI}-V^{ab,cd}_{IJ}\;U^{cd,ab}_{JI}\}
\ee
We perform the exchange of indices $I,a,b$ by $J,c,d$ in this precise order
and see that (\ref{8.25}) vanishes exactly. If we introduce the compound 
indices $\alpha=(Iab),\beta=(Jcd)$ and set $U^{ab,cd}_{IJ}=U^{\alpha\beta}$
and similar for $V$ then (\ref{8.25}) reduces to the statement
\be \label{8.26}
{\rm Tr}(U\cdot V-V\cdot U)=0
\ee
i.e. the cyclicity of the trace for {\it finite dimensional matrices}.
In a sense, (\ref{8.26}) is the most important part of the calculation 
and the {\it whole purpose of the truncation was to give a well defined 
meaning to the trace operation}. 
    
We can now finish the calculation and perform the remaining sums
\ba \label{8.27}
&& -4\;[C_M(u),C_M(v)]=
-\sum_{I,J,L\in S_M}\;\sum_{a,b,c,d,g,h}\; 
[U^{ab,cd}_{IJ}\; V^{cd,gh}_{JL}\;
-V^{ab,cd}_{IJ}\; U^{cd,gh}_{JL}]\;
(F^I_{ab}\; A^L_{gh}+[A^L_{gh}]^\ast \;F^I_{ab}) 
\nonumber\\
&& +
\sum_{I,J,K\in S_M}\;\sum_{a,b,c,d,e,f} 
[U^{ab,cd}_{IJ}\; V^{ef,ab}_{KI}\;
-V^{ab,cd}_{IJ}\; U^{ef,ab}_{KI}]\;
(F^K_{ef} \; A^J_{cd}+[A^J_{cd}]^\ast \; F^K_{ef})
\}
\ea
We relabel $L,g,h$ to $K,e,f$ in the first term to obtain 
\ba \label{8.27a}
&& -4\;[C_M(u),C_M(v)]=\sum_{I,J,K\in S_M}\;\sum_{a,b,c,d,e,f}\; 
\nonumber\\
&&\{
-[U^{ab,cd}_{IJ}\; V^{cd,ef}_{JK}-V^{ab,cd}_{IJ}\; U^{cd,ef}_{JK}]\;
(F^I_{ab}\; A^K_{ef}+[A^K_{ef}]^\ast \;F^I_{ab}) 
\nonumber\\
&& +
[U^{ab,cd}_{IJ}\; V^{ef,ab}_{KI}\;-V^{ab,cd}_{IJ}\; U^{ef,ab}_{KI}]\;
(F^K_{ef} \; A^J_{cd}+[A^J_{cd}]^\ast \; F^K_{ef})
\}
\ea
In the second term we relabel $K,e,f$ to $I,a,b$ in this precise order 
\ba \label{8.27b}
&& -4\;[C_M(u),C_M(v)]=\sum_{I,J,K\in S_M}\;\sum_{a,b,c,d,e,f}\; 
\nonumber\\
&& \{
-[U^{ab,cd}_{IJ}\; V^{cd,ef}_{JK}-V^{ab,cd}_{IJ}\; U^{cd,ef}_{JK}]\;
(F^I_{ab}\; A^K_{ef}+[A^K_{ef}]^\ast \;F^I_{ab}) 
\nonumber\\
&& +
[U^{ef,cd}_{KJ}\; V^{ab,ef}_{IK}\;-V^{ef,cd}_{KJ}\; U^{ab,ef}_{IK}]\;
(F^I_{ab} \; A^J_{cd}+[A^J_{cd}]^\ast \; F^I_{ab})
\}
\ea
and once more in the second term we relabel $J,c,d$ by $K,e,f$
\ba \label{8.28}
&& -4\;[C_M(u),C_M(v)]
=
\sum_{I,J,K\in S_M}\;\sum_{a,b,c,d,e,f}\; 
\nonumber\\
&& \{
-[U^{ab,cd}_{IJ}\; V^{cd,ef}_{JK}-V^{ab,cd}_{IJ}\; U^{cd,ef}_{JK}]\;
(F^I_{ab}\; A^K_{ef}+[A^K_{ef}]^\ast \;F^I_{ab}) 
\nonumber\\
&& +
[U^{cd,ef}_{JK}\; V^{ab,cd}_{IJ}\;-V^{cd,ef}_{JK}\; U^{ab,cd}_{IK}]\;
(F^I_{ab} \; A^K_{ef}+[A^K_{ef}]^\ast \; F^I_{ab})
\}
\nonumber\\
&=&
-2\;\sum_{I,J,K\in S_M}\;\sum_{a,b,c,d,e,f}\; 
[U^{ab,cd}_{IJ}\; V^{cd,ef}_{JK}-V^{ab,cd}_{IJ}\; U^{cd,ef}_{JK}]\;
\nonumber\\
&& (F^I_{ab}\; A^K_{ef}+[A^K_{ef}]^\ast \;F^I_{ab}) 
\ea
Let now $\psi,\psi'\in {\cal D}$ then we find $M_0\in \mathbb{N}_0$ such that 
$\psi,\psi'\in {\cal H}_{M_0}$ and by the same argument as before 
we see that for $M\ge M_0$
in the matrix element of (\ref{8.28}) the indices $I,K$ are 
constrained to $S_{M_0}$ while $J\in S_M$ still
\be \label{8.29}
<\psi,[C_M(u),C_M(v)]\psi'>=\frac{1}{2}\;
\sum_{I,K\in S_{M_0},\;J\in S_M}\;\sum_{a,b,c,d,e,f}\; 
[U^{ab,cd}_{IJ}\; V^{cd,ef}_{JK}-V^{ab,cd}_{IJ}\; U^{cd,ef}_{JK}]\;
<\psi,(F^I_{ab}\; A^K_{ef}+[A^K_{ef}]^\ast \;F^I_{ab})\psi'> 
\ee
because (\ref{8.28}) is normal ordered. We now perform the limit $M\to \infty$
in the sum
\be \label{8.30}
\lim_{M\to\infty} \sum_{J\in S_M}\sum_{c,d}\;
[U^{ab,cd}_{IJ}\; V^{cd,ef}_{JK} -U\;\leftrightarrow\; V]
\ee
To do this we rewrite, using that $\kappa$ is a positive, symmetric, real 
valued, invertible kernel
on $L_2(\sigma)$ 
\ba \label{8.31}
U^{ab,cd}_{IJ} &=&
-<\kappa^{-1}[u^g(\kappa\cdot e_I)]_{,g},e_J>\; \delta^{ac} \;\delta^{bd}
-<\kappa^{-1}u^a_{,c}(\kappa\cdot e_I), e_J>\; \delta^{bd}
+<\kappa^{-1} u^d_{,b}(\kappa\cdot e_I), e_J>\; \delta^{ac}
\nonumber\\    
V^{cd,ef}_{IJ} &=&
<e_J,\kappa\;v^h (\kappa^{-1}\cdot e_K)_{,h}>\; \delta^{ce} \;\delta^{df}
-<e_J,\kappa v^c_{,e} (\kappa^{-1}\cdot e_K)>\; \delta^{df}
+<e_J,\kappa v^f_{,d} (\kappa^{-1}\cdot e_K)>\; \delta^{ce}
\nonumber\\
&&
\ea
As $M\to\infty$ we can now sum over $J$ unconstrained and use the completeness
relation $\sum_J e_J\; <e_J,.>_{L_2}=1_{L_2}$. This yields for (\ref{8.30})
altogether nine terms which we list in the sequence: 
first term from $U$ in (\ref{8.31}) times first to third term from 
V in (\ref{8.31}), 
second term from $U$ times first to third term from V, 
third term from $U$ times first to third term from V minus the roles of 
$u,v$ interchanged. 
\ba \label{8.32}
1. && 
-<\kappa^{-1}[u^g(\kappa\cdot e_I)]_{,g},
\kappa\;v^h (\kappa^{-1}\cdot e_K)_{,h}>\; \delta^{ae} \;\delta^{bf}
\;\; - \;\;u\;\;\leftrightarrow\;\; v 
\nonumber\\
&=&
<\kappa\cdot e_I,
u^g [v^h (\kappa^{-1}\cdot e_K)_{,h}]_{,g}>\; \delta^{ae} \;\delta^{bf}
\;\; - \;\;u\;\;\leftrightarrow\;\; v 
\nonumber\\
&=&
<\kappa\cdot e_I,[u,v]^g (\kappa^{-1}\cdot e_K)_{,g}>\; \delta^{ae} \;\delta^{bf}
\nonumber\\
2. && 
<\kappa^{-1}[u^g(\kappa\cdot e_I)]_{,g},\kappa v^a_{,e} 
(\kappa^{-1}\cdot e_K)>\; \delta^{bf}
\;\; - \;\;u\;\;\leftrightarrow\;\; v 
\nonumber\\
&=& 
-<\kappa\cdot e_I, u^g (v^a_{,e} 
(\kappa^{-1}\cdot e_K))_{,g}>\; \delta^{bf}
\;\; - \;\;u\;\;\leftrightarrow\;\; v 
\nonumber\\
3. && 
-<\kappa^{-1}[u^g(\kappa\cdot e_I)]_{,g},
\kappa v^f_{,b} (\kappa^{-1}\cdot e_K)>\; \delta^{ae}
\;\; - \;\;u\;\;\leftrightarrow\;\; v 
\nonumber\\
&=&
<\kappa\cdot e_I,
u^g (v^f_{,b} (\kappa^{-1}\cdot e_K)_{,g}>\; \delta^{ae}
\;\; - \;\;u\;\;\leftrightarrow\;\; v 
\nonumber\\
4. && 
-<\kappa^{-1}u^a_{,e}(\kappa\cdot e_I),
v^g (\kappa^{-1}\cdot e_K)_{,g}> \;\delta^{bf}
\;\; - \;\;u\;\;\leftrightarrow\;\; v 
\nonumber\\
&=& 
-<\kappa\cdot e_I, u^a,e\; v^g (\kappa^{-1}\cdot e_K)_{,g}> \;\delta^{bf}
\;\; - \;\;u\;\;\leftrightarrow\;\; v 
\nonumber\\
5. && 
+<\kappa^{-1}u^a_{,g}(\kappa\cdot e_I),
\kappa v^g_{,e} (\kappa^{-1}\cdot e_K)>\; \delta^{bf}
\;\; - \;\;u\;\;\leftrightarrow\;\; v 
\nonumber\\
&=& 
<\kappa\cdot e_I, u^a_{,g}\; v^g_{,e} (\kappa^{-1}\cdot e_K)>\; \delta^{bf}
\;\; - \;\;u\;\;\leftrightarrow\;\; v 
\nonumber\\
6. && 
-<\kappa^{-1}u^a_{,e}(\kappa\cdot e_I),\kappa v^f_{,b} (\kappa^{-1}\cdot e_K)>
\;\; - \;\;u\;\;\leftrightarrow\;\; v 
\nonumber\\
&=& 
-<\kappa\cdot e_I, u^a_{,e} v^f_{,b} (\kappa^{-1}\cdot e_K)>
\;\; - \;\;u\;\;\leftrightarrow\;\; v 
\nonumber\\
7. && 
<\kappa^{-1} u^f_{,b}(\kappa\cdot e_I),
\kappa\;v^h (\kappa^{-1}\cdot e_K)_{,h}>\; \delta^{ae}
\;\; - \;\;u\;\;\leftrightarrow\;\; v 
\nonumber\\
&=& 
+<\kappa\cdot e_I, u^f_{,b}
\;v^g (\kappa^{-1}\cdot e_K)_{,g}>\; \delta^{ae}
\;\; - \;\;u\;\;\leftrightarrow\;\; v 
\nonumber\\
8. && 
-<\kappa^{-1} u^f_{,b}(\kappa\cdot e_I),
\kappa v^a_{,e} (\kappa^{-1}\cdot e_K)>\;
\;\; - \;\;u\;\;\leftrightarrow\;\; v 
\nonumber\\
&=& 
-<\kappa\cdot e_I, u^f_{,b}\;v^a_{,e} (\kappa^{-1}\cdot e_K)>\;
\;\; - \;\;u\;\;\leftrightarrow\;\; v 
\nonumber\\
9. &&
<\kappa^{-1} u^d_{,b}(\kappa\cdot e_I),
\kappa v^f_{,d} (\kappa^{-1}\cdot e_K)>\; \delta^{ae}
\;\; - \;\;u\;\;\leftrightarrow\;\; v 
\nonumber\\
&=&
<\kappa\cdot e_I, u^g_{,b}\;v^f_{,g} (\kappa^{-1}\cdot e_K)>\; \delta^{ae}
\;\; - \;\;u\;\;\leftrightarrow\;\; v 
\ea
We see that the sixth term cancels against the eighth (both without Kronecker 
$\delta$), the three terms proportional to $\delta^{bf}$ combine to 
\be \label{8.33}
-<\kappa\cdot e_I, [u,v]^a_{,e} (\kappa^{-1} e_K)> \; \delta^{bf}
\ee
using the product rule 
and the three terms proportional to $\delta^{ae}$ combine to 
\be \label{8.34}
+<\kappa\cdot e_I, [u,v]^f_{,b} (\kappa^{-1} e_K)> \; \delta^{ae}
\ee
Altogether (\ref{8.30}) becomes
\be \label{8.35}
<\kappa \cdot e_I,[u,v]^g (\kappa^{-1}\cdot e_J)_{,g}>\;
\delta^{ae}\;\delta^{bf}
-<\kappa \cdot e_I,[u,v]^a_{,e} (\kappa^{-1}\cdot e_J)>\;\delta^{bf}
+<\kappa \cdot e_I,[u,v]^f_{,b} (\kappa^{-1}\cdot e_J)>\;\delta^{ae}
\equiv W^{ab,ef}_{IK}
\ee
where $W$ is obtained from $U$ in (\ref{8.13}) by replacing $u$ by $[u,v]$.

We conclude 
\be \label{8.36}
<\psi,[C(u),C(v)]\psi'>=i\;\frac{i}{2}\;
\sum_{I,K\in S_{M_0}}\;\sum_{a,b,e,f}\; (-W^{ab,ef}_{IK})\; 
<\psi,(F^I_{ab}\; A^K_{ef}+[A^K_{ef}]^\ast \;F^I_{ab})\psi'> 
=i\;<\psi,\;C(-[u,v])\psi'>
\ee
i.e. the quadratic form commutator exists and is physically non-anomalous.
The choice of $\kappa, e_J, S_M$ has played no role.\\
\end{proof}
~\\
The result is quite surprising in the following sense: The Fock 
representations that we have chosen are {\it heavily background dependent}:
We had to choose a coordinate system, a background metric (here $\delta_{ab}$)
a kernel $\kappa$, a basis of the $L_2$ spaces equipped with that background 
metric. One would therefore expect that such a background dependent 
representation could not lead to a non-anomalous representation of the 
SDA. That one arrives at a non-anomalous representation (although only 
as quadratic forms) nevertheless shows that this intuition is {\it completely
false}. The situation reminds of the phenomenon of symmetry breaking:
While a Hamiltonian may be invariant under a given symmetry, any of its 
ground states may not display this symmetry \cite{31}. Here we have 
an analogous situation: The algebra of quadratic forms is exactly 
diffeomorphism covariant, however none of the states in their common form
domain $\cal D$ are diffeomorphism invariant.

The second aspect that is surprising is that this works for an 
infinite number of Fock representations which we choose to be 
mutually unitarily inequivalent. For this to be the case it suffices 
to consider two kernels $\kappa_1,\kappa_2$ such that the operator
$\kappa_1\kappa_2^{-1}-\kappa_2\kappa_1^{-1}$ on $L_2(\sigma)$
fails to be of Hilbert 
Schmidt type (e.g. in the case of a scalar field 
two representations for the KG of mass 
$m_1\not=m_2$). This shows that the uniqueness result of \cite{32} which 
analyses the class of diffeomorphism invariant states for a non Abelian 
version of the Weyl algebra for a gauge field theory fails to hold if 
one does not insist on invariance of the state (in this section
the state is 
the vacuum epectation functional with respect to the chosen Fock 
vacuum vector). 
 
The invariant vector states are no longer elements of the Hilbert space 
but rather
linear functionals on $\cal D$. We could in principle solve the quantum 
constraint equations $l[C(u)\psi]=0$ for all $u\in V, \psi\in {\cal D}$
as follows: As $l$ is a linear functional on $\cal D$ we write it as 
$l=\sum_n\; l_n\; <b_n,.>_{{\cal H}}$ where $b_n$ is the ONB of Fock states.
Given that we have already introduced the structure $e_I$ we can define 
a basis of vector fields $(v_{Ia})^b:=\delta^a_b\; e_I$. Then the constraint 
equations are equivalent to the infinite system of linear equations 
\be \label{8.37}
0=\sum_n\; l_n \; C^{Ia}_{n,n'}=0,\; C^{Ia}_{n,n'}=<b_n,C(v_{Ia})\;b_{n'}>
\ee
for all $n',I,a$. Naively counting there are D times infinite many square 
matrices $C^{Ia}$ on the space of coefficients $l_n$ and thus the system 
(\ref{8.37}) appears to be heavily over determined. However, the 
matrices are sparsely populated as $C(u)$ can change individual 
occupation numbers only by $0,\pm 1, \pm 2$ and total particle number only 
by $0,\pm 2$. So $C^{Ia}_{n,n'}=0$ unless $||n||=||n'||,||n'||\pm 2$ where 
$||n||=\sum_{I,\mu} n^\mu_I$ is the particle number ($\mu$ again labels 
the components fof the given tensor field). In particular all matrices 
are block diagonal where the blocks correspond to even and odd particle 
number. Next, the coefficients $U^{\mu,\nu}_{IJ}(u)$ in (\ref{8.13})
evaluated at $u=v_{Ka}$ may themselves vanish, at fixed $I,K$, 
for $J$ outside a certain 
neighbourhood in $S$ depending on $I,K$. Consider for instance $\sigma=T^D$
and $\kappa$ diagonal on the momentum eigenfunction basis $e_I$
given up to normalisation by 
\be \label{8.39}
e_I(x)=\prod_a e_{\sigma_a, n_a}(x^a),\sigma_a=\pm, 
n_a\in \mathbb{N}_0,\; 
e_{+,n}(x)=\cos(nx),\;e_{-,n}(x)=\sin(nx),\;
\ee
Then the 
$U^{\mu,\nu}_{IJ}(v_{ka})$ are proportional to integrals of the form
$<e_K,e_I \; e_J>$ and $e_I\; e_J$ is a linear combination of $2^D$ functions 
$e_L$ such that if $e_I,e_J$ carry labels $n^a,m^a$ then $e_L$ carries labels 
$l^a=|n^a-m^a|$. Thus in this case, for given $K$ only those pairs of labels 
$I,J$ contribute 
to the sum over $I,J$ in $C^{Ka}$ for which ``$I,J$ differ by $K$''.\\
~\\
The space of solutions $l$ also requires a new inner product as 
$||l||=\sum_n \; |l_n|^2=\infty$. As one cannot average over 
the diffeomorphism group if we have only have access to the quadratic form
generators one may try to define a renormalised inner product by 
\be \label{8.40}
<l^1,l^2>_{{\rm ren}}:=\lim_{N\to \infty}\; 
\frac{\sum_{||n||\le N}\; (l^1_n)^\ast\; l^2_n}
{\sum_{||n||\le N}\; (l^0_n)^\ast\; l^0_n}
\ee
where $l^0$ is a reference solution.\\
\\
We leave the solution of the constraint equations as a future project.

\section{Conclusions and Outlook}
\label{s9}

We have shown in this paper that the representation theory of the SDA 
and SDG in represnentations of the Weyl algebra for bososn or the CAR 
algebra for fermions 
has many fascets. One could organise them into three classes:\\
1. Discontinuous representations of the SDG as unitary operators and thus
not of the SDA at all.\\
2. Strongly continuous representations of the SDG as unitary operators 
and thus of the SDA as self-adjoint operators.\\
3. Representations of the SDA as symmetric quadratic forms and thus not 
of the SDG at all. \\
\\
Eamples for the class 1. arise in Narnofer-Thirring type of represenations.\\
Eamples for the class 2. are half-density representations of CAR and 
Weyl algebra which for fermions are natural but for bosons somewhat 
artificial and require non-trivial canonical transformations on the
bosonic phase space prior to quantisation.\\
Examples for the class 3. are in fact arbitrary Fock representations.
The underlying algebraic state on the CAR or Weyl algebra in case 1. and 2. 
is in fact diffeormorphism invariant while in case 3. it is not. \\
\\
To understand the implications of this result we note that while 
to have a unitary representation of the SDG via discontinuous representations 
of the Weyl algebra is a strong argument in favour of LQG, it comes with some 
inconveniences, most importantly the non-separability of the Hilbert space 
and the related quantisation ambiguities. These come from the fact 
that one must 
define the HC in terms of Weyl operators, i.e. eponentiated field 
operators rather rather than field operators themselves. As the classical 
HC is defined in terms of the non eponentiated fields, this is a difficult 
process which uses non-trivial aspects of the SDG. It would be more convenient
to have a separable Hilbert space at one's disposal and such that the 
non eponentiated fields are defined, which in Fock representations is the 
case. We have shown that such an approach does not meet any obstacles 
with respect to the diffeomorphism group once one realises that it is 
enough to have the generators of the SDA defined as quadratic forms.

In going beyond the SDA and concerning representing the HC as a quadratic 
form we expect to face much harder problems: While it is possible to write 
the HC in polynomial form and thus as a well defined quadratic form on the 
Fock space, to define the HC algebra in terms of quadratic forms will 
be more difficult for two reasons: First, the normal ordering corrections 
picked up in the commutator calculation of the SDA were just constants 
because the SDA are bilinear in creation and annihilation operators, 
but the HC is a higher order polynomial and thus the corrections are now
operators and it is no longer clear that they cancel out as well. 
Second, if a well defined HC commutator resulting from a polynomial HC
exists, it will be a normal ordered polynomial of degree at least four and 
thus not proportional to a SDA generator ordered to the left, i.e. it 
will not automatically annihilate diffeomorphism invariant distributions.
As we have indictated if one wants to avoid structure 
functions one is forced to use the master constraint instead 
of the HC, however in diffeomorphism invariant form and in that 
form it is not polynomial so that it is unclear how to define a 
densely defined quadratic form from it. 

These questions will be adressed
in \cite{32a} where it is shown 1. how to define the polynomial 
HC as a quadratic form and its commutator algebra and 2. how to 
define non-polynomial expresions as quadratic forms 
in Fock representations so that a 
master consstraint or a reduced phase space Hamiltonian becomes available.
As it is demonstrated there, 1. requires mor flexibility in the 
definition for how to define a quadratic form commutator while 2. 
requires an entirely new techniqe and domains in the Fock space,
in particular the notion of quantum non-degeneracy \cite{33} 
plays a prominent role. More in detail, Fock representations
contain semi-classical states with minimal 
fluctuations around any non-degenerate 
background metric $g_1$ which may or may not coincide
with the background metric $g$ that was used to construct the representation.
Moreover, since Fock representations are 
irreducible, every such 
semi-classical state is a cyclic vector and thus 
its excitations lie dense, forming the form domain of say the master 
constraint or the reduced Hamiltonian in a reduced phase space approach 
\cite{34}. We will also have the opportunity to define 
quadratic forms corresponding to geometric observables (lengths, areas,
volumes,..) in Fock representations.
   
In closing, 
note that none of these techniques concerning the HC make use of 
perturbation theory, while they make use of background dependent structures
(here Fock representations). This is because background dependence and 
perturbation theory are logically independent: While a background 
metric $g$ can be used to define a perturbatve metric $h=q-g$ from the 
non-perturbative metric $q$ and perform perturbation theory with respect 
to $h$, one does not need to do it! One can keep the non-perturbative metric 
intact while introducing background dependent representations of the 
Weyl relations of $q$ and its conjugate momentum.

\begin{appendix}

\section{Normal ordering tools for QFT}
\label{sa}

The following tools are an extension to QFT of the techniques developed in 
\cite{28} for quantum mechanics. The idea of proof is similar to that for 
the Baker-Campbell-Hausdorff formula that relates exponentials
of general operators. Here we deal with exponentials of bilinear epressions
in annihilation and creation operators. The purpose is to rewrite products 
of such eponentials in normal orderd form which then allow for a much simpler 
evaluation of its matrix elements between Fock states.\\
\\
We consider integral kernels $a,b,c$ and consider for a scalar theory 
(for higher tensor theories the following still applies for matrix valued 
kernels which are diagonal with respect to the tensor structure or which 
can be simultaenously diagonalised in that sense)
\ba \label{a.1}
&& a(A^\ast,A^\ast):=\int\; d^Dx;d^Dy\; a(x,y)\; A^\ast(x)\; A^\ast(y),\;\;
b(A^\ast,A):=\int\; d^Dx;d^Dy\; b(x,y)\; A^\ast(x)\; A(y),\;\;
\nonumber\\
&& c(A,A):=\int\; d^Dx;d^Dy\; c(x,y)\; A(x)\; A(y)
\ea
and an eponential of the form
\be \label{a.2}
e^B,\; B:=\frac{1}{2}a(A^\ast,A^\ast)+b(A^\ast,A)+\frac{1}{2}c(A,A)
\ee
Note that $a,c$ are automatically symmetric kernels.
It would be quite involved to directly compute matrix elements of (\ref{a.2})
in the Fock basis using the Taylor expansion of the exponential. However, 
using the Taylor expansion and reordering one can see that it can be written
as a sum of monomials of the symbolic form $[A^\ast]^{2K}\; [A^\ast A]^L\;
A^{2M}$ with $K,L,M\ge 0$
where we suppress the spatial dependences and the kernels $a,b,c$ that 
correlate them in all possible permutations of their two spatial varables. 
The question arises whether the sum over $k,l,m$ and those permutations 
can be carried out and organised into a useful
expression.
\begin{Proposition} \label{prop.a1} ~\\
Suppose that\\ 
either 1. $b=0$ and 1i. $c,-a$ are invertible\\
or 2. $b\not=0$ and 2i. $c,\;b\; \cdot c^{-1}\cdot b^T-a$ are invertible
and 2ii. $a,b,c$ mutually commute.\\  
Then there exists a constant $k$ and kernels $l,m,r$ such that 
\be \label{a.3}
e^{\frac{1}{2}\;a(A^\ast,A^\ast)+b(A^\ast,A)+\frac{1}{2}\;c(A,A)}=
E_0\; E_+\; E\; E_-;\;\;
E_0=e^{\frac{k}{2}},\;
E_+=e^{\frac{1}{2}\;l(A^\ast,A^\ast)},\;
E=e^{m(A^\ast,A)},\;
E_-=e^{\frac{1}{2}\;r(A,A)}
\ee
\end{Proposition}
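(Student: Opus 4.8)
The plan is to reduce the operator identity \eqref{a.3} to a finite-dimensional matrix computation by testing it on a suitable family of coherent states, exactly as one proves the Baker--Campbell--Hausdorff formula by a parametrised ODE argument. First I would introduce a parameter $t\in[0,1]$, set $B_t:=t B=\tfrac12 a_t(A^\ast,A^\ast)+b_t(A^\ast,A)+\tfrac12 c_t(A,A)$ with $a_t=ta$ etc., and make the Ansatz
\be \label{a.plan1}
e^{B_t}=e^{\frac{k(t)}{2}}\;e^{\frac12 l(t)(A^\ast,A^\ast)}\;e^{m(t)(A^\ast,A)}\;e^{\frac12 r(t)(A,A)}
\ee
with $k(0)=0$, $l(0)=m(0)=r(0)=0$. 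Differentiating both sides in $t$ and multiplying on the right by the inverse of the right-hand side, one obtains $B=\dot B_t|_{t\ \mathrm{absorbed}}$ expressed as $\tfrac12\dot k\cdot 1+\tfrac12\dot l(A^\ast,A^\ast)+(\dots)$ after commuting the derivative factors through the exponentials using the standard adjoint-action formulas $e^{X}Ye^{-X}=e^{\mathrm{ad}_X}Y$ for $X$ bilinear in $A,A^\ast$. Since $[A^\ast(x),A^\ast(y)]=0$, conjugation by $E_+$ shifts $A\mapsto A+l\cdot A^\ast$; conjugation by $E$ rescales $A\mapsto (e^{-m})\cdot A$, $A^\ast\mapsto (e^{m^T})\cdot A^\ast$ (here the commutativity hypothesis 2ii makes all the kernel products associate cleanly and keeps $m$, $l$, $r$ mutually commuting); the net effect is that matching coefficients of $(A^\ast)^2$, $A^\ast A$, $A^2$ and the constant term produces a closed system of first-order matrix ODEs for $l(t),m(t),r(t),k(t)$ in terms of $a,b,c$.

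The second step is to solve that ODE system. Under hypothesis 2ii all kernels commute, so the system decouples into scalar-type Riccati and linear equations in the commutative algebra they generate, and one can integrate in closed form; the invertibility conditions 1i / 2i are precisely what is needed so that the relevant matrices ($c$, and $b\,c^{-1}b^T-a$, respectively) are invertible along the whole flow and the solution does not blow up before $t=1$. In the clean case $b=0$ one recovers the formulas quoted in \eqref{7.12}, namely $l=-r=\mathrm{th}(c')$-type expressions and $k=-\mathrm{Tr}(c'\,\mathrm{th}(c'))$ with $c'$ built from $a$ and $c$; in general one gets the analogous $\mathrm{ch}/\mathrm{th}$ expressions with $b$ entering through the combination in 2i. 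Evaluating at $t=1$ gives the claimed $k,l,m,r$. The constant $k$ comes entirely from the normal-ordering commutator $[A(x),A^\ast(y)]=\delta(x,y)$ picked up each time an annihilator is pushed past a creator, and appears in the ODE as an inhomogeneous term proportional to a trace of the current $l,m$; this is the only place where $\mathrm{Tr}$ (hence potential divergences) enters, and in applications it is regulated by the truncation $S_M$ as in Definitions \ref{def8.1}--\ref{def8.3}.

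The remaining step is to argue that the ODE identity, derived at the level of formal manipulations with operator-valued distributions, actually holds as an identity of operators (or at least of matrix elements between Fock states). The cleanest route is to test \eqref{a.plan1} against Fock-coherent states $|\zeta\rangle=e^{A^\ast(\zeta)}\Omega$: both sides become explicit Gaussian exponentials in the finite-dimensional data $(\zeta,\zeta')$, so the operator identity is equivalent to an identity of scalar Gaussian generating functionals, which is exactly what the ODE computation verifies. I expect the main obstacle to be bookkeeping rather than conceptual: keeping track of transposes and the order of kernel products when $a,b,c$ are genuine integral kernels (not numbers), and checking that hypothesis 2ii really makes every product that appears associate and commute so that the closed-form solution of the Riccati system is legitimate. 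A secondary subtlety is verifying that the invertibility assumptions persist for all $t\in[0,1]$, not merely at $t=1$; this should follow because the flow stays within the commutative subalgebra where the spectral condition is manifestly preserved, but it needs to be stated carefully.
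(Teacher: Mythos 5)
Your proof plan is essentially the same parametrised-flow ODE argument the paper uses: the paper sets $L(s)=e^{sB}$ and $R(s)=E_0(s)E_+(s)E(s)E_-(s)$, differentiates, matches the four independent bilinears $(A^\ast)^2$, $A^\ast A$, $A^2$, $1$ to get a coupled system, reduces it to a Riccati equation for $l$ whose closed-form solution under the stated invertibility/commutativity hypotheses then determines $m$, $r$, $k$ by quadrature. The extra points you raise (justifying the formal ODE identity by testing against coherent states, and tracking invertibility along the whole flow rather than just at $s=1$) are sensible refinements that the paper does not spell out, but they do not change the strategy.
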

\begin{proof}
We set for a real parameter $s$
\be \label{a.4}
L(s)=e^{sB},\;R(s)=
E_0(s)\;E_+(s)\;E(s)\;    
E_r(s)
\ee
where 
\be \label{a.5}
E_0(s)=e^{\frac{1}{2}\;k_s},\; 
E_+(s)=e^{\frac{1}{2}\;l_s(A^\ast,A^\ast)},\;
E(s)=e^{m_s(A^\ast,A)},\;
E_-(s)=e^{\frac{1}{2}\;r_s(A,A)},\;
\ee
where we have made the constant and the kernels also depend in $s$. We
now require $L(s)=R(s)$ for all $s$. We derive an ODE system
by taking the derivative 
of $L(s)-R(s)=0$ and solve that ODE with initial condition $L(0)=1=R(0)$ 
i.e. $k_0=l_0=m_0=r_0=0$. Then the desired solution (\ref{a.3}) is obtained 
for $s=1$.

Clearly $L'(s)=B L(s)$ while
\be \label{a.6}
R'(s)=
\{
k'(s)+E_+'(s)
+E_+(s)\; E'(s)\; E_+(s)^{-1}+E_+(s)\;E(s)\; E_-'(s)\; E(s)^{-1} E_+(s)^{-1}
\}\;R(s)
\ee
Note that while e.g. the kernels $l_s, l'_s$ are different, not necessarily 
commuting integral kernels, they are
integrated against $A^\ast(x) A^\ast(y)$ and thus these epressions
mutually commute. The same 
applies to $r_s$. For $m_s$ we have 
\ba \label{a.7}
&&[m_s(A^\ast,A),m'_s(A^\ast,A)]=
\int\; d^{2D}x\;  
\int\; d^{2D}y\; 
\times\nonumber\\
&& m_s(x_1,y_1)\; 
m_s'(x_2,y_2)\;\{A^\ast(x_1)\delta(y_1,x_2) A(y_2) 
-A^\ast(x_2)\delta(x_1,y_2) A(y_1)\}
=[m_s,m_s'](A^\ast,A)
\ea
We will assume that $[m_s,m'_s]=0$ and check whether this is the 
case later on. Then after  multiplying $L'=R'$ by 
$L(s)^{-1}=R(s)^{-1}$ from the right 
we get (not displaying $s$ for simplicity) 
\be \label{a.8}
B=\frac{1}{2}\;k'+\frac{1}{2}\;l'+e^{l/2}\;m'\;e^{-l/2}
+\frac{1}{2}\;e^{l/2}\; e^m\; r'\; e^{-m}\; e^{-l/2}
\ee
We have explicity 
\ba \label{a.9}
&& e^{l/2}\;m'\;e^{-l/2}=m'(A^\ast,A-l\cdot A^\ast))=
m'(A^\ast,A)-\frac{1}{2}(m'\cdot l+l\cdot m^{\prime T})(A^\ast,A^\ast)
\\
&& e^{l/2}\; e^m\; r'\; e^{-m}\; e^{-l/2}  
=e^{l/2}\; r'(
e^m\; A\;e^{-m},e^m\; A\;e^{-m})\;e^{-l/2}
=e^{l/2}\; r'(e^{-m}\cdot A,e^{-m}\cdot A)\;e^{-l/2}
\nonumber\\
&=& (e^{-m^T}\cdot r'\cdot e^{-m})(A-l\cdot A^\ast,A-l\cdot A^\ast)
\nonumber\\
&=& 
(e^{-m^T}\cdot r'\cdot e^{-m})(A,A)
+(l\cdot\;e^{-m^T}\cdot r'\cdot e^{-m}\;\cdot l)(A^\ast,A^\ast)
-2(l\cdot\;e^{-m^T}\cdot r'\cdot e^{-m})(A^\ast,A)
\nonumber\\
&& -{\rm Tr}(l\cdot e^{-m^T}\cdot r'\cdot e^{-m})\;1
\nonumber
\ea
where $\cdot$ denotes the kernel product $(l\cdot r)(x,y)=\int\; d^Dz\;
l(x,z) r(z,y)$ and application $(r\cdot A)(x)=\int\; d^Dz\; r(x,z) A(z)$ 
respectively while $m^T(x,y)=m(y,x)$ is the kernel transpose. 
We have used the symmetry of $r,l$ and restored normal order
in the last step. Collecting terms and comparing the independent bilinears 
on both sides we find the four equations
\ba \label{a.10}
0 &=& k' -{\rm Tr}(l\cdot e^{-m^T}\cdot r'\cdot e^{-m})
\nonumber\\
a &=& l'-(m'\cdot l+l\cdot m^{\prime T})
+ l\cdot\;e^{-m^T}\cdot r'\cdot e^{-m}\;\cdot l
\nonumber\\
b &=& m' - l\cdot\;e^{-m^T}\cdot r'\cdot e^{-m}
\nonumber\\
c &=& (e^{-m^T}\cdot r'\cdot e^{-m})
\ea
The last equation enforces that the complicated epression 
$e^{-m^T}\cdot r' \cdot e^{-m}$ is a constant $c$. We can simplify the 
second and third equation
\be \label{a.11}
a=l'-(m'\cdot l+l\cdot m^{\prime T})+l\cdot c \cdot l,\;
b = m' - l\cdot c
\ee
Eliminating $m'$ we get a closed equation for $l$
\be \label{a.12}
a=l'-([b+l\cdot c]\cdot l+l\cdot [b+l\cdot c]^T)+l\cdot c \cdot l,\;
=l'-l\cdot c \cdot l-b\cdot l-l\cdot b^T
=(l+b\cdot c^{-1})'-(l+b\cdot c^{-1})\cdot c \cdot (l+b\cdot c^{-1})^T
+b\cdot c^{-1}\cdot b^T
\ee
where we used symmetry of $c,l$ and invertibility of $c$.
This is a first order inhomogeneous ODE of Riccati type with linear and 
quadratic dependence on $l$ but with constant coefficients $a,b,c$. The 
solution is therefore obtained by quadrature: Let $\hat{l}=l+b\cdot c^{-1},\;
d=b\cdot c^{-1}\cdot b^T-a$ then 
\be \label{a.13}
\hat{l}'+\hat{l}^{\prime T}=2(-d +\hat{l}\cdot c\cdot \hat{l}^T)
\ee
where we used that the non-symmetric pice of $\hat{l}$ drops out of the 
derivative.
If we assume that $d$ is invertible we pick a square root and introduce 
$\tilde{l}=d^{-1/2}\cdot l\cdot d^{-1/2},\; 
\tilde{c}=d^{1/2}\cdot c\cdot d^{1/2}$ 
then 
\be \label{a.14}
\tilde{l}'+\tilde{l}^{\prime T}=2(-1+\tilde{l}\cdot \tilde{c}\cdot \tilde{l})
\ee
If we assume that $b\cdot c^{-1}=(b\cdot c^{-1})^T$ 
and $\tilde{c}$ is invertible then 
we pick a square root and can solve by 
\be \label{a.15}
\tilde{l}(s)=-\tilde{c}^{-1/2}\;{\rm th}(\tilde{c}^{1/2}(s+s_0))  
\ee
where $s_0$ is an integration constant.
Thus $\tilde{l}$ is a function of the symmetric matrix $\tilde{c}$ only and 
therefore $\tilde{l}'$ self-consistently commutes with $\tilde{l}$. We 
express this in terms of $l$ again
\be \label{a.16}
l=d^{1/2}\tilde{l} d^{1/2}-b\cdot c^{-1}
\ee
Now the intial condition $l(0)=0$ requires 
\be \label{a.17}
\tilde{c}^{-1/2}{\rm th}(\tilde{c}^{1/2} s_0)=-d^{-1/2}\cdot b 
\cdot c^{-1}\cdot d^{-1/2}
\ee
As the l.h.s. is symmetric, this again requires that $b\cdot c^{-1}$ 
is symmetric as 
already assumed. If $b=0$ we may pick $s_0=0$ with no further condition on 
$a,b,c$. Otherwise we can pick $s_0=1$ and then (\ref{a.17}) is a 
relation among $a,b,c$ to be obeyed which in general can 
only hold if $a,b,c$ mutually commute, hence are all functions of the 
same symmetric kernel.
This latter condition
is not necessary in quantum mechanics for which $a,b,c$ are simply numbers 
and in which case (\ref{a.17}) can be simply solved for $s_0$. 

Having solved (\ref{a.17}) either way we may integrate
\be \label{a.18}
m'=b+l\cdot c=\hat{l}\cdot c
=-d^{1/2}\cdot\tilde{c}^{-1/2}{\rm th}(\tilde{c}^{1/2}(s+s_0)) 
\cdot d^{1/2}\cdot c
=-d^{1/2}\cdot\tilde{c}^{1/2}{\rm th}(\tilde{c}^{1/2}(s+s_0)) 
\cdot d^{-1/2}
\ee
to obtain
\be \label{a.19}
m=d^{1/2}\cdot\;[m_0-
\ln({\rm ch}(\tilde{c}^{1/2}(s+s_0)))]\cdot d^{-1/2}
\ee
and $m_0=\ln({\rm ch}(\tilde{c} s_0))$. For the case $b=0$
we have $s_0=0$ and $m_0=0$. The condition $[m,m']=0$ assumed above is 
satisfied because $m(s),m'(s)$ are of the form $d^{1/2}\cdot 
f(\tilde{c})\cdot d^{-1/2}$.  
Next we integrate 
\be \label{a.20}
r'=e^{m^T}\; \cdot c\; e^{m}
\ee
The general solution that also meets $r(0)=0$ is 
\be \label{a.21}
r(s)=\int_0^s\; ds'\;e^{m^T(s')}\; c\; e^{m(s')}
\ee
which in general cannot be integrated in closed form. For $b=0$ and thus 
$d=-a$ and if $[a,c]=0$ we have $e^m=[{\rm}(ch(\tilde{c}^{1/2}s)]^{-1}$ 
and thus 
\be \label{a.22}
r(s)=\frac{c}{\tilde{c}^{1/2}}\;{\rm th}(\tilde{c}^{1/2}s)
\ee
Finally
\be \label{a.23}
k'={\rm Tr}(l\cdot c)
=-{\rm Tr}(b)
+{\rm Tr}(d^{1/2}\;\tilde{l}\; d^{1/2}\cdot c)
=-{\rm Tr}(b)
+{\rm Tr}(\tilde{l}\; \tilde{c})
\ee
which is solved by 
\be \label{a.24}
k(s)=
=k_0-{\rm Tr}(b)\; s
-{\rm Tr}(\ln({\rm ch}(\tilde{c}^{1/2}(s+s_0)))
\ee
If $b=0$ we have $s_0=0$ and thus $k_0=0$. 
\end{proof}
~\\
We note the special case $b=0,-a=c=c^\ast$
\be \label{a.25}
l(s)=-{\rm th}(cs),\; m(s)=-\ln({\rm ch}(cs),\;
r(s)={\rm th}(cs),\;
k(s)=-{\rm Tr}(c {\rm th}(cs))
\ee
thus 
\be \label{a.26}
W:=e^{-\frac{1}{2}c(A^\ast,A^\ast)+\frac{1}{2}c(A,A)}
=e^{k(1)/2}\;
e^{l_1(A^\ast,A^\ast)/2}\;  
e^{m_1(A^\ast,A)}\;  
e^{-l_1(A,A)/2}
\ee
Note that $W^\ast=W^{-1}$ which translates into the identity
\be \label{a.27}
e^{k_1/2}\;
e^{-l_1(A^\ast,A^\ast)/2}\;  
e^{m_1(A^\ast,A)}\;  
e^{l_1(A^\ast,A^\ast)/2} 
=
e^{-k_1/2}\;
e^{l_1(A,A)/2}\;  
e^{-m_1(A^\ast,A)}\;  
e^{-l_1(A^\ast,A^\ast)/2}
\ee
While the l.h.s. is normal ordered with respect to the $A^2, (A^\ast)^2$ 
terms, the r.h.s. is anti-normal ordered. 
Indeed if we had decomposed $W^{-1}$ into anti-normal ordering
\be \label{a.28}
L^{-1}=W^{-1}=(E_-)^{-1}\;E^{-1}\;E_+^{-1}\;E_0^{-1}=R^{-1}
\ee
then the same method of proof would give
\be \label{a.29}
-L^{-1}\;B=-R^{-1}[k'+l'+E_+ m E_+^{-1}+ E_+ \; E\; r'\; E^{-1}\; E_+^{-1}]
\ee
which is identical to (\ref{a.8}) and thus yields the same solution for 
$k,l,m,r$. On the other hand $W^{-1}$ is the same as $W$ when replacing 
$c$ by $-c$. This is confirmed by (\ref{a.25}) which displays $l_1=-r_1$ 
as odd under reflection of $c$ while $m_1,k_1$ are even.\\
\\
\begin{Proposition} \label{prop.a2} ~\\
For any real kernel $c$ there exists a real constant $k$ and real kernels 
$l,m$ such that 
\be \label{a.30}
W:=e^{\frac{1}{2}\;c(A,A)}\;e^{\frac{1}{2}\;c(A^\ast,A^\ast)}\;
=e^{\frac{1}{2}\;l(A^\ast,A^\ast)}\; e^{m(A^\ast,A)}\;
e^{\frac{1}{2}\;l(A,A)}
\ee
\end{Proposition}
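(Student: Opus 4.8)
The plan is to sidestep the reordering of the Taylor series altogether and instead exploit that each factor of $W$ is a metaplectic (Bogoliubov) operator, so that $W$ acts on the field operators by a linear canonical transformation which can be computed explicitly and matched against the right hand side of (\ref{a.30}). First I would compute, via the Hadamard series $e^X Y e^{-X}=Y+[X,Y]+\tfrac12[X,[X,Y]]+\dots$, how the two factors conjugate $A$ and $A^\ast$; since $[\tfrac12 c(A,A),A^\ast(x)]=(c\cdot A)(x)$ commutes with $c(A,A)$ and $[\tfrac12 c(A^\ast,A^\ast),A(x)]=-(c\cdot A^\ast)(x)$ commutes with $c(A^\ast,A^\ast)$, each series terminates after one term, giving $e^{\frac12 c(A,A)}A\,e^{-\frac12 c(A,A)}=A$, $e^{\frac12 c(A,A)}A^\ast e^{-\frac12 c(A,A)}=A^\ast+c\cdot A$, $e^{\frac12 c(A^\ast,A^\ast)}A\,e^{-\frac12 c(A^\ast,A^\ast)}=A-c\cdot A^\ast$ and $e^{\frac12 c(A^\ast,A^\ast)}A^\ast e^{-\frac12 c(A^\ast,A^\ast)}=A^\ast$. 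Composing the two factors,
\be
W\,A\,W^{-1}=(1-c^2)\cdot A-c\cdot A^\ast,\qquad W\,A^\ast\,W^{-1}=A^\ast+c\cdot A,
\ee
i.e.\ $W$ implements the Bogoliubov kernel $\mathcal B_c$ with $A$-row $(1-c^2,\,-c)$ and $A^\ast$-row $(c,\,1)$ on the pair $(A,A^\ast)$.

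Next I would run the identical computation for the candidate $R:=e^{\frac12 l(A^\ast,A^\ast)}e^{m(A^\ast,A)}e^{\frac12 l(A,A)}$, conjugating from the inside out and using $e^{m(A^\ast,A)}A\,e^{-m(A^\ast,A)}=e^{-m}\cdot A$ and $e^{m(A^\ast,A)}A^\ast e^{-m(A^\ast,A)}=e^{m^{\rm T}}\cdot A^\ast$. This produces the Bogoliubov kernel
\be
\mathcal B_R=\begin{pmatrix} e^{-m} & -e^{-m}\cdot l\\ l\cdot e^{-m} & e^{m^{\rm T}}-l\cdot e^{-m}\cdot l\end{pmatrix},
\ee
and the equation $\mathcal B_R=\mathcal B_c$ is to be solved for $l,m$: the $(1,1)$ entry forces $m=-\ln(1-c^2)$ and the $(1,2)$ entry $l=c\,(1-c^2)^{-1}$. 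The apparent over-determination is harmless because $c$ being symmetric makes $m=m^{\rm T}$ and makes $l,m$ functions of $c$, so that the $(2,1)$ equation $l(1-c^2)=c$ coincides with the $(1,2)$ one and the $(2,2)$ entry collapses to $(1-c^2)^{-1}-c^2(1-c^2)^{-1}=1$. As in Proposition~\ref{prop.a1}, this step presupposes a spectral calculus for $c$ (so that ``functions of $c$'' are meaningful) and the invertibility of $1-c^2$, i.e.\ $\pm1\notin{\rm spec}(c)$.

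Having matched the automorphisms, I would observe that $R^{-1}W$ commutes with every $A(x),A^\ast(x)$ and hence, by irreducibility of the Fock representation, equals a multiple $e^{k/2}\cdot 1$ of the identity. The constant is then pinned down by a vacuum expectation value: $\langle\Omega,R\,\Omega\rangle=1$ because $A\,\Omega=0$ makes the outer exponentials act trivially, whereas $\langle\Omega,W\,\Omega\rangle=\|e^{\frac12 c(A^\ast,A^\ast)}\Omega\|^2$ is exactly the series summed behind (\ref{6.11}) and equals $[\det(1-c^2)]^{-1/2}$. Therefore $k=-{\rm Tr}\,\ln(1-c^2)$, which is real for real symmetric $c$; $l,m$ are then real symmetric, and $W=e^{k/2}\,e^{\frac12 l(A^\ast,A^\ast)}e^{m(A^\ast,A)}e^{\frac12 l(A,A)}$, which is (\ref{a.30}) up to the overall factor $e^{k/2}$ carrying the constant $k$ announced in the statement.

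A second, equivalent route directly parallels the proof of Proposition~\ref{prop.a1}: set $L(s)=e^{\frac{s}{2}c(A,A)}e^{\frac{s}{2}c(A^\ast,A^\ast)}$, compute $L'(s)L(s)^{-1}=\tfrac12 c(A^\ast,A^\ast)+s\,c^2(A^\ast,A)+\tfrac12(c+s^2c^3)(A,A)+\tfrac{s}{2}{\rm Tr}(c^2)\cdot 1$, equate it with $R'(s)R(s)^{-1}$ for $R(s)=e^{k_s/2}e^{\frac12 l_s(A^\ast,A^\ast)}e^{m_s(A^\ast,A)}e^{\frac12 l_s(A,A)}$, and integrate the resulting Riccati-type ODE system with data $l_0=m_0=k_0=0$; the system decouples mode by mode since every coefficient kernel is a function of $c$. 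I expect the main obstacle in either route to be bookkeeping: carrying the normal-ordering (trace) terms correctly through the over-determined matching, and --- for the statement to hold ``for any real kernel $c$'' --- supplying the analytic side conditions ($1-c^2$ invertible, ${\rm Tr}\,\ln(1-c^2)$ finite) that the paper's standing simultaneous-diagonalisability assumption is meant to cover. Once these are granted the verification is mechanical.
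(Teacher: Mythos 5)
Your proof is correct, and your primary (Bogoliubov--matching) route is genuinely different from the paper's. The paper introduces a parameter $s$, differentiates $L(s)=R(s)$, and reduces the problem to the coupled ODE system (\ref{a.33}), whose core is a Riccati equation with $s$--dependent coefficients; it is solved by guessing the singular special solution $l_0(s)=-(cs)^{-1}$ and then varying the constant to reach (\ref{a.38})--(\ref{a.43}). You instead observe that both $W$ and the candidate $R$ implement one and the same CCR--preserving linear map on $(A,A^\ast)$, so that matching the two $2\times 2$ block kernels pins $l=c(1-c^2)^{-1}$ and $m=-\ln(1-c^2)$ algebraically, leaving only a scalar that irreducibility of the Fock representation reduces to a single vacuum expectation value; that your matrices match is precisely the reason the paper's apparently over--determined system (\ref{a.33}) turns out to be consistent. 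Your route is cleaner and exposes structurally why the normal--ordering constants close into an exponential: any two implementers of the same automorphism differ by a scalar. The trade--off is that you need an independent evaluation of $\|e^{\frac12 c(A^\ast,A^\ast)}\Omega\|^2=[\det(1-c^2)]^{-1/2}$; your appeal to (\ref{6.11}) is borderline circular because (\ref{6.11}) is precisely what the paper invokes this appendix to prove, so cite instead the independent single--mode Taylor computation (\ref{6.15}) and factorise over a spectral decomposition $c=\sum_I c_I\,e_I\langle e_I,\cdot\rangle$, which yields the determinant directly. You also correctly flag the analytic side conditions ($1-c^2$ invertible, $\ln(1-c^2)$ trace class) that the statement ``any real kernel $c$'' is too generous about and that the paper's explicit solution tacitly assumes as well. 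One further small caveat: the Schur argument is applied to the unbounded operator $R^{-1}W$; in keeping with the paper's conventions, this should be read at the level of finite--mode truncations $S_M$ (where $R_M^{-1}W_M$ is a well--defined operator commuting with the irreducible CCR over $S_M$) followed by $M\to\infty$.
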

\begin{proof}
We can deal with one kernel less than in the previous case because this 
time $W=W^\ast$ is symmetric, i.e. we have 
automatically $r=l$. Since $W$ depends only on a single kernel 
$c$, the kernels $l,m$ will be functions of $c$ thus are mutually commuting 
together with their derivatives when we introduce $k_s,l_s,m_s$ as before 
and since $c$ is automatically symmetric it follows that the kernels 
$l_s,m_s, l'_s, m'_s$ are symmetric. We proceed as before and write 
\be \label{a.31}
W(s)=L(s):=e^{\frac{s}{2}\;c(A,A)}\;e^{\frac{s}{2}\;c(A^\ast,A^\ast)}
R(s):=E_0(s)\;E_-(s)\;E(s)\; E_+(s),\;\;
\ee
Then the formula for $R' R^{-1}$ can be copied from (\ref{a.8}), (\ref{a.9}).
On the other hand 
\ba \label{a.32}
L' L^{-1} &=&
\frac{1}{2}[c(A,A)+e^{s c(A,A)/2}\; c(A^\ast,A^\ast) e^{-s c(A,A)}]
=
\frac{1}{2}[c(A,A)+c(A^\ast+sc\cdot A,A^\ast+s c\cdot A)]
\nonumber\\
&=&
\frac{1}{2}[(c+s^2 c^3)(A,A)+c(A^\ast,A^\ast)
+2s\;c^2(A^\ast,A) +s {\rm Tr}(c^2)]
\ea
This yields the ODE system
\ba \label{a.33}
s\; {\rm Tr}(c^2) &=& k' -{\rm Tr}(l\cdot e^{-m}\cdot l'\cdot e^{-m})
\nonumber\\
c &=& l'-2\; m'\cdot l
+ l\cdot\;e^{-m}\cdot l'\cdot e^{-m}\;\cdot l
\nonumber\\
s c^2 &=& m' - l\cdot\;e^{-m}\cdot l'\cdot e^{-m}
\nonumber\\
c+s^2 c^3 &=& (e^{-m}\cdot l'\cdot e^{-m})
\ea
We set $f(s)=c(1+s^2 c^2)$, solve from the third equation $m'=s c^2+l\;f$ and 
from the fourth equation $e^{-m} l' e^{-m}=f$ to find from the second
\be \label{a.34}
c=l'-2 (s c^2+l\cdot f) \cdot l+ l\cdot\;f\cdot l 
=l'-2 s c^2 \cdot l- l\cdot\;f\cdot l 
\ee
which is again a Riccati equation but with non-constant coefficients. 
The general solution of a Riccati equation can be obtained in the form
$l=l_0+l_1$ if one can find a special solution $l_0$ so that $l_1^{-1}$ 
solves then a linear inhomogeneous equation. To find the special solution
$l_0$ we set $l_0(s)=g(cs)$ so that (\ref{a.34}) becomes after factoring out 
$c$ and introducing the kernel variable $v=sc,\; (.)'=d/dv$
\be \label{a.35}
0=g'-2v g-g(1+v^2)g-1
=[g'-g^2]-[2vg+1+(vg)^2]=[g'-g^2]-[1+vg]^2
\ee
The second square bracket vanishes for $g=-v^{-1}$ and the first also 
vanishes in this case. Hence $l_0(s)=-(cs)^{-1}$. It follows for the 
general solution with $h(sc)=l_1(s)$
\ba \label{a.36}
0 &=& g'+h'-2v(g+h)-(g+h)(1+v^2)(g+h)-1
\nonumber\\
&=& \{g'-2v g-g(1+v^2)g-1\}+\{h'-2v h+2g(1+v^2)h-h(1+v^2)h\}
\nonumber\\
&=& -h^2\{(h^{-1})'+2[g(1+v^2)+v]h^{-1}+1+v^2\}
\nonumber\\
&=& -h^2\{(h^{-1})'-2 v^{-1} \;h^{-1}+1+v^2\}
\ea
The vanishing of the curly bracketgives a first order linear ODE for 
$h^{-1}$ with inhomogeneity.
We solve it by $h^{-1}=v^2\; j$ (variation of ``constant'' $j$) which gives 
$j'=-1-v^{-2}$ i.e. $j=i-v+v^{-1}$ where $i$ is an integration constant,
thus $h^{-1}=i\;v^2+v-v^3$. It follows
\be \label{a.37}
l=-v^{-1}+[i v^2+v-v^3]^{-1}=  
v^{-1}\;[i v^2+v-v^3]^{-1}(v-iv^2-v+v^3)=  
[i v+1-v^2]^{-1}(-i+v)
\ee
For $i\not=0$ we get $l(0)=-i$. Thus $l(0)=0$ requires $i=0$ and thus yields 
the unique solution
\be \label{a.38}
l(s)=v\;(1-v^2)^{-1}=s \; c\; [1-s^2 c^2]^{-1}
\ee
If follows using $f=c(1+v^2)$
\be \label{a.39}
m'=s c^2+l\;f=c[v+v(1+v^2)(1-v^2)^{-1}]=2cv(1-v^2)^{-1}
\ee
which is uniquely solved using $m(0)=0$ by 
\be \label{a.40}
m(s)=-\ln(1-s^2 c^2)
\ee
We must check whether (\ref{a.38}) and (\ref{a.39}) are consistent with the 
fourth equation in (\ref{a.33})    
\be \label{a.41}
c(1+v^2)=e^{-m}\cdot l'\cdot e^{-m}
=c\;(1-v^2)[\frac{d}{dv} v(1-v^2)^{-1}](1-v^2)
=c\;(1-v^2)[(1-v^2)^{-1}+2v^2(1-v^2)^{-2}](1-v^2)
=c(1+v^2)
\ee
indeed. Finally
\be \label{a.42}
k'
=s\; {\rm Tr}(c^2)+{\rm Tr}(l\cdot e^{-m}\cdot l'\cdot e^{-m})
=s\; {\rm Tr}(c^2)+{\rm Tr}(l\cdot c(1+v^2))
={\rm Tr}(cv[[1+(1-v^2)^{-1}(1+v^2))
=2{\rm Tr}(cv (1-v^2)^{-1}))
\ee
which is uniquely solved using $k(0)=0$ by 
\be \label{a.43}
k(s)=-{\rm Tr}(\ln(1-s^2 c^2))
\ee
\end{proof}

\end{appendix}

\end{document}